\documentclass[11pt,twoside]{amsart}
\usepackage[utf8]{inputenc}

\usepackage{geometry}
\usepackage[normalem]{ulem}
\usepackage{enumerate}
\usepackage{setspace}
\usepackage{amstext}
\usepackage{xcolor}
\usepackage{tikz}
\usepackage{multirow} 

\usepackage{url}
\usepackage[colorlinks,urlcolor=red,citecolor=blue,linkcolor=red]{hyperref}
\usepackage[round,comma]{natbib} 



\usepackage{amsmath}
\usepackage{amsthm}
\usepackage{amsfonts}
\usepackage{amssymb}
\usepackage{dsfont}
\usepackage{nicefrac}
\usepackage{booktabs}
\usepackage{mathrsfs}
\usepackage{bm}
\usepackage{mathtools}




\newcommand{\cB}{{\mathcal B}}
\newcommand{\cC}{{\mathcal C}}

\newcommand{\cF}{{\mathcal F}}

\newcommand{\crL}{{\mathscr L}}\newcommand{\cL}{{\mathcal L}}
\newcommand{\cM}{{\mathcal M}}

\newcommand{\cO}{{\mathcal O}}
\newcommand{\cP}{{\mathcal P}}
\newcommand{\cQ}{{\mathcal Q}}
\newcommand{\cR}{{\mathcal R}}

\newcommand{\cT}{{\mathcal T}}

\newcommand{\cV}{{\mathcal V}}


\newcommand{\bbF}{\mathbb{F}}

\newcommand{\bbE}{\mathbb{E}}
\newcommand{\bbP}{\mathbb{P}}

\newcommand{\R}{\mathbb R} 



\newcommand{\Ind}{{\mathds 1}}

\newcommand{\restr}{\mathbf{\kern0.3ex%
 \vert\kern-0.3ex}\backprime\kern0.3ex}
\DeclarePairedDelimiter{\abs}{\lvert}{\rvert}
\DeclarePairedDelimiter{\norm}{\lVert}{\rVert}

\newtheorem{theorem}{Theorem}[section]
\newtheorem{lemma}[theorem]{Lemma}              
\newtheorem{proposition}[theorem]{Proposition}  
\theoremstyle{definition}
\newtheorem{example}[theorem]{Example} 
\newtheorem{definition}[theorem]{Definition} 
\newtheorem{remark}[theorem]{Remark} 
\newtheorem{assumption}[theorem]{Assumption} 

\usepackage[backgroundcolor=white,bordercolor=orange]{todonotes}

\usepackage[colorlinks,urlcolor=red,citecolor=blue,linkcolor=red]{hyperref}



\usepackage{subfiles} 

\newcounter{stepcounter} 
\newcommand{\step}[1][]{%
    \refstepcounter{stepcounter}
    \textit{Step \thestepcounter
    \ifx\\#1\\
    \else: #1
    \fi.}
}

\title[Propagation of carbon price shocks through the value chain]{Propagation of carbon price shocks through the value chain: the mean-field game of defaults}


\author{Zorana Grbac}
\address{LPSM, Université Paris Cité}
\email{grbac@math.univ-paris-diderot.fr}

\author{Simone Pavarana}
\address{Albert-Ludwigs-Universitaet Freiburg, Abteilung für Mathematische Stochastik}
\email{simone.pavarana@stochastik.uni-freiburg.de}

\author{Thorsten Schmidt}
\address{Albert-Ludwigs-Universitaet Freiburg, Abteilung für Mathematische Stochastik}
\email{thorsten.schmidt@stochastik.uni-freiburg.de}

\author{Peter Tankov}
\address{CREST, ENSAE, Institut Polytechnique de Paris}
\email{peter.tankov@ensae.fr}

\begin{document}

\begin{abstract} 
We introduce a new mean-field game framework to analyze the impact of carbon pricing in a multi-sector economy with defaultable firms. Each sector produces a homogeneous good, with its price endogenously determined through market clearing. Firms act as price takers and maximize profits by choosing an optimal allocation of inputs—including labor, emissions, and intermediate goods from other sectors—while interacting through the endogenous sectoral price. Firms also choose their default timing to maximize shareholder value.

Formally, we model the economy as an optimal stopping mean-field game within each sector. The resulting system of coupled mean-field games admits a linear programming formulation that characterizes Nash equilibria in terms of population measure flows. We prove the existence of a linear programming Nash equilibrium and establish uniqueness of the associated price system.

Numerical illustrations are presented for firms with constant elasticity of substitution (CES) production functions. In a stylized single-sector economy, carbon price shocks induce substitution between emissions and labor. In a three-sector economy, the manufacturing sector faces consumer demand and requires inputs from a brown sector, which can be increasingly replaced by green-sector goods as carbon prices rise. These experiments reveal that carbon price shocks can generate substantial spillover effects along the value chain, underscoring the importance of sectoral interdependencies in shaping effective decarbonization pathways.
\end{abstract}

\keywords{multi-sector equilibrium, transition risk, carbon price, endogenous default, mean-field games}

\maketitle
\section{Introduction}\label{intro}

A large-scale shift toward low-carbon technologies is essential to limit global warming and mitigate the most severe impacts of climate change. As governments implement policies to reduce emissions and consumer sentiment drives demand shocks, green firms are expected to gain market share, while brown firms—those unable to decarbonize—may face revenue losses, asset stranding, and even default. These transition risks can propagate through the real economy and the financial system, and must therefore be closely monitored by banks and supervisory authorities \citep{basel2021climate,basel2022principles}.

The prevailing methodology for assessing transition risks relies on climate stress testing \citep{acharya2023climate}. A climate stress test evaluates the potential losses of financial institutions under specific transition scenarios using a combination of economic, financial, and physical models. The typical approach begins with an integrated assessment model that translates carbon price or demand shocks into a global macroeconomic trajectory and ends with a default model projecting firm-level distress \citep{ACPR2020pilotexercise}. While pragmatic, this multi-model setup suffers from important limitations: inconsistencies across models, a lack of two-way feedback between macro and micro levels, and limited treatment of uncertainty.

In this paper, we propose an alternative, integrated modeling framework to assess the impact of macroeconomic shocks—particularly carbon pricing—on transition risks and firm defaults. Our approach is based on the theory of mean-field games (MFGs), which provides the mathematical tractability needed to jointly capture endogenous default decisions under idiosyncratic shocks and realistic inter-sectoral input-output linkages within a unified model.

We develop a multi-sector equilibrium model of the productive economy aimed at studying how carbon price shocks propagate through the value chain. Each sector comprises a continuum of firms producing a homogeneous good. Production requires inputs from other sectors, labor, and carbon emissions, and is described by a sector-specific production function. Sectors face endogenous demand from other sectors and exogenous consumer demand, and prices are determined endogenously via market clearing. Firms choose an optimal allocation of inputs—including labor, emissions, and goods from other sectors—to maximize profits. While production technologies are identical within each sector, firms are subject to idiosyncratic labor costs and interact through sectoral prices both within and across sectors.

Firms also choose their optimal default timing to maximize shareholder value, resulting in exits at uncertain future dates. Since carbon emissions are essential to production, carbon price shocks raise marginal costs and may trigger firm default. A cascade of defaults in a carbon-sensitive sector can drive up the price of its output, thereby increasing costs in downstream sectors and amplifying default risk along the value chain.

We model the economy as a coupled system of optimal stopping mean-field games—one for each sector. Following \cite{bouveret2020mean} and \cite{Dumitrescu2021LP}, we solve this system using a linear programming formulation, which characterizes Nash equilibria in terms of population measure flows. We prove the existence of a linear programming Nash equilibrium and establish uniqueness of the associated price system. To this end, we introduce an auxiliary Minimax problem and show that any saddle point corresponds to a MFG Nash equilibrium. This formulation enables us to prove existence via the existence of saddle points. The Minimax problem admits a natural interpretation as a social planner problem, in which the planner controls the distribution of agents and selects the price system to maximize aggregate surplus.

Finally, we provide numerical illustrations using CES production functions and Cox – Ingersoll – Ross (CIR) processes for labor cost dynamics. We explore two scenarios: one with a single sector exposed to carbon price shocks, and another with three interconnected sectors (brown, green, and manufacturing) linked through a specified input–output network. Our simulations show that in the directly affected sector, capacity declines significantly in response to carbon pricing. Moreover, the shock propagates through the value chain, generating spillover effects. However, even large shocks rarely lead to abrupt waves of default, as firms optimally anticipate the rise in emission costs and, after an initial clearing, exit the market gradually.

These findings support the view that an early-announced and stable policy framework facilitates smoother asset value adjustments and mitigates systemic risk. 

\subsection{Structure of the paper}

The remainder of the paper is organized as follows. The rest of this introduction reviews the relevant literature. Section~\ref{sec:1} introduces the profit maximization problem of a single firm in a static setting. We prove the existence of profit-maximizing input configurations and provide explicit solutions in the case of CES production functions. Section~\ref{sec:2} extends the analysis to a multi-sector economy with a static price formation mechanism. We study the interaction between prices and input choices from a competitive equilibrium perspective and prove the existence of a competitive (Nash) equilibrium. Section~\ref{sec:3} develops a dynamic market model, where each sector is represented by a mean-field firm optimizing both production and default timing. Using the linear programming approach introduced in \citet{bouveret2020mean} and \citet{Dumitrescu2021LP}, we reformulate the optimal stopping problem as a linear program. We then establish the existence of a mean-field game Nash equilibrium and prove the uniqueness of the associated price system. Section~\ref{sec:4} presents numerical illustrations, and Section~\ref{sec:conclusion} concludes.

\subsection{Related literature}

We briefly review the literature relevant to our work.
Comprehensive overviews of the financial implications of climate change are provided by \cite{campiglio2018climate} and \cite{krueger2020importance}. The former emphasizes the role of central banks and financial regulators in facilitating the low-carbon transition, particularly through carbon pricing. The latter focuses on institutional investors and the influence of climate risks — especially regulatory risks — on portfolio decisions.

This focus on risk has spurred stress-testing methodologies to assess financial vulnerabilities from regulatory shocks. \cite{battiston2017climate} propose a network-based approach to capture financial interdependencies, while \cite{bouchet2020credit} use a bottom-up framework to evaluate corporate credit risk under carbon pricing scenarios. Several studies build on the Merton model \citep{Merton1974PricingCorporateDebt}, including \cite{Capasso2020climate} and \cite{Reinders2023finance}, who empirically show that firms with higher carbon footprints face higher perceived default risk, reinforcing the need to incorporate carbon exposure into credit risk assessments.

Further structural models include \cite{Wang2024carbon} and \cite{loschenbrand2024credit}. The former uses China’s accession to the Paris Agreement as an exogenous shock, finding that rising carbon risk can reduce defaults among high-emission firms — a possible regulatory gain. The latter combines financial and emissions data for over 2.5 million borrowers across major European banks. Their stress tests suggest that while high carbon prices pressure certain firms, systemic default risk in the Euro area remains limited.

Building on an earlier work by \cite{agliardi2021pricing}, \cite{le2025corporate} introduce an endogenous default model with partial information to price defaultable bonds under transition risk, whose severity  depends on a latent scenario gradually revealed through the dynamics of the carbon price.

A separate strand of literature studies the diffusion of carbon pricing across sectors using Input-Output (I-O) and equilibrium models. I-O analysis, rooted in Leontief’s work, models intersectoral dependencies \citep{miller09}. \cite{adenot2022cascading} apply this framework to the global economy using WIOD data under different carbon price scenarios. Related studies on Chile include \cite{mardones2018environmental} and \cite{mardones2020economic}.

Despite their tractability, I-O models assume fixed input proportions and do not model price responses. Equilibrium models address these limitations by allowing input substitution and price formation, typically under perfect competition.

Partial and general equilibrium models have been widely used to analyze carbon pricing. \cite{neuhoff2019carbon} adopt a partial equilibrium approach to study carbon pass-through within an industry. \cite{Frankovic2024carbon} uses a multisector, multiregional computable general equilibrium (CGE) model to evaluate spillover effects of global carbon pricing on Germany and Europe (see also \cite{wei2023climate}).

To incorporate time dynamics and uncertainty, these models evolve into dynamic stochastic general equilibrium (DSGE) frameworks, enabling richer firm behavior such as entry and exit. \cite{Miao2005OptimalCapital} develops a single-sector DSGE model with firm-level technology shocks, anticipating ideas later formalized in mean-field games (MFGs) by \cite{lasry2007mean}.

In the context of transition risk, \cite{Bouveret2023} build a DSGE model incorporating GHG emission costs in both production and consumption, linking carbon costs to credit portfolios. Similarly, \cite{MATSUMURA2024} model Japan’s economy using a DSGE framework with I-O structure and investment networks.

DSGE models also offer normative insights. \cite{Golosov2014Optimal} derive a closed-form optimal carbon tax by solving a social planner’s problem with climate externalities, later extended by \cite{rezai2021optimal} to include global warming’s impact on utility and productivity growth.

While powerful, DSGE models become intractable with many heterogeneous agents. Mean-field game theory, introduced by \cite{lasry2007mean}, overcomes this curse of dimensionality by shifting the focus from individual agents to the ``mean field", which represents the aggregate behaviour of the population. MFGs arise as the asymptotic limit of stochastic differential games with many symmetric agents. Each agent controls a state variable influenced by idiosyncratic noise and the population distribution. In the limit, this leads to a coupled system of PDEs: a Hamilton-Jacobi-Bellman (HJB) equation for the value function and a Fokker-Planck (FP) equation for the distribution dynamics. For a comprehensive discussion, see \cite{Carmona1} and \cite{Carmona2}.

While early MFG studies focused on solving this PDE system, an alternative formulation by \cite{carmona2015forward} characterizes equilibria using forward-backward stochastic differential equations (FBSDEs).

Extensions to MFGs include common noise, major players, singular controls, and optimal stopping. The latter is particularly relevant for entry-exit problems, though solving the HJB-FP system becomes difficult due to the free boundary. To overcome this, \cite{bouveret2020mean}, \cite{Dumitrescu2021LP}, and \cite{dumitrescu2023linear} propose a novel reformulation of the optimal stopping problem as a linear program, where optimization is performed over population measure flows subject to a linear constraint imposed by the FP equation.

Mean-field games have found wide application in economics. For instance, \cite{Achdou2022} extend the Aiyagari–Bewley–Huggett model of income and wealth distribution to continuous time, using mean-field game theory to handle agent heterogeneity. In a related context, \cite{Achdou2023} model an economy composed of competing firms subject to idiosyncratic capital dynamics, focusing on stationary equilibria for the consumption problem under capital constraints. Further relevant applications of MFG theory in economics can be found in \cite{Alvarez2023} and \cite{Liang2024}.

Applications of MFG modeling to climate and energy topics—including electricity markets and green transitions—have gained increasing attention. Notable contributions include \cite{Gueant2010, Escribe2024, aid2021entry, Bassiere2024elect}. The latter two, in particular, address entry–exit problems in electricity markets involving conventional and renewable producers, employing a linear programming approach to characterize the MFG Nash equilibria. Their existence and uniqueness results rely on fixed-point arguments. In contrast, the present work also adopts the linear programming formulation but establishes the existence of equilibria via a Minimax approach.

\section{Firm-level profit maximization with carbon pricing}\label{sec:1}

In this section, we formulate and solve the profit maximization problem for a single firm operating in a fixed sector $i\in \{1,\dots,N\}$ in interaction with the other sectors. The production process is modeled by a function that combines inputs from all sectors, carbon emissions considered as input in the production process, and a local input such as capital or labor. The firm operates in a competitive market, optimizing its input decisions to maximize profits while treating prices as given. Under general assumptions on the production function and cost structure, we establish the existence of an optimal input choice. In the special case of a constant elasticity of substitution (CES) production function, we derive closed-form solutions for the firm's optimal input allocation and output level. Proofs are presented in Appendix \ref{app}. 

\subsection{Production functions}

Let \( q_{ij} \) denote the quantity of good \( j \) used by the firm in its production process. We define \( \mathbf{q}_i \coloneqq (q_{ij})_{j=1,\dots,N} \) as the vector of input quantities sourced by the firm from all sectors, including its own sector \( i \).

In addition to sectoral inputs, we introduce two specific production inputs: \( E_i \), referred to as \emph{emissions}, representing greenhouse gas (GHG) emissions resulting from the production process; and \( L_i \), referred to as \emph{labor}, encompassing both workforce and productive capital (e.g., machinery) employed in production.

By treating emissions as an input — or ``quasi-input" — to production, we follow the approach adopted in the environmental economics literature, notably in \cite{ebert2007environmental} and \cite{considine2006environment}. This formulation allows firms to substitute carbon-intensive technologies with labor - or capital - intensive alternatives, or with technologies relying on inputs from other sectors, in order to mitigate emission-related costs.

The production function \( F_i : \R^{N+2}_{\geq 0} \to \R_{\geq 0} \) maps the input quantities to the gross total output of the firm, which is given by
\begin{equation}\label{eq:output}
    F_i(\mathbf{q}_i, E_i, L_i), \quad i = 1, \dots, N.
\end{equation}
To simplify notation, we denote the full vector of inputs as \( \mathbf{x} \coloneqq (\mathbf{q}, E, L) \in \R^{N+2}_{\geq 0} \). We assume that the production function satisfies the following standard conditions.

\begin{assumption}[Production function]\label{ass:prod_function}
The production function $F_i$ is increasing in each argument, upper semi-continuous, and concave. Additionally, $F_i$ satisfies one of the following alternative conditions:
\begin{itemize}
\item[(a)] $F_i(\mathbf x) = o(\norm{\mathbf x})$ as $\norm{\mathbf x}\to \infty$. 
\item[(b)] $F_i$ is homogeneous of degree one, meaning that 
\begin{align*}
       F_i(\lambda\mathbf{x})=\lambda F_i(\mathbf{x}),
    \end{align*}
    for any $\lambda>0$ and $\mathbf x\in \mathbb R^{N+2}_{\geq 0}$, and additionally, 
    \[
    F_i(\mathbf{q},E,L)  \leq c_i L \quad \text{for some constant $c_i<\infty$}.
    \]
\end{itemize}
\end{assumption}

Condition (a) implies that the production function exhibits sublinear growth in total input usage, capturing inefficiencies that arise at high input levels and leading to diminishing marginal gains. This extends the classical notion of \emph{decreasing returns to scale}, typically characterized by homogeneity of degree less than one.

Condition (b), in contrast, corresponds to \emph{constant returns to scale}, where output scales proportionally with all inputs. The additional upper bound relative to labor ensures that labor remains an essential input, preventing it from being completely substituted by other production factors.

For a general discussion of production functions and the economic interpretation of their properties, we refer to standard microeconomic textbooks such as \cite{Varian92} and \cite{Mas-colell95}.

\begin{example}[CES Functions]
The CES (Constant Elasticity of Substitution) production functions form a broad and widely used class of production technologies. In our framework, the CES production function takes the form
\begin{equation}\label{eq:CES_emission}
F_i(\mathbf{q}_i,E_i,L_i) = A_i \left( \sum_{j=1}^N \alpha_{ij} q_{ij}^{-\rho_i} + \alpha_{iE} E_i^{-\rho_i} + \alpha_{iL} L_i^{-\rho_i} \right)^{-\frac{k_i}{\rho_i}}, \quad \rho_i > 0,\; k_i \in (0,1],
\end{equation}
where \( A_i \geq 0 \) denotes the total factor productivity, and the share parameters \( \alpha_{ij} \), \( \alpha_{iE} \), and \( \alpha_{iL} \) satisfy
\[
\sum_{j=1}^N \alpha_{ij} + \alpha_{iE} + \alpha_{iL} = 1.
\]
The substitution parameter \( \rho_i \) determines the elasticity of substitution via
\[
ES_i = \frac{1}{1 + \rho_i}.
\]
The parameter \( k_i \) governs the degree of homogeneity: for \( k_i = 1 \), the function is linearly homogeneous; for \( k_i < 1 \), it exhibits sublinear growth.

This class satisfies Assumption~\ref{ass:prod_function}: condition (a) holds for \( k_i < 1 \), and condition (b) for \( k_i = 1 \), with \( c_i = \alpha_{iL} \) ensuring a linear bound in labor. See \cite{Uzawa62, Arrow61} for further details.
\end{example}

\subsection{Firm's profit maximization problem}

We assume that the firm operates in a competitive market and behaves as a price taker. 

The total cost function $C_i:\R^{2N+3}_{\geq 0} \to \R_{\geq 0}$ is given by:
\begin{equation}\label{eq:cost}
C_i(\mathbf{P},P_E,\mathbf{q}_i,E_i,L_i) = \sum^N_{j=1} P_j q_{ij} + P_E E_i + W_i(L_i),
\end{equation}
where $\mathbf{P} = (P_1, \dots, P_N) \in \R^N_{\geq 0}$ denotes the vector of sectoral input prices, $P_E \in \R_{>0}$ is the carbon price, and $W_i:\R_{\geq 0} \to \R_{>0}$ is the labor cost function, assumed to satisfy the following conditions.

\begin{assumption}[Labor cost function]\label{ass:wage_function}
The labor cost function $W_i$ is strictly increasing, lower semi-continuous, and convex.
\end{assumption}

If the production function $F_i$ satisfies Assumption \ref{ass:prod_function}~(b), we shall impose an additional growth condition on $W_i$ to ensure well-posedness of the maximization problem.

\begin{assumption}[Asymptotic growth of labor cost function]\label{ass:wage_function2}
The labor cost function $W_i$ satisfies
\[
\lim_{L \to \infty} \frac{W_i(L)}{L} = +\infty.
\]
\end{assumption}

\begin{example}[Power labor cost functions]\label{ex:power-labor}
A class of labor cost functions satisfying both Assumption \ref{ass:wage_function} and Assumption \ref{ass:wage_function2} is given by:
\[
W_i(L) = \frac{L^{\eta_i}}{\eta_i},
\]
for some exponent $\eta_i > 1$. This specification implies that marginal labor costs increase with the scale of input, capturing increasing difficulty or expense in scaling up labor beyond a certain level — due, for example, to overtime pay, training needs, or capital constraints.
\end{example}

Given a price vector $\mathbf{P} \in \R^N_{\geq 0}$ and a carbon price $P_E \in \R_{>0}$, the firm's profit function is defined as total revenue minus total cost:
\begin{equation}\label{eq:profit}
\Pi_i(\mathbf{P},P_E,\mathbf{q}_i,E_i,L_i) \coloneqq P_i F_i(\mathbf{q}_i,E_i,L_i) - C_i(\mathbf{P},P_E,\mathbf{q}_i,E_i,L_i).
\end{equation}

The firm's profit maximization problem then consists of choosing an optimal input vector $(\mathbf{q}_i^*, E_i^*, L_i^*)$ that maximizes profits:
\begin{equation}\label{prob:profit_max}
\overline{\Pi}_i(\mathbf{P},P_E) \coloneqq \max_{\mathbf{q}_i, E_i, L_i} \Pi_i(\mathbf{P},P_E,\mathbf{q}_i,E_i,L_i).
\end{equation}

We now establish the existence and uniqueness of an optimal solution to the profit maximization problem, adapting classical results from microeconomic theory to our setting. 

\begin{proposition}\label{prop:profit_max}
Suppose the production and labor cost functions satisfy either Assumptions \ref{ass:prod_function} (a) and \ref{ass:wage_function}, or Assumptions \ref{ass:prod_function} (b), \ref{ass:wage_function} and \ref{ass:wage_function2}. Then, Problem \eqref{prob:profit_max} admits a maximiser $\mathbf{x}^*_i=(\mathbf{q}^*_i,E^*_i,L^*_i)\in\R^{N+2}_{\geq 0}$, for any price vector $\mathbf{P}\in\R^N_{>0}$ and any carbon price $P_E>0$. Moreover, if the production function $F_i$ is strictly concave and the labor cost function $W_i$ is strictly convex, the solution is unique.  
\end{proposition}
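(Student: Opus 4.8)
The plan is to obtain the maximizer via the generalized Weierstrass theorem: an upper semi-continuous function attains its maximum on a nonempty compact set. Since the feasible region $\R^{N+2}_{\geq 0}$ is closed but unbounded, the two things I need to verify are (i) upper semi-continuity of the objective $\Pi_i(\mathbf P,P_E,\cdot)$ and (ii) coercivity, namely $\Pi_i(\mathbf x)\to -\infty$ as $\norm{\mathbf x}\to\infty$, which makes the relevant superlevel sets compact. For upper semi-continuity I write $\Pi_i(\mathbf x)=P_iF_i(\mathbf x)-\sum_j P_j q_{ij}-P_E E_i-W_i(L_i)$ and note that $P_iF_i$ is u.s.c. (as $F_i$ is u.s.c. by Assumption \ref{ass:prod_function} and $P_i\geq 0$), the linear cost terms in \eqref{eq:cost} are continuous, and $-W_i$ is u.s.c. since $W_i$ is lower semi-continuous by Assumption \ref{ass:wage_function}; hence $\Pi_i$ is u.s.c. on $\R^{N+2}_{\geq 0}$.

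The coercivity step is the crux and splits into the two regimes of Assumption \ref{ass:prod_function}. A preliminary observation, used in both, is that a convex strictly increasing $W_i$ (Assumption \ref{ass:wage_function}) grows at least linearly: a supporting-line argument gives $s>0$ and a constant $b$ with $W_i(L)\geq sL-b$. Under Assumption \ref{ass:prod_function}(a), using $\mathbf P\in\R^N_{>0}$ and $P_E>0$, the total cost is bounded below by $c\norm{\mathbf x}-C$ for some $c>0$, whereas the revenue satisfies $P_iF_i(\mathbf x)=o(\norm{\mathbf x})$; subtracting yields $\Pi_i(\mathbf x)\to -\infty$. Under Assumption \ref{ass:prod_function}(b) the production function grows linearly, so instead I exploit the essential-input bound $F_i\leq c_i L$, which caps revenue by $P_i c_i L$. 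Setting $g(L)\coloneqq P_i c_i L-W_i(L)$, Assumption \ref{ass:wage_function2} forces $g(L)\to-\infty$, so $g$ is bounded above, while $\mathbf q_i$ and $E_i$ are still penalized linearly with strictly positive prices. A short case distinction—according to whether $L$ or $\sum_j q_{ij}+E_i$ drives $\norm{\mathbf x}\to\infty$—then gives $\Pi_i(\mathbf x)\to -\infty$.

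With these two ingredients, existence follows quickly: fixing the feasible point $\mathbf 0$ (where $\Pi_i(\mathbf 0)=P_iF_i(\mathbf 0)-W_i(0)$ is finite), the superlevel set $\{\mathbf x\in\R^{N+2}_{\geq 0}:\Pi_i(\mathbf x)\geq \Pi_i(\mathbf 0)\}$ is nonempty, closed by the u.s.c. step, and bounded by the coercivity step, hence compact; restricting the maximization to it leaves the supremum unchanged, and an u.s.c. function attains its maximum there, producing the maximizer $\mathbf x_i^*$. For uniqueness, when $F_i$ is strictly concave and $W_i$ strictly convex, $P_iF_i$ is strictly concave (since $P_i>0$), $-W_i$ is strictly concave, and the linear cost terms are concave, so the sum $\Pi_i$ is strictly concave on the convex set $\R^{N+2}_{\geq 0}$ and thus has at most one maximizer.

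I expect the main obstacle to be the coercivity argument under Assumption \ref{ass:prod_function}(b): there the linearly homogeneous production function can a priori balance the linear input costs, so the mere structure of \eqref{eq:profit} is not enough. It is precisely the combination of the labor bound $F_i\leq c_i L$ with the superlinear labor cost of Assumption \ref{ass:wage_function2} that rules out profitable unbounded scaling, and getting the case distinction on which coordinate blows up to interact correctly with the bound on $g$ is the delicate point of the whole proof.
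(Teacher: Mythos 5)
Your proof is correct and follows essentially the same route as the paper's: upper semi-continuity of $\Pi_i$ plus coercivity (regime (a) via sublinear revenue against at-least-linear cost growth, regime (b) via the bound $F_i \leq c_i L_i$ combined with Assumption \ref{ass:wage_function2}), then the extreme value theorem for existence and strict concavity of $\Pi_i$ for uniqueness. The only differences are expository—your explicit superlevel-set compactness argument and the case distinction in regime (b) merely spell out steps the paper leaves terse.
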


If the production function \( F_i \) satisfies Assumption~\ref{ass:prod_function}~(b), its homogeneity property allows the maximized profit to be expressed in terms of the convex conjugate of the labor cost function, denoted as \( W^*_i : \R \to \R \) and defined by
\begin{align}\label{convex conjugate W}
W^*_i(y) = \max_{L \geq 0} \left\{ L \cdot y - W_i(L) \right\}, \quad \text{for any } y \in \R.
\end{align}
Under Assumptions~\ref{ass:wage_function} and~\ref{ass:wage_function2}, the maximum in~\eqref{convex conjugate W} is attained, so that \( W^*_i \) is a proper convex function satisfying \( W^*_i(y) = 0 \) for all \( y \leq 0 \).

\begin{proposition}\label{prop:CRS}
Suppose the production and labor cost functions satisfy Assumptions \ref{ass:prod_function} (b), \ref{ass:wage_function}, and \ref{ass:wage_function2}. Then, the maximized profit from Problem~\eqref{prob:profit_max} satisfies
\[
\overline{\Pi}_i(\mathbf{P},P_E) = W^*_i\big(\widetilde{\Pi}_i(\mathbf{P},P_E)\big),
\]
where the function \(\widetilde{\Pi}_i\) is defined as the value of the following auxiliary maximization problem:
\begin{align}\label{pitilde}
\widetilde{\Pi}_i(\mathbf{P},P_E) = \max_{\tilde{\mathbf{q}}_i, \tilde{E}_i}\Bigg\{P_i F_i(\tilde{\mathbf{q}}_i, \tilde{E}_i,1) - \sum_{j=1}^N P_j \tilde{q}_{ij} - P_E \tilde{E}_i \Bigg\}.  
\end{align}  

Moreover, if \(W_i\) is strictly convex and differentiable, the optimal input vector \((\mathbf{q}^*_i, E^*_i, L^*_i)\) is given by
\[
L^*_i = (W'_i)^{-1}\big(\widetilde{\Pi}_i(\mathbf{P},P_E)\big),\quad q^*_{ij} = \tilde{q}^*_{ij} L^*_i\quad\text{for } j=1,\dots,N,\quad E^*_i = \tilde{E}^*_i L^*_i.
\]
\end{proposition}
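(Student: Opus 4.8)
The plan is to exploit the degree-one homogeneity of $F_i$ in Assumption~\ref{ass:prod_function}~(b) to factor the labor input $L_i$ out of the entire optimization problem. First I would write, for any fixed $L_i>0$, the scaling $\mathbf{q}_i = \tilde{\mathbf{q}}_i L_i$ and $E_i = \tilde{E}_i L_i$, so that by homogeneity $F_i(\mathbf{q}_i, E_i, L_i) = F_i(\tilde{\mathbf{q}}_i L_i, \tilde{E}_i L_i, L_i) = L_i\, F_i(\tilde{\mathbf{q}}_i, \tilde{E}_i, 1)$ and the sectoral and emission cost terms become $\sum_j P_j \tilde{q}_{ij} L_i + P_E \tilde{E}_i L_i$. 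The profit in~\eqref{eq:profit} then reads
\[
\Pi_i = L_i\Big(P_i F_i(\tilde{\mathbf{q}}_i, \tilde{E}_i, 1) - \sum_{j=1}^N P_j \tilde{q}_{ij} - P_E \tilde{E}_i\Big) - W_i(L_i).
\]
Maximizing over the normalized variables $(\tilde{\mathbf{q}}_i, \tilde{E}_i)$ for fixed $L_i$ gives exactly $L_i\, \widetilde{\Pi}_i(\mathbf{P},P_E)$, since $L_i>0$ multiplies the bracket and the inner maximization is precisely~\eqref{pitilde}. The $L_i=0$ case contributes profit $-W_i(0)\le 0$ and is handled separately, noting $W^*_i\ge 0$.

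Next I would reduce to the one-dimensional problem $\overline{\Pi}_i(\mathbf{P},P_E) = \max_{L_i\ge 0}\{L_i\,\widetilde{\Pi}_i(\mathbf{P},P_E) - W_i(L_i)\}$. By the definition~\eqref{convex conjugate W} of the convex conjugate with $y=\widetilde{\Pi}_i(\mathbf{P},P_E)$, this maximum is exactly $W^*_i(\widetilde{\Pi}_i(\mathbf{P},P_E))$, which establishes the main identity. Assumption~\ref{ass:wage_function2} guarantees that the supremum defining $W^*_i$ is attained (the superlinear growth of $W_i$ dominates the linear term $L_i y$), and Assumption~\ref{ass:wage_function} with $W_i>0$ ensures $W^*_i(y)=0$ for $y\le 0$, so the reduction is legitimate even when $\widetilde{\Pi}_i$ is nonpositive. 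I should also confirm that the inner problem~\eqref{pitilde} is well posed, which follows from Proposition~\ref{prop:profit_max} applied at $L_i=1$, or directly from concavity and monotonicity of $F_i$.

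For the optimal input formulas, I would argue that when $W_i$ is strictly convex and differentiable the outer maximization has first-order condition $\widetilde{\Pi}_i(\mathbf{P},P_E) = W'_i(L^*_i)$, and since $W'_i$ is strictly increasing hence invertible on its range, $L^*_i = (W'_i)^{-1}(\widetilde{\Pi}_i(\mathbf{P},P_E))$. The optimal normalized inputs $(\tilde{\mathbf{q}}^*_i, \tilde{E}^*_i)$ attaining~\eqref{pitilde} then rescale to $q^*_{ij} = \tilde{q}^*_{ij} L^*_i$ and $E^*_i = \tilde{E}^*_i L^*_i$ by the substitution above.

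The main obstacle I anticipate is justifying the interchange of the maximizations, i.e. that separating the optimization into an inner maximization over $(\tilde{\mathbf{q}}_i,\tilde{E}_i)$ and an outer one over $L_i$ reproduces the joint maximum, and handling the boundary $L_i=0$ cleanly. The subtlety is that the change of variables $(\mathbf{q}_i,E_i)\mapsto(\tilde{\mathbf{q}}_i,\tilde{E}_i)=(\mathbf{q}_i/L_i, E_i/L_i)$ is only valid for $L_i>0$; I must verify that restricting to $L_i>0$ does not lose the optimum and that the degenerate case does not yield a larger value, which is exactly where the essentiality bound $F_i(\mathbf{q},E,L)\le c_i L$ from Assumption~\ref{ass:prod_function}~(b) and the nonnegativity of $W^*_i$ come into play.
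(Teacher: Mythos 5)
Your proposal is correct and follows essentially the same route as the paper's proof: factor out $L_i$ via degree-one homogeneity with the substitution $(\tilde{\mathbf{q}}_i,\tilde{E}_i)=(\mathbf{q}_i/L_i,E_i/L_i)$, recognize the resulting outer problem $\max_{L_i\ge 0}\{L_i\,\widetilde{\Pi}_i(\mathbf{P},P_E)-W_i(L_i)\}$ as the convex conjugate $W^*_i$ evaluated at $\widetilde{\Pi}_i$, and then obtain $L^*_i$ from the first-order condition $W'_i(L^*_i)=\widetilde{\Pi}_i(\mathbf{P},P_E)$ together with rescaling of the normalized optimizers. Your explicit treatment of the boundary case $L_i=0$ and of the attainment of the inner maximum is additional care that the paper leaves implicit, but it does not change the argument.
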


The function \(\widetilde{\Pi}_i(\mathbf{P}, P_E)\) admits a natural economic interpretation: it represents the maximum profit per unit of labor, before deducting labor costs. For CES production technologies, this quantity can be computed explicitly.

\begin{proposition}[Explicit expression under CES technology]\label{ces.ex}
Assume the production function takes the CES form:
\[
F_i(\mathbf{q}_i, E_i, L_i) = A_i \left( \sum_{j=1}^N \alpha_{ij} q_{ij}^{-\rho_i} + \alpha_{iE} E_i^{-\rho_i} + \alpha_{iL} L_i^{-\rho_i} \right)^{-\frac{1}{\rho_i}},
\quad \text{with } \rho_i > 0.
\]
Then the optimal normalized input vector \((\tilde{\mathbf{q}}^*_i, \tilde{E}^*_i)\) is given as follows:
\begin{itemize}
    \item If
    \[
    \sum_{j=1}^N \left( \frac{P_j \alpha_{ij}^{1/\rho_i}}{A_i P_i} \right)^{\frac{\rho_i}{\rho_i+1}} 
    + \left( \frac{P_E \alpha_{iE}^{1/\rho_i}}{A_i P_i} \right)^{\frac{\rho_i}{\rho_i+1}} \leq 1,
    \]
    then the optimal production per unit of labor, \(\tilde{Q}^*_i \coloneqq F_i(\tilde{\mathbf{q}}^*_i, \tilde{E}^*_i, 1)\), is given by
    \[
    \tilde{Q}^*_i = A_i \alpha_{iL}^{-1/\rho_i} \left( 
    1 - \sum_{j=1}^N \left( \frac{P_j \alpha_{ij}^{1/\rho_i}}{A_i P_i} \right)^{\frac{\rho_i}{\rho_i+1}} 
    - \left( \frac{P_E \alpha_{iE}^{1/\rho_i}}{A_i P_i} \right)^{\frac{\rho_i}{\rho_i+1}} 
    \right)^{\frac{1}{\rho_i}}.
    \]
    The corresponding optimal inputs are
    \[
    \tilde{q}^*_{ij} = \tilde{Q}^*_i \left( \frac{P_j A_i^{\rho_i}}{P_i \alpha_{ij}} \right)^{-\frac{1}{\rho_i+1}}, 
    \qquad 
    \tilde{E}^*_i = \tilde{Q}^*_i \left( \frac{P_E A_i^{\rho_i}}{P_i \alpha_{iE}} \right)^{-\frac{1}{\rho_i+1}}.
    \]
    In this case, the maximum profit per unit of labor is
    \[
    \widetilde{\Pi}_i(\mathbf{P}, P_E) 
    = P_i A_i \alpha_{iL}^{-1/\rho_i} \left( 
    1 - \sum_{j=1}^N \left( \frac{P_j \alpha_{ij}^{1/\rho_i}}{A_i P_i} \right)^{\frac{\rho_i}{\rho_i+1}} 
    - \left( \frac{P_E \alpha_{iE}^{1/\rho_i}}{A_i P_i} \right)^{\frac{\rho_i}{\rho_i+1}} 
    \right)^{\frac{\rho_i+1}{\rho_i}}.
    \]
    
\item Otherwise, the optimum is attained at \(\tilde{q}^*_{ij} = 0\) for all \(j = 1, \dots, N\), \(\tilde{E}^*_i = 0\), and the resulting optimal production is \(\tilde{Q}^*_i = 0\).
\end{itemize}
\end{proposition}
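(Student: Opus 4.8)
The plan is to solve the auxiliary problem \eqref{pitilde} directly, since by Proposition~\ref{prop:CRS} the full maximized profit reduces to it. Write the objective as
\[
G(\tilde{\mathbf{q}}_i,\tilde{E}_i) \coloneqq P_i F_i(\tilde{\mathbf{q}}_i,\tilde{E}_i,1) - \sum_{j=1}^N P_j \tilde{q}_{ij} - P_E \tilde{E}_i ,
\]
which is concave, being the sum of the concave map $P_i F_i(\cdot,\cdot,1)$ and a linear function. Since with $L=1$ one has $S \coloneqq \sum_j \alpha_{ij}\tilde{q}_{ij}^{-\rho_i}+\alpha_{iE}\tilde{E}_i^{-\rho_i}+\alpha_{iL}\geq \alpha_{iL}$, the revenue is bounded by $P_i A_i \alpha_{iL}^{-1/\rho_i}$, so $G$ is coercive (it tends to $-\infty$ as $\norm{(\tilde{\mathbf{q}}_i,\tilde{E}_i)}\to\infty$) and attains its maximum on the closed positive orthant. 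The strategy is then standard: locate an interior stationary point, identify it as the global maximiser by concavity when it exists, and otherwise conclude that the maximum sits on the boundary.

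For the interior analysis I would first record the partial derivatives of the CES aggregator. From $F_i = A_i S^{-1/\rho_i}$ a short computation gives $\partial F_i/\partial \tilde{q}_{ij} = A_i^{-\rho_i} F_i^{1+\rho_i}\alpha_{ij}\tilde{q}_{ij}^{-\rho_i-1}$, and similarly for $\tilde{E}_i$. The first-order conditions $P_i\,\partial F_i/\partial \tilde{q}_{ij}=P_j$ and $P_i\,\partial F_i/\partial \tilde{E}_i=P_E$ are then solved for the inputs in terms of $\tilde{Q}^*_i \coloneqq F_i(\tilde{\mathbf{q}}^*_i,\tilde{E}^*_i,1)$, producing exactly the stated factor-demand relations $\tilde{q}^*_{ij}=\tilde{Q}^*_i(P_j A_i^{\rho_i}/P_i\alpha_{ij})^{-1/(\rho_i+1)}$ and $\tilde{E}^*_i=\tilde{Q}^*_i(P_E A_i^{\rho_i}/P_i\alpha_{iE})^{-1/(\rho_i+1)}$.

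The next step pins down the scalar $\tilde{Q}^*_i$. Substituting the factor demands back into $S$ and using $S=A_i^{\rho_i}(\tilde{Q}^*_i)^{-\rho_i}$ collapses, after simplification of the exponents, to a single equation $A_i^{\rho_i}(\tilde{Q}^*_i)^{-\rho_i}(1-T)=\alpha_{iL}$, where
\[
T \coloneqq \sum_{j=1}^N \left(\frac{P_j\,\alpha_{ij}^{1/\rho_i}}{A_i P_i}\right)^{\frac{\rho_i}{\rho_i+1}} + \left(\frac{P_E\,\alpha_{iE}^{1/\rho_i}}{A_i P_i}\right)^{\frac{\rho_i}{\rho_i+1}}
\]
is precisely the sum in the proposition's case distinction. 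Solving yields $\tilde{Q}^*_i = A_i\alpha_{iL}^{-1/\rho_i}(1-T)^{1/\rho_i}$, which is real and nonnegative exactly when $T\leq 1$. For the profit itself, rather than substituting the demands into $G$ term by term, I would invoke Euler's theorem for the degree-one homogeneous $F_i$: at the optimum the cost terms combine so that $\widetilde{\Pi}_i = P_i\,\partial F_i/\partial L\big|_{L=1}$; evaluating this derivative and inserting $\tilde{Q}^*_i$ gives the closed form $\widetilde{\Pi}_i = P_i A_i\alpha_{iL}^{-1/\rho_i}(1-T)^{(\rho_i+1)/\rho_i}$.

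It remains to treat the regime $T>1$, where the interior stationary point ceases to exist, and this is where I expect the only genuine subtlety, since $F_i$ fails to be differentiable at the origin and a naive first-order argument does not apply. Here I would argue by coercivity: the maximum exists and, having no interior stationary point, must lie on the boundary. Because $\rho_i>0$, sending any input carrying positive weight to zero makes $S$ diverge and hence $F_i\to 0$, so every boundary point with a vanishing positive-weight input yields $G = -\sum_j P_j\tilde{q}_{ij}-P_E\tilde{E}_i\leq 0 = G(0)$. Thus the origin dominates the entire boundary, giving $\tilde{\mathbf{q}}^*_i=0$, $\tilde{E}^*_i=0$, $\tilde{Q}^*_i=0$, in agreement with the second case and joining continuously to the first at $T=1$. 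The principal care throughout is the exponent bookkeeping in passing from the factor demands to the closed forms for $\tilde{Q}^*_i$ and $\widetilde{\Pi}_i$, together with this boundary argument justifying the corner solution.
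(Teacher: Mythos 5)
Your proposal is correct, and its computational core (first-order conditions for the CES aggregator, factor demands expressed through $\tilde{Q}^*_i$, and back-substitution into $S = A_i^{\rho_i}(\tilde{Q}^*_i)^{-\rho_i}$ to pin down $\tilde{Q}^*_i$ and the threshold $T\leq 1$) coincides with the paper's proof. You diverge in three supporting steps, each of which is a genuine alternative. First, for the profit formula the paper substitutes the factor demands into the objective and simplifies term by term, whereas you invoke Euler's theorem for the degree-one homogeneous $F_i$ to get $\widetilde{\Pi}_i = P_i\,\partial F_i/\partial L\big|_{L=1} = P_i A_i^{-\rho_i}\alpha_{iL}(\tilde{Q}^*_i)^{\rho_i+1}$; this is a cleaner shortcut that sidesteps the exponent bookkeeping. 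Second, for the corner regime the paper studies each first-order condition separately, using monotonicity and the limits of the marginal revenue product (finite limit $P_iA_i\alpha_{ij}^{-1/\rho_i}$ as $\tilde{q}_{ij}\to 0$, zero at infinity) to decide when an interior root exists input by input; note that these individual price-ratio conditions are only necessary, and the paper passes somewhat loosely from them to the aggregate condition $T\leq 1$ at the end. Your argument — concavity plus coercivity guarantees a maximizer, absence of an interior stationary point when $T>1$ forces the maximum onto the boundary, and every boundary point with a vanishing positive-weight input gives $G\leq 0 = G(0)$ so the origin dominates — resolves the case distinction directly in terms of $T$ and is, if anything, tighter than the paper's treatment. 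Third, your explicit existence argument (revenue bounded by $P_iA_i\alpha_{iL}^{-1/\rho_i}$ because $S\geq\alpha_{iL}$, hence $G\to-\infty$) is not spelled out in the paper's proof of this proposition, which implicitly leans on Proposition~\ref{prop:profit_max}; including it makes your version self-contained.
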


Proposition \eqref{prop:CRS} can then be applied to explicitly compute the optimal profit and the corresponding input allocation, provided the labor cost function admits a known convex conjugate — for instance, the power function in Example \ref{ex:power-labor}.

\section{Price formation and competitive equilibrium in the static setting}\label{sec:2}

In this section, we develop a static equilibrium model of a productive economy composed of \( N \) sectors, where each sector is represented by a single firm that produces a distinct good. The economy's price system is determined by imposing market-clearing conditions: the total supply of each good must equal the total demand, which includes both final consumption and intermediate input requirements from all sectors. We then formalize the interaction among firms through the notion of \emph{competitive equilibrium}, defined as a collection of prices and input allocations such that (i) each firm maximizes its profit given prices, and (ii) all markets clear.

We establish the existence of such a competitive equilibrium by characterizing it as the saddle point of a central planner’s minimax problem. This variational formulation provides an economic interpretation aligned with the First Fundamental Theorem of Welfare Economics, emphasizing the efficiency of equilibrium allocations.

Each sector faces external demand from consumers and may also compete with an external source of supply. To model this, we introduce sector-specific net demand functions $D_i: \R_{\geq 0} \to \R$, where $D_i(P)$ denotes the difference between external demand and external supply for the $i$-th sector at a given price $P \geq 0$. These functions may take both positive and negative values and are assumed to satisfy the following conditions.

\begin{assumption}[Properties of net demand functions]\label{ass:demand}
For every $i = 1, \dots, N$, the net demand function $D_i$ is strictly decreasing and continuous on $\R_{>0}$. Moreover, it satisfies the following asymptotic conditions:
\[
\lim_{P \to 0} D_i(P) = \infty, \quad \lim_{P \to \infty} D_i(P) \leq 0.
\]
\end{assumption}

This assumption encompasses both a closed economy without exogenous supply and an open economy with external supply. In the closed economy case, we have $D_i(P) > 0$ for all $P \geq 0$, and the limiting value as the price tends to infinity approaches zero.

\begin{example}[Power demand functions]\label{ex:power_demad}
The following examples illustrate modifications of the standard power demand function that account for the presence of external supply, while satisfying Assumption~\ref{ass:demand}.

A first class of net demand functions is given by:
\[
D_i(P) = a_i P^{-\epsilon_i} - b_i, \quad \text{with } a_i, \epsilon_i > 0, \quad b_i \geq 0,
\]
where the first term models consumer demand with elasticity $\epsilon_i$, and the constant term $b_i$ represents a finite external supply.

A second variant incorporates a price-dependent external supply:
\[
D_i(P) = a_i P^{-\epsilon_i} - b_i P^{\delta_i}, \quad \text{with } a_i, \epsilon_i, \delta_i > 0, \quad b_i \geq 0.
\]
Here, the external supply grows unboundedly with the price, so that the net demand diverges to $-\infty$ as $P \to \infty$.

In both cases, the closed economy scenario with no external supply is recovered by setting $b_i = 0$.
\end{example}

Let $\cQ = (q_{ij})_{i,j=1,\dots,N}$ represent the matrix of quantities exchanged among companies. Using this notation, we define an allocation matrix $(\cQ, \mathbf{E}, \mathbf{L}) \in \R^{N(N+2)}$. 

Under the assumption of no market frictions, the price vector must satisfy the following market clearing condition: 
\begin{equation}\label{eq:clearing}
    F_i(\mathbf{q}_i,E_i,L_i)=D_i(P_i)+\sum^N_{j=1} q_{ji},\quad i=1,\dots,N.
\end{equation}
for any fixed $(\cQ,\mathbf{E},\mathbf{L})\in\R^{N(N+2)}$.

The left-hand side of the equation represents the total endogenous supply of good $i$, which equals the output of  sector $i$. The right-hand side represents the total net demand for good $i$, comprising two components: the exogenous net demand, driven by the demand function $D_i$, and the endogenous or intermediate demand, which reflects the quantity of good $i$ demanded by all sectors, including sector $i$ itself.

We now introduce the notion of competitive equilibrium within our framework.

\begin{definition}[Competitive equilibrium]\label{def:Nash_equilibrium}
A competitive equilibrium (CE) is a vector $(\mathbf{P}^*,\cQ^*,\mathbf{E}^*,\mathbf{L}^*)\in\R^N_{\geq 0}\times\R^{N(N+2)}_{\geq 0}$ of prices and inputs such that, for each $i=1,\dots,N$:
\begin{itemize}
    \item[(i)] The price $P^{*,i}$ satisfies the market clearing condition \eqref{eq:clearing}.
    \item[(ii)] The input vector $(\mathbf{q}^*_i,E^*_i,L^*_i)$ solves the profit maximization problem \eqref{prob:profit_max}.   
\end{itemize}
\end{definition}

To establish the existence of a CE, we consider the following minimax problem:
\begin{equation}\label{prob:minimax}
\min_{\mathbf{P}}\sum^N_{i=1}\max_{\mathbf{q}_i,E_i,L_i}\Bigg\{P_i F_i(\mathbf{q}_i,E_i,L_i)-\sum^N_{j=1}P_jq_{ij}-P_E E_i-W_i(L_i) -\Delta_i(P_i)\Bigg\},    
\end{equation}
where $\Delta_i(P)\coloneqq\int^P_1D_i(z)\,dz$ denotes the consumer surplus function of sector $i$ up to a given price $P \geq 0$.

By interchanging the summation and maximization operators, we reformulate Problem~\eqref{prob:minimax} as:
\begin{equation}\label{prob:lagrangian}
\min_{\mathbf{P}}\max_{\cQ,\mathbf{E},\mathbf{L}}\cL(\mathbf{P},\cQ,\mathbf{E},\mathbf{L}),   
\end{equation}
where $\cL:\R^N_{\geq 0}\times\R^{N(N+2)}\to\R$ is the Lagrangian function defined by:
\begin{equation}\label{eq:lagrangian}
\cL(\mathbf{P},\cQ,\mathbf{E},\mathbf{L})\coloneqq\sum^N_{i=1}\Bigg\{P_i F_i(\mathbf{q}_i,E_i,L_i)-\sum^N_{j=1}P_jq_{ij}-P_E E_i-W_i(L_i) -\Delta_i(P_i)\Bigg\}. 
\end{equation}
The Lagrangian function captures the total surplus of the economy: it aggregates firms' profits while subtracting the consumer surplus terms $\Delta_i(P_i)$, which appear with a negative sign to reflect that consumers only demand goods in this model and do not share in firms' profits.

\begin{remark}[On the consumer surplus function]\label{rem:CDF}
Under Assumption \ref{ass:demand}, the consumer surplus functions $\Delta_i:\R_{\geq 0}\to\R$ are well-defined for all non-negative prices, with the convention: 
$$
\Delta_i(P)=-\int^1_P D_i(z) \, dz, \quad\text{for all $P\leq 1$}. 
$$
In particular, $\lim_{P\to 0^+}\Delta_i(P)$ exists, though it may equal $-\infty$. Consequently, $\Delta_i$ is upper semi-continuous over $\R_{\geq 0}$.  

Additionally, since the demand functions $D_i$ are continuous on $\R_{>0}$, each consumer surplus function $\Delta_i$ is continuously differentiable on this domain.

Finally, because each demand function $D_i$ is strictly decreasing, the consumer surplus functions $\Delta_i$ are strictly concave.
\end{remark}

From standard convex analysis (see, e.g., \cite[Ch.~6]{Ekeland99}), the following inequality holds:
\begin{equation}\label{eq.minimax_inequality}
\sup_{\cQ,\mathbf{E},\mathbf{L}}\inf_{\mathbf{P}}\cL(\mathbf{P},\cQ,\mathbf{E},\mathbf{L})\leq\inf_{\mathbf{P}}\sup_{\cQ,\mathbf{E},\mathbf{L}}\cL(\mathbf{P},\cQ,\mathbf{E},\mathbf{L}),
\end{equation}
with strict inequality generally occurring. The right-hand side corresponds to the competitive equilibrium formulation, where each agent optimizes its own objective given prevailing market prices and the decisions of other agents. In contrast, the left-hand side — corresponding to the minimax problem \eqref{prob:lagrangian} — reflects the perspective of a central planner who simultaneously allocates inputs and sets prices in order to maximize total economic surplus, as represented by the Lagrangian \eqref{eq:lagrangian}. 
 
We now recall the following well-known definition, tailored to our context.
\begin{definition}[Saddle point]\label{def:saddle_point}
A vector $(\mathbf{P}^*,\cQ^*,\mathbf{E}^*,\mathbf{L}^*)\in\R^N_{\geq 0}\times\R^{N(N+2)}_{\geq 0}$ is called a saddle point of $\cL$ on $\R^N_{\geq 0}\times\R^{N(N+2)}_{\geq 0}$ if:
\[
\cL(\mathbf{P}^*,\cQ,\mathbf{E},\mathbf{L})\leq \cL(\mathbf{P}^*,\cQ^*,\mathbf{E}^*,\mathbf{L}^*)\leq\cL(\mathbf{P},\cQ^*,\mathbf{E}^*,\mathbf{L}^*)
\]
for all $\mathbf{P}\in\R^N_{\geq 0}$ and $(\cQ,\mathbf{E},\mathbf{L})\in\R^{N(N+2)}_{\geq 0}$. 

In particular, a vector $(\mathbf{P}^*,\cQ^*,\mathbf{E}^*,\mathbf{L}^*)$ is a saddle point if and only if satisfies the Minimax equality:
\begin{equation}\label{eq:minimax}
\cL(\mathbf{P}^*,\cQ^*,\mathbf{E}^*,\mathbf{L}^*)=\max_{\cQ,\mathbf{E},\mathbf{L}}\min_{\mathbf{P}}\cL(\mathbf{P},\cQ,\mathbf{E},\mathbf{L})=\min_{\mathbf{P}}\max_{\cQ,\mathbf{E},\mathbf{L}}\cL(\mathbf{P},\cQ,\mathbf{E},\mathbf{L}).
\end{equation}
\end{definition} 

The following lemma establishes a link between the notion of a saddle point of $\mathcal{L}$ and competitive equilibrium. The proof is provided in Appendix~\ref{app:1}. 

\begin{lemma}\label{lem:saddle_point}
Suppose that Assumptions \ref{ass:demand} holds. Then, $(\mathbf{P}^*,\cQ^*,\mathbf{E}^*,\mathbf{L}^*)\in\R^N_{> 0}\times \R^{N(N+2)}_{\geq 0}$ is a saddle point of $\cL$ if and only if it is a competitive equilibrium.  
\end{lemma}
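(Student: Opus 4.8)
The plan is to exploit the fact that the saddle-point property of $\cL$ at $(\mathbf{P}^*,\cQ^*,\mathbf{E}^*,\mathbf{L}^*)$ decouples into two independent one-sided optimality conditions, and to match each of them to one of the two defining properties of a competitive equilibrium. By Definition~\ref{def:saddle_point}, the point is a saddle point if and only if both
\[
\cL(\mathbf{P}^*,\cQ,\mathbf{E},\mathbf{L})\le \cL(\mathbf{P}^*,\cQ^*,\mathbf{E}^*,\mathbf{L}^*) \quad\text{and}\quad \cL(\mathbf{P}^*,\cQ^*,\mathbf{E}^*,\mathbf{L}^*)\le \cL(\mathbf{P},\cQ^*,\mathbf{E}^*,\mathbf{L}^*)
\]
hold for all admissible arguments; that is, $(\cQ^*,\mathbf{E}^*,\mathbf{L}^*)$ maximizes $\cL(\mathbf{P}^*,\cdot)$ over $\R^{N(N+2)}_{\ge 0}$ and $\mathbf{P}^*$ minimizes $\cL(\cdot,\cQ^*,\mathbf{E}^*,\mathbf{L}^*)$ over $\R^N_{\ge 0}$. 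I would treat these two conditions separately.

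For the maximization over $(\cQ,\mathbf{E},\mathbf{L})$ with $\mathbf{P}=\mathbf{P}^*$ fixed, I would observe that the terms $\Delta_i(P^*_i)$ are constant and that the remaining objective is a sum over $i$ of
\[
P^*_i F_i(\mathbf{q}_i,E_i,L_i)-\sum_{j=1}^N P^*_j q_{ij}-P_E E_i-W_i(L_i)=\Pi_i(\mathbf{P}^*,P_E,\mathbf{q}_i,E_i,L_i),
\]
each depending only on that sector's variables $(\mathbf{q}_i,E_i,L_i)$. Since the constraint set is a product, maximizing the sum over $\R^{N(N+2)}_{\ge 0}$ is equivalent to maximizing each summand over $\R^{N+2}_{\ge 0}$, i.e.\ to each $(\mathbf{q}^*_i,E^*_i,L^*_i)$ solving the profit maximization problem \eqref{prob:profit_max}. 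This is precisely condition~(ii) of Definition~\ref{def:Nash_equilibrium}, with existence of the maximizer guaranteed by Proposition~\ref{prop:profit_max} (using $\mathbf{P}^*>0$).

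For the minimization over $\mathbf{P}$ with $(\cQ^*,\mathbf{E}^*,\mathbf{L}^*)$ fixed, the key bookkeeping step is to re-index the intermediate-cost term $\sum_{i,j}P_j q^*_{ij}=\sum_i P_i\sum_j q^*_{ji}$ and collect all contributions by price, which turns the objective (up to an additive constant independent of $\mathbf{P}$) into the separable sum $\sum_{i=1}^N g_i(P_i)$ with
\[
g_i(P_i)=P_i\Big(F_i(\mathbf{q}^*_i,E^*_i,L^*_i)-\sum_{j=1}^N q^*_{ji}\Big)-\Delta_i(P_i).
\]
By Remark~\ref{rem:CDF}, each $\Delta_i$ is strictly concave and continuously differentiable on $\R_{>0}$ with $\Delta_i'=D_i$, so each $g_i$ is strictly convex and differentiable there, with $g_i'(P_i)=F_i(\mathbf{q}^*_i,E^*_i,L^*_i)-\sum_j q^*_{ji}-D_i(P_i)$. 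Minimizing the separable sum over $\R^N_{\ge 0}$ reduces to minimizing each $g_i$ over $[0,\infty)$; since the candidate $P^*_i$ is interior ($P^*_i>0$) and $g_i$ is differentiable there, strict convexity makes the first-order condition $g_i'(P^*_i)=0$ both necessary and sufficient for $P^*_i$ to be the global minimizer. Rearranging $g_i'(P^*_i)=0$ gives exactly the market-clearing condition \eqref{eq:clearing}, i.e.\ condition~(i).

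Combining the two equivalences yields the claim. I expect the main obstacle to be the minimization step: one must carry out the re-indexing carefully, invoke strict concavity of $\Delta_i$ (hence strict convexity of $g_i$) from Remark~\ref{rem:CDF}, and crucially use the hypothesis $\mathbf{P}^*\in\R^N_{>0}$ so that the minimizer is interior—where $\Delta_i$ is differentiable—allowing the first-order condition to be equated with market clearing without worrying about the boundary $P_i=0$, where $\Delta_i$ may fail to be differentiable and may even tend to $-\infty$.
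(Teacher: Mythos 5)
Your proposal is correct and follows essentially the same route as the paper's proof: the left saddle inequality is identified with the per-sector profit maximization (condition (ii)), and the right inequality is matched to market clearing via the first-order condition in $\mathbf{P}$, using differentiability of $\Delta_i$ on $\R_{>0}$ together with convexity of $\cL$ in $\mathbf{P}$ for the converse direction. Your explicit re-indexing into the separable functions $g_i$ and the careful treatment of the boundary $P_i=0$ merely spell out details the paper leaves implicit.
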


To establish the existence of a CE, an additional assumption is required to ensure market viability. Specifically, there must exist a vector of positive prices and a production schedule that allows for profitable production in all sectors when labor and emission costs are zero.  

\begin{assumption}[Market Viability]\label{ass:participation}  
There exists a triple \( (\overline{\cQ},\overline{\mathbf{E}},\overline{\mathbf{L}}) \in \mathbb{R}^{N(N+2)}_{\geq 0} \) and a vector of positive prices \( \overline{\mathbf{P}} \in \mathbb{R}^N_{> 0} \) such that, for each \( i=1,\dots,N \),  
\[
\overline{P}_i F_i(\overline{\mathbf{q}}_i, \overline{E}_i, \overline{L}_i) - \sum_{j=1}^N \overline{P}_j \overline{q}_{ij} > 0.
\]
\end{assumption}  

Alternatively, one may assume that external supply is unbounded, which guarantees that net demand decreases without bound as the price grows arbitrarily large.  

\begin{assumption}[Unbounded External Supply]\label{ass:supply}  
For each \( i=1,\dots,N \), the net demand function \( D_i \) satisfies the following asymptotic condition:  
\[
\lim_{P\to\infty} D_i(P) = -\infty.
\]
\end{assumption}

Each of the above assumptions provides a sufficient condition for the existence of a competitive equilibrium, as summarized in the following theorem. 

\begin{theorem}\label{th:Nash_equilibrium}  
Suppose that the production and labor cost functions satisfy either Assumptions \ref{ass:prod_function} (a) and \ref{ass:wage_function}, or Assumptions \ref{ass:prod_function} (b), \ref{ass:wage_function}, and \ref{ass:wage_function2}. Furthermore, assume that the net demand functions satisfy Assumption \ref{ass:demand} and that either Assumption \ref{ass:participation} or \ref{ass:supply} holds.

Then, there exists a CE \( (\mathbf{P}^*,\cQ^*,\mathbf{E}^*,\mathbf{L}^*) \in \mathbb{R}^N_{> 0} \times \mathbb{R}^{N(N+2)}_{\geq 0} \), and the corresponding price system is unique. Moreover, if the production functions \( F_i \) are strictly concave and the labor cost functions \( W_i \) are strictly convex, then the CE is also unique.  
\end{theorem}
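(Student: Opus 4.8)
The plan is to deduce the theorem from Lemma~\ref{lem:saddle_point}: it suffices to exhibit a saddle point $(\mathbf{P}^*,\cQ^*,\mathbf{E}^*,\mathbf{L}^*)$ of the Lagrangian $\cL$ with $\mathbf{P}^*\in\R^N_{>0}$, since by that lemma such a point is exactly a competitive equilibrium. The first step is to record the variational structure of $\cL$. For fixed $(\cQ,\mathbf{E},\mathbf{L})$ the map $\mathbf{P}\mapsto\cL$ is a sum of terms linear in $\mathbf{P}$ and of $-\sum_i\Delta_i(P_i)$; since each $\Delta_i$ is strictly concave and upper semi-continuous (Remark~\ref{rem:CDF}), this map is strictly convex and lower semi-continuous. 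For fixed $\mathbf{P}\geq 0$ the map $(\cQ,\mathbf{E},\mathbf{L})\mapsto\cL$ is concave and upper semi-continuous, because each $F_i$ is concave and upper semi-continuous (Assumption~\ref{ass:prod_function}) and each $W_i$ is convex and lower semi-continuous (Assumption~\ref{ass:wage_function}). Thus $\cL$ is a convex--concave, lsc--usc function on the product of closed convex cones $\R^N_{\geq 0}\times\R^{N(N+2)}_{\geq 0}$.

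The second step---and the main obstacle---is to obtain a saddle point despite the non-compactness of both domains. I would first restrict to a compact box $[\varepsilon,M]^N\times[0,M]^{N(N+2)}$, on which a standard saddle-point theorem for convex--concave, lsc--usc functions (e.g.\ \cite[Ch.~6]{Ekeland99}) yields a saddle point, and then show that for $\varepsilon$ small and $M$ large these points lie in the interior, so that the truncation is inactive and a genuine saddle point on the full domain is recovered. Three a priori bounds are needed: (i) an upper bound on the optimal allocations, which follows from the growth hypotheses---sublinearity $F_i(\mathbf{x})=o(\norm{\mathbf{x}})$ under Assumption~\ref{ass:prod_function}(a), or the labour bound $F_i\leq c_iL$ together with the superlinear cost growth of Assumption~\ref{ass:wage_function2} under Assumption~\ref{ass:prod_function}(b), exactly as already exploited in Propositions~\ref{prop:profit_max} and~\ref{prop:CRS}; (ii) a lower bound keeping prices away from $0$, which comes from $\lim_{P\to0}D_i(P)=\infty$ in Assumption~\ref{ass:demand}; and (iii) an upper bound on prices. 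For (iii) the two alternative hypotheses enter separately. Under Assumption~\ref{ass:supply} the term $-\Delta_i(P_i)$ grows superlinearly as $P_i\to\infty$ (since $D_i(P_i)\to-\infty$), which directly renders the price minimization coercive. Under Assumption~\ref{ass:participation} coercivity is more delicate, as consumer surplus provides no control; one must instead exploit the net-output structure of $\cL$ together with the strict monotonicity of the $D_i$ and the aggregate profitability $\sum_k\overline{P}_k\big(F_k(\overline{\mathbf{q}}_k,\overline{E}_k,\overline{L}_k)-\sum_i\overline{q}_{ik}\big)>0$, obtained by summing the viability inequalities over the sectors, to confine the minimizing prices. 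Establishing (iii) cleanly under the viability assumption is the technical crux.

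Given a saddle point on the full domain, the third step is to confirm $\mathbf{P}^*\in\R^N_{>0}$. Since $\mathbf{P}^*$ minimizes $\mathbf{P}\mapsto\cL(\mathbf{P},\cQ^*,\mathbf{E}^*,\mathbf{L}^*)$ and $\Delta_k$ is $C^1$ on $(0,\infty)$ with $\Delta_k'=D_k$, the stationarity condition at an interior price reads $F_k(\mathbf{q}^*_k,E^*_k,L^*_k)-\sum_i q^*_{ik}-D_k(P^*_k)=0$, which is precisely the market-clearing condition~\eqref{eq:clearing}. If some $P^*_k$ were $0$, the Karush--Kuhn--Tucker inequality at the boundary would force $F_k(\mathbf{q}^*_k,E^*_k,L^*_k)-\sum_i q^*_{ik}\geq D_k(0^+)=+\infty$, impossible for a finite allocation; hence all prices are strictly positive. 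Lemma~\ref{lem:saddle_point} then upgrades the saddle point to a competitive equilibrium, establishing existence.

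Finally, for uniqueness I would use the product structure of the saddle-point set of a convex--concave function together with the strict convexity of $\mathbf{P}\mapsto\cL(\mathbf{P},\cdot)$ noted in the first step (valid under Assumption~\ref{ass:demand} alone, since $D_i$ is strictly decreasing). If $(\mathbf{P}_1,y_1)$ and $(\mathbf{P}_2,y_2)$ are saddle points, then $(\mathbf{P}_1,y_2)$ is one as well, so $\mathbf{P}_1$ and $\mathbf{P}_2$ both minimize the strictly convex map $\mathbf{P}\mapsto\cL(\mathbf{P},y_2)$, forcing $\mathbf{P}_1=\mathbf{P}_2$; this gives uniqueness of the price system. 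When in addition the $F_i$ are strictly concave and the $W_i$ strictly convex, the allocation component of each saddle point maximizes $(\cQ,\mathbf{E},\mathbf{L})\mapsto\cL(\mathbf{P}^*,\cdot)$, equivalently it solves the firmwise problem~\eqref{prob:profit_max}, whose solution is unique by Proposition~\ref{prop:profit_max}; hence the full equilibrium is unique.
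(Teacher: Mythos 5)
Your overall route coincides with the paper's: reduce existence to a saddle point of $\cL$ via Lemma~\ref{lem:saddle_point}, produce the saddle point from a convex--concave minimax argument, verify $\mathbf{P}^*\in\R^N_{>0}$ by a first-order argument at the boundary using $\lim_{P\to 0}D_i(P)=\infty$, and get uniqueness of prices (resp.\ of the full CE) from strict convexity in $\mathbf{P}$ (resp.\ strict concavity in the allocation). Your compact-box truncation followed by an interiority argument is essentially a by-hand version of the coercive minimax theorem (Proposition VI.2.4 of \cite{Ekeland99}) that the paper applies directly, and your boundary and uniqueness steps are equivalent to the paper's.

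The genuine gap is exactly the point you defer as ``the technical crux'': coercivity of $\cL$ in $\mathbf{P}$ under Assumption~\ref{ass:participation} is never established, and the route you hint at does not suffice. Summing the viability inequalities only yields positivity of the \emph{value of aggregate net output at the single price vector} $\overline{\mathbf{P}}$, namely $\sum_k\overline{P}_k\big(F_k(\overline{\mathbf{q}}_k,\overline{E}_k,\overline{L}_k)-\sum_i\overline{q}_{ik}\big)>0$. Since the production part of $\cL(\cdot,\overline\cQ,\overline{\mathbf{E}},\overline{\mathbf{L}})$ is linear in $\mathbf{P}$ with coefficient vector the net output, this gives $\cL\to+\infty$ only along the ray $\mathbf{P}=t\overline{\mathbf{P}}$; along a direction in which some component of net output is negative, the linear part tends to $-\infty$, and under Assumption~\ref{ass:demand} alone the terms $-\Delta_i(P_i)$ cannot rescue this --- in a closed economy they are nonpositive for large prices and may themselves diverge to $-\infty$. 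Consequently your box saddle points may remain pinned to the face $P_i=M$ for every $M$, and the truncation never becomes inactive, so existence is not obtained in the viability case. The paper closes precisely this hole with a specific device you are missing: for each $\mathbf{P}$ it selects the sector $k=k(\mathbf{P})$ maximizing $P_k/\overline{P}_k$, uses $P_j\leq (P_k/\overline{P}_k)\overline{P}_j$ to bound $\cL$ below by $(P_k/\overline{P}_k)$ times the \emph{single-sector} viability margin of sector $k$, and combines this with the sublinear bound $\Delta_i(P)\leq C'+\varepsilon P$ (valid for every $\varepsilon>0$ by Assumption~\ref{ass:demand}) and $P_k\geq c\norm{\mathbf{P}}$, choosing $\varepsilon$ small, to force $\cL(\mathbf{P},\overline{\cQ},\overline{\mathbf{E}},\overline{\mathbf{L}})\to+\infty$. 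Your treatment of the Assumption~\ref{ass:supply} case and all remaining steps are sound.
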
  

\begin{proof}
By Lemma \ref{lem:saddle_point}, establishing the existence and uniqueness of a CE is equivalent to proving the existence and uniqueness of a saddle point for $\cL$ on $\R^N_{>0}\times\R^{N(N+2)}_{\geq 0}$. To this end, we verify the assumptions of the minimax theorem (see Proposition VI.2.4 in \cite{Ekeland99}).

We first express the Lagrangian function $\cL$ as:
\begin{equation}\label{eq:lagrangian_sum}
\cL(\mathbf{P},\cQ,\mathbf{E},\mathbf{L})=\sum^N_{i=1}\ell_i(\mathbf{P},\mathbf{q}_i,E_i,L_i),
\end{equation}
where the functions $\ell_i:\R^N_{\geq 0}\times\R^{N+2}_{\geq 0}\to\R$ are defined by:
\[
\ell_i(\mathbf{P},\mathbf{q}_i,E_i,L_i)\coloneqq P_i F_i(\mathbf{q}_i,E_i,L_i)-\sum^N_{j=1}P_jq_{ij}-P_E E_i-W_i(L_i) -\Delta_i(P_i),\quad i=1,\dots,N.
\]

Given the semi-continuity assumptions on $F_i$ and $W_i$, the function $\ell_i$ is upper semi-continuous in $(\mathbf{q}_i, E_i, L_i)$. Additionally, since $\Delta_i$ is upper semi-continuous, $\ell_i$ is lower semi-continuous with respect to $\mathbf{P}$. Consequently, the Lagrangian $\cL$ satisfies the necessary continuity properties: it is upper semi-continuous with respect to the inputs and lower semi-continuous with respect to the prices.

Furthermore, since $F_i$ and $\Delta_i$ are concave, and $W_i$ is convex, each function $\ell_i$ is concave in $(\mathbf{q}_i,E_i,L_i)$ and convex in $\mathbf{P}$. Consequently, the Lagrangian function $\cL$ is concave with respect to $(\cQ,\mathbf{E},\mathbf{L})$ and convex with respect to $\mathbf{P}$. 

Next, as demonstrated in the proof of Proposition \ref{prop:profit_max}, the profit function $\Pi_i$ tends to $-\infty$ as $\norm{(\mathbf{q}_i,E_i,L_i)}\to\infty$, for any fixed price vector $\mathbf{P}\in \mathbb R_{> 0}^N$. Consequently, for each $i=1,\dots,N$,
\[
\ell_i(\mathbf{P},\mathbf{q}_i,E_i,L_i)\to-\infty,\quad\text{as}\quad\norm{(\mathbf{q}_i,E_i,L_i)}\to\infty.
\]
This implies that 
\[
\cL(\mathbf{P},\cQ,\mathbf{E},\mathbf{L})\to-\infty,\quad\text{as}\quad\norm{(\cQ,E,L)}\to\infty.
\]
for any price vector $\mathbf{P}\in\R^N_{\geq 0}$.

To complete the proof of the existence of a saddle point, it remains to establish that the Lagrangian satisfies the following coercivity condition:  
\begin{equation}\label{eq:L_coercive}
\mathcal{L}(\mathbf{P}, \overline{\mathcal{Q}}, \overline{\mathbf{E}},\overline{\mathbf{L}}) \to +\infty \quad \text{as } \|\mathbf{P}\| \to \infty
\end{equation}
for some $(\overline{\mathcal{Q}},\overline{\mathbf{E}},\overline{\mathbf{L}}) \in \mathbb{R}^{N(N+2)}$.  

Assume that Assumption \ref{ass:participation} holds. Let $(\overline{\mathcal{Q}},\overline{\mathbf{E}},\overline{\mathbf{L}})$ and $\overline{\mathbf{P}}$ be as given in that assumption. Define a sector \( k \) such that  
\[
\frac{P_k}{\overline{P}_k} \geq \frac{P_i}{\overline{P}_i}, \quad \text{for all } i=1,\dots,N.
\]
Using this selection, we derive the following lower bound for the Lagrangian:  
\begin{align*}
\mathcal{L}(\mathbf{P}, \overline{\mathcal{Q}}, \overline{\mathbf{E}},\overline{\mathbf{L}}) &\geq \frac{P_k}{\overline{P}_k} \Bigg( \overline{P}_k F_k(\overline{\mathbf{q}}_k,\overline{E}_k,\overline{L}_k)-\sum^N_{j=1} \overline{P}_j \overline{q}_{kj} \Bigg) - C - \sum_{i=1}^N \Delta_i(P_i),
\end{align*}
where \( C < \infty \) is a constant independent of \( \mathbf{P} \).  

Furthermore, by Assumption \ref{ass:demand}, for every \( \varepsilon>0 \), there exists a constant \( C' \) such that  
\[
\Delta_i(P) \leq C' + \varepsilon P, \quad \text{for all } P > 0.
\]  
Since \( P_k \geq c\|\mathbf{P}\| \) for some constant \( c > 0 \), it follows that  
\[
\mathcal{L}(\mathbf{P}, \overline{\mathcal{Q}}, \overline{\mathbf{E}},\overline{\mathbf{L}}) \geq  c\|\mathbf{P}\| \Bigg( \overline{P}_k F_k(\overline{\mathbf{q}}_k,\overline{E}_k,\overline{L}_k) - \sum^N_{j=1} \overline{P}_j \overline{q}_{kj} - \frac{\varepsilon}{c} \Bigg) - C - C'.
\]
By Assumption~\ref{ass:participation}, \( \varepsilon \) can be chosen sufficiently small to ensure that the term in parentheses remains strictly positive. Consequently, the coercivity condition \eqref{eq:L_coercive} is satisfied.

Assume that the net demand functions satisfy the asymptotic condition specified in Assumption~\ref{ass:supply}. Then, for every \( \varepsilon > 0 \), there exist constants \( C \in \mathbb{R} \) and \( \hat{P} > 0 \) such that
\[
\Delta_i(P) \leq C - \varepsilon P, \quad \text{for all } P \geq \hat{P}.
\]

This implies that, for sufficiently large \( \|\mathbf{P}\| \), the Lagrangian satisfies the following lower bound:  
\[
\mathcal{L}(\mathbf{P},\mathcal{Q}, \mathbf{E},\mathbf{L}) \geq \sum_{i=1}^{N} \Bigg\{ P_i F_i(\mathbf{q}_i, E_i, L_i) - \sum_{j=1}^{N} q_{ij} P_j+ \varepsilon P_i \Bigg\} + C'
\]
for some constant \( C' \) that is independent of \( \mathbf{P} \).  

By choosing \( \varepsilon \) sufficiently large, the coercivity condition \eqref{eq:L_coercive} is satisfied.

This implies that the assumptions of Proposition VI.2.4 in \cite{Ekeland99} hold, allowing us to conclude that  
\begin{align*}
\max_{\cQ,\mathbf{E},\mathbf{L}}\min_{\mathbf{P}}\cL(\mathbf{P},\cQ,\mathbf{E},\mathbf{L})=\min_{\mathbf{P}}\max_{\cQ,\mathbf{E},\mathbf{L}}\cL(\mathbf{P},\cQ,\mathbf{E},\mathbf{L}),\label{infmax}
\end{align*}
and that a saddle point $(\mathbf{P}^*,\cQ^*,\mathbf{E}^*,\mathbf{L}^*)\in\R^N_{\geq  0}\times \R^{N(N+2)}_{\geq 0}$ exists.  

To apply Lemma~\ref{lem:saddle_point}, it remains to verify that \( \mathbf{P}^* \in \mathbb{R}^N_{> 0} \). Suppose, for contradiction, that \( P^*_i = 0 \) for some \( i \in \{1,\dots,N\} \). For an arbitrary \( h > 0 \), define the perturbed price vector \( \mathbf{P}^*(h) \) by
\[
P^*_j(h) = P^*_j \quad \text{for } j \neq i, \quad \text{and} \quad P^*_i(h) = h.
\] 

Since the Lagrangian \( \mathcal{L} \) is continuously differentiable on \( \mathbb{R}^N_{> 0} \), the mean value theorem implies the existence of a vector \( \mathbf{v}(h) \in \mathbb{R}^N_{\geq 0} \) such that  
\[
v_j(h) = P^*_j \quad \text{for } j \neq i, \quad \text{and} \quad v_i(h) \in (0, h),
\]
for which the following first-order expansion holds:
\begin{align*}
\cL(\mathbf{P}^*(h), \cQ, \mathbf{E}, \mathbf{L}) 
&= \cL(\mathbf{P}^*, \cQ, \mathbf{E}, \mathbf{L}) + h \frac{\partial}{\partial P_i} \cL(\mathbf{v}(h), \cQ, \mathbf{E}, \mathbf{L}) \\
&= \cL(\mathbf{P}^*, \cQ, \mathbf{E}, \mathbf{L}) + h \left( F_i(\mathbf{q}_i, E_i, L_i) - \sum_{j=1}^N q_{ji} - D_i(v_i(h)) \right).
\end{align*}

By Assumption~\ref{ass:demand}, \( h \) can be chosen sufficiently small so that the expression in parentheses on the second line is strictly negative. Consequently,  
\[
\mathcal{L}(\mathbf{P}^*(h), \mathcal{Q}, \mathbf{E}, \mathbf{L}) < \mathcal{L}(\mathbf{P}^*, \mathcal{Q}, \mathbf{E}, \mathbf{L}),
\]
contradicting the optimality of \( \mathbf{P}^* \) as a minimizer of \( \mathcal{L} \). Therefore, we conclude that \( \mathbf{P}^* \in \mathbb{R}^N_{> 0} \).

Finally, since the Lagrangian is strictly convex with respect to $\mathbf{P}$, the optimal price system $\mathbf{P}^*$ is unique. Furthermore, if the production functions $F_i$ are strictly concave and the labor cost functions $W_i$ are strictly convex, then the Lagrangian function $\cL$ is strictly concave with respect to $(\cQ,\mathbf{E},\mathbf{L})$. These strict conditions ensure that if a saddle point — and hence a CE — exists, it must be unique.    
\end{proof}

\begin{remark}[Connection to welfare economics]\label{rem:welfare}
Theorem~\ref{th:Nash_equilibrium} establishes an equivalence between a competitive economy and a central planner, as reflected in the two sides of the minimax identity \eqref{eq:minimax}. That is, whenever a saddle point exists, the decentralized market equilibrium coincides with the solution chosen by a social planner seeking to maximize total welfare. In particular, the price system that emerges from the decentralized mechanism — where supply and demand are balanced — matches the price vector selected by the planner in solving the centralized problem \eqref{prob:lagrangian}.

This one-to-one correspondence between competitive equilibria and saddle points of \( \mathcal{L} \) on \( \mathbb{R}^N_{> 0} \times \mathbb{R}^{N(N+2)}_{\geq 0} \) embodies the spirit of the first fundamental theorem of welfare economics, reflects the spirit of the first fundamental theorem of welfare economics, in which competition and the absence of externalities are key to ensuring that decentralized market allocations are Pareto efficient (see \cite[Proposition 10.D.1]{Mas-colell95}).

In our setting, in particular, externalities are fully internalized through the carbon pricing mechanism.
\end{remark}

\section{Mean-field game approach to a dynamic market model with defaults}\label{sec:3}

In this section, we extend the framework developed in Sections \ref{sec:1} and \ref{sec:2} to a dynamic market model over the time interval $[0,T]$. The economy consists of $N$ sectors, each comprising a continuum of firms that may default at uncertain future times. This formulation, which models an economy with multiple sectors and infinitely many firms per sector, has been explored in the context of portfolio optimization, e.g., in \cite{borkar2010mckean}. As in that work and many others applying mean-field game (MFG) theory, we do not analyze the convergence from a finite-agent system to the mean-field limit, but instead work directly with the limiting model.

We consider an endogenous default framework in which each individual agent in each sector optimally selects its production inputs and the timing of market exit to maximize expected profits over $[0,T]$.

To streamline the equilibrium analysis, we adopt the linear programming approach introduced in \cite{bouveret2020mean} and \cite{Dumitrescu2021LP}, which reformulates the firms’ optimal stopping problem as an equivalent linear program. The central objects of interest are the distributions of states (labor costs) for non-defaulted firms in each sector, as well as the joint distributions of exit times and labor cost trajectories —referred to as exit measures.

A linear programming MFG Nash equilibrium is defined as a configuration of prices, state distribution flows, and exit measures that jointly satisfy a mean-field version of the market-clearing condition \eqref{eq:clearing} and the generic agent’s profit-maximization problem in each sector. The section concludes by proving the existence of such an equilibrium and the uniqueness of the associated price system.

\subsection{The optimal default problem of the generic firm}\label{sec:3.1}

Throughout this section, we fix a sector \( i \in \{1, \dots, N\} \) and focus on a generic firm within that sector.

Let $(\Omega,\cF,\bbF,\bbP)$ be a stochastic basis, where the filtration $\bbF = (\cF_t)_{t \in [0,T]}$ satisfies the usual conditions of right-continuity and completeness.

Let $\mathbf{q}^i_t \coloneqq (q^{i,j}_t)_{j=1,\dots,N}$ denote the vector of intermediate inputs purchased, $E^i_t$ the emissions produced, and $L^i_t$ the labor employed by the generic firm in sector $i$ per unit of time at time $t \in [0,T]$.

The total output of this firm per unit of time at time $t$ is given by
\begin{equation}\label{eq:output_firm}
    F_i(\mathbf{q}^i_t, E^i_t, L^i_t),\quad t \in [0,T].
\end{equation}
We assume that the production function $F_i$ satisfies the conditions of Assumption  \ref{ass:prod_function} (b)\footnote{In the static model, the main results hold under both Assumption \ref{ass:prod_function} (a) and Assumption \ref{ass:prod_function} (b). However, in the dynamic stochastic framework with mean-field interaction, the equilibrium analysis becomes significantly more challenging under Assumption \ref{ass:prod_function} (a). Assuming production functions with constant returns to scale allows the maximized profit of each firm to be expressed explicitly in terms of its labor cost function, as illustrated in Example \ref{ces.ex}. This simplification streamlines both the theoretical analysis of equilibria and their numerical approximation.}. 

We introduce an $\bbF$-adapted stochastic process $X^i = (X^i_t)_{t \in [0,T]}$ taking values in a sector-specific state space $\cO_i \subseteq \R_{> 0}$, and satisfying the following stochastic differential equation:
\begin{equation}\label{eq:MF_SDE}
    dX^i_t = \alpha_i(t, X^i_t)\,dt + \sigma_i(t, X^i_t)\,dW^i_t, \quad X^i_0 = x^i \in \cO_i, \quad t \in [0,T],
\end{equation}
where $(W^i)_{i=1,\dots,N}$ are independent standard Brownian motions. The functions $\alpha_i$ and $\sigma_i$, representing the drift and volatility of $X^i$, satisfy the technical conditions stated below.

\begin{assumption}[Properties of Drift and Volatility Functions]\label{ass:SDE}
The functions $\alpha_i\colon [0,T] \times \cO_i \to \R$ and $\sigma_i\colon [0,T] \times \cO_i \to \R_{> 0}$ ensure the existence of a unique strong solution to Equation \eqref{eq:MF_SDE}, such that:
\[
\sup_{t \in [0,T]} \bbE\big[\abs{X^i_t}^p\big] < \infty, \quad \bbP(\tau_{\cO_i} > T) = 1,
\]
for some $p \geq 1$, where $\tau_{\cO_i}$ denotes the first exit time of $X^i$ from the domain $\cO_i$.

In addition, the following conditions hold:
\begin{enumerate}
    \item The functions $\alpha_i(t,x)$ and $\sigma_i(t,x)$ are jointly measurable and continuous in $x$, uniformly in $t \in [0,T]$;
    \item There exist constants $M > 0$ and $\beta \in [0,1]$ such that, for all $(t,x) \in [0,T] \times \cO_i$,
    \[
    \abs{\alpha_i(t,x)} \leq M, \quad \sigma_i(t,x)^2 \leq M(1 + x^{\beta}).
    \]
\end{enumerate}
\end{assumption}

\begin{remark}\label{rem:CIR}
Assumption~\ref{ass:SDE} does not require Lipschitz continuity of the drift and volatility functions. While Lipschitz continuity — combined with measurability and sublinear growth — is a standard sufficient condition for the existence of a unique strong solution to the SDE~\eqref{eq:MF_SDE}, we relax this requirement to allow for a broader class of dynamics. In particular, this class includes processes such as the Cox-Ingersoll-Ross (CIR) process:
\[
dX^i_t = \alpha_i(\theta_i-X^i_t)\,dt + \sigma_i \sqrt{X^i_t}\,dW^i_t, \quad X^i_0 = x^i \in \R_{>0}, \quad t \in [0,T],
\]
where \( \alpha_i \) denotes the mean-reversion speed and \( \theta_i \) the long-run mean level.

For the CIR process, it is well known that pathwise uniqueness and non-negativity of solutions are guaranteed under Feller's condition:
\[
2\alpha_i \theta_i > \sigma_i^2.
\]
A detailed discussion can be found in \cite[Ch.~6]{JeanblancYorChesney2009}.
\end{remark}

We assume that prices and the carbon tax are deterministic and represented as time-dependent functions:
\[
P_i\colon[0,T]\to\R_{\geq 0}, \quad P_E\colon[0,T]\to\R_{>0}.
\]
Firms are assumed to operate under full information: at each time $t$, they observe the history of the price vector, carbon tax, and their own state process up to time $t$. 

We define the instantaneous cost function of the generic firm in sector~$i$ by
\begin{equation}\label{eq:cost_sector}
C_i(\mathbf{P}, \mathbf{q}_i, E_i, L_i, P_E, x) = \sum_{j=1}^N P_j q_{ij} + P_E E_i + W_i(L_i) + xL_i,
\end{equation}
for any price vector $\mathbf{P}=(P_1,\dots,P_N) \in \R^N_{\geq 0}$, input vector $(\mathbf{q}_i, E_i, L_i) \in \R^{N+2}_{\geq 0}$, carbon price $P_E \in \R_{>0}$, and state variable $x \in \cO_i$.

The function $W_i \colon \R_{\geq 0} \to \R_{>0}$, representing the baseline labor cost in sector $i$, satisfies Assumptions~\ref{ass:wage_function} and~\ref{ass:wage_function2}. The final term in Equation~\eqref{eq:cost_sector} captures additional fluctuations in wage levels driven by the state process $X^i$, which is why we also refer to it as the \emph{labor cost process}.

The instantaneous profit function of sector~$i$ is defined by
\begin{equation}\label{eq:gain_sector}
\Pi_i(\mathbf{P}, \mathbf{q}_i, E_i, L_i, P_E, x) = P_i F_i(\mathbf{q}_i, E_i, L_i) - C_i(\mathbf{P}, \mathbf{q}_i, E_i, L_i, P_E, x),
\end{equation}
for any price vector $\mathbf{P} \in \R^N_{\geq 0}$, input vector $(\mathbf{q}_i, E_i, L_i) \in \R^{N+2}_{\geq 0}$, carbon price $P_E \in \R_{> 0}$, and state variable $x \in \cO_i$. The corresponding maximized instantaneous profit function is given by
\begin{equation}\label{eq:max_inst_profit}
\overline{\Pi}_i(\mathbf{P}, P_E, x) \coloneqq \max_{\mathbf{q}_i, E_i, L_i} \Pi_i(\mathbf{P}, \mathbf{q}_i, E_i, L_i, P_E, x).
\end{equation}

The following proposition shows that the function $\overline{\Pi}_i$ is well defined and admits an explicit expression in terms of $W_i^*$. The proof is identical to that of Propositions~\ref{prop:profit_max} and~\ref{prop:CRS}.

\begin{proposition}\label{prop:max_inst_profit}  
Suppose that the production and labor cost functions satisfy Assumptions~\ref{ass:prod_function}(b), ~\ref{ass:wage_function}, and~\ref{ass:wage_function2}. Then the following statements hold:
\begin{enumerate}
    \item[(i)] The optimization problem \eqref{eq:max_inst_profit} admits a solution \(\mathbf{x}^*_i = (\mathbf{q}^*_i, E^*_i, L^*_i) \in \R^{N+2}_{\geq 0}\) for all price vectors $\mathbf{P} \in \R^N_{> 0}$, carbon prices $P_E \in \R_{> 0}$, and states $x \in \cO_i$. Moreover, if the production function $F_i$ is strictly concave and the labor cost function $W_i$ is strictly convex, the solution is unique.
    
    \item[(ii)] The maximized instantaneous profit function \(\overline{\Pi}_i\) is given by
    \[
    \overline{\Pi}_i(\mathbf{P}, P_E, x) = W_i^*\left(\widetilde{\Pi}_i(\mathbf{P}, P_E) - x\right),
    \]
    where \(\widetilde{\Pi}_i\) is defined in Equation~\eqref{pitilde}.
    
    \item[(iii)] If the labor cost function $W_i$ is strictly convex and differentiable, the optimal inputs are given by
    \[
    L^*_i = (W_i')^{-1}\left(\widetilde{\Pi}_i(\mathbf{P}, P_E) - x\right), \quad q^*_{ij} = \tilde{q}^*_{ij} L^*_i, \, j=1,\dots,N, \quad E^*_i = \tilde{E}^*_i L^*_i,
    \]
    where $(\tilde{q}^*_{ij})_{j=1,\dots,N}$ and $\tilde{E}^*_i$ are the solutions of the auxiliary problem defined by Equation~\eqref{pitilde}.
\end{enumerate}
\end{proposition}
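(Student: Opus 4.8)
The plan is to reduce the assertion to the static results of Propositions~\ref{prop:profit_max} and~\ref{prop:CRS} by absorbing the state-dependent cost into the labor cost function. Comparing the dynamic instantaneous profit~\eqref{eq:gain_sector} with the static profit~\eqref{eq:profit}, the only structural difference is the additional term $x L_i$ appearing in the cost~\eqref{eq:cost_sector}. Since this term is linear in $L_i$, the natural move is to define the modified labor cost function
\[
\widetilde{W}_i(L) \coloneqq W_i(L) + x L, \qquad L \geq 0,
\]
so that the profit functional $\Pi_i(\mathbf{P},\mathbf{q}_i,E_i,L_i,P_E,x)$ coincides with the static profit functional~\eqref{eq:profit} in which $W_i$ is replaced by $\widetilde{W}_i$ and the carbon price is held fixed.

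First I would verify that $\widetilde{W}_i$ inherits Assumptions~\ref{ass:wage_function} and~\ref{ass:wage_function2}. Since $x \in \cO_i \subseteq \R_{>0}$, the map $L \mapsto xL$ is strictly increasing, lower semi-continuous, and convex, so $\widetilde{W}_i$ is again strictly increasing, lower semi-continuous, and convex; moreover $\widetilde{W}_i(L)/L = W_i(L)/L + x \to +\infty$, so the superlinear growth of Assumption~\ref{ass:wage_function2} is preserved. If $W_i$ is strictly convex (respectively differentiable), then so is $\widetilde{W}_i$, since adding a linear term preserves both properties. Part~(i), including the uniqueness statement, then follows immediately by applying Proposition~\ref{prop:profit_max} to the profit functional with labor cost $\widetilde{W}_i$.

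For part~(ii), I would compute the convex conjugate of the modified labor cost directly from the definition~\eqref{convex conjugate W}:
\[
\widetilde{W}_i^*(y) = \max_{L\geq 0}\bigl\{L y - W_i(L) - x L\bigr\} = \max_{L\geq 0}\bigl\{L(y-x) - W_i(L)\bigr\} = W_i^*(y-x).
\]
Applying Proposition~\ref{prop:CRS} with $W_i$ replaced by $\widetilde{W}_i$ — and noting that the auxiliary quantity $\widetilde{\Pi}_i(\mathbf{P},P_E)$ of~\eqref{pitilde} is unchanged, since it involves only the input prices and the normalized output, not the labor cost — yields $\overline{\Pi}_i(\mathbf{P},P_E,x) = \widetilde{W}_i^*(\widetilde{\Pi}_i(\mathbf{P},P_E)) = W_i^*(\widetilde{\Pi}_i(\mathbf{P},P_E)-x)$, which is the claimed identity.

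Finally, for part~(iii), Proposition~\ref{prop:CRS} gives $L^*_i = (\widetilde{W}_i')^{-1}(\widetilde{\Pi}_i(\mathbf{P},P_E))$, and since $\widetilde{W}_i'(L) = W_i'(L) + x$ we have $(\widetilde{W}_i')^{-1}(z) = (W_i')^{-1}(z-x)$, whence $L^*_i = (W_i')^{-1}(\widetilde{\Pi}_i(\mathbf{P},P_E)-x)$; the formulas for $q^*_{ij}$ and $E^*_i$ then carry over verbatim. There is no genuine obstacle here: the entire content is the reduction to the static case via $\widetilde{W}_i$, and the only point requiring (trivial) care is checking that $x>0$ preserves strict monotonicity and that the superlinear growth condition survives the addition of the linear term — which is precisely why the paper can assert that the proof is identical to those of Propositions~\ref{prop:profit_max} and~\ref{prop:CRS}.
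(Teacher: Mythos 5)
Your proposal is correct and follows exactly the route the paper intends: the paper gives no separate argument for this proposition, stating only that the proof is identical to those of Propositions~\ref{prop:profit_max} and~\ref{prop:CRS}, and your reduction via the shifted labor cost $\widetilde{W}_i(L) = W_i(L) + xL$ (together with the observations that $\widetilde{W}_i$ inherits Assumptions~\ref{ass:wage_function}--\ref{ass:wage_function2}, that $\widetilde{\Pi}_i$ is unaffected, and that $\widetilde{W}_i^*(y) = W_i^*(y-x)$) is precisely that identification, written out in full.
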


As outlined in the introduction, each firm may default at a random time. The default mechanism is modeled as an optimal stopping problem, in which firms simultaneously choose the trajectory of their input mix and the optimal timing of market exit to maximize expected profits up to default. We assume that the generic firm in sector \( i \) issues debt contracts with maturity \( T \), which provide bondholders with a constant coupon stream \( \kappa_i \) over the interval \( [0, T] \). Upon default, the firm incurs a fixed cost \( K_i \), subject to exponential depreciation at rate \( \gamma_i \).

The expected profit maximization problem for the generic firm in sector \( i \in \{1,\dots,N\} \) is given by:
\begin{equation*}
\max_{\tau_i \in \cT([0,T])} \max_{(\mathbf{q}^i_t,E^i_t,L^i_t)_{t \geq 0}} 
\bbE_x \left[ \int_0^{\tau_i} e^{-r t} \left\{ \Pi_i(\mathbf{P}(t), P_E(t), X^i_t) - \kappa_i \right\} \, dt 
- K_i e^{-(r+\gamma_i)\tau_i} \right],
\end{equation*}
where \( \cT([0,T]) \) denotes the set of \( \bbF \)-stopping times on \( [0,T] \), \( \bbE_x \) denotes expectation conditional on \( X^i_0 = x \), and \( r \) is the capital discount rate.

Using the definition of the maximized instantaneous profit \eqref{eq:max_inst_profit}, this problem can be equivalently rewritten as:
\begin{equation}\label{prob:firm}
v_i(\mathbf{P}, P_E, x) \coloneqq \max_{\tau_i \in \cT([0,T])} 
\bbE_x \left[ \int_0^{\tau_i} e^{-r t} \left\{ \overline{\Pi}_i(\mathbf{P}(t), P_E(t), X^i_t) - \kappa_i \right\} dt 
- K_i e^{-(r+\gamma_i)\tau_i} \right].
\end{equation}

The value function \( v_i(\mathbf{P}, P_E, x) \) represents the equity value of firm \( i \), net of the expected discounted liquidation cost incurred at the time of exit. In problems with endogenous default, the optimal default policy typically takes the form of a threshold strategy (see for example \cite{leland1996optimal},\cite{belanger2004general}, \cite{schmidt2008structural}, \cite{frey2009pricing} among many others). That is, the firm exits the industry and liquidates its assets immediately once the labor shock process \( (X^i_t)_{t \geq 0} \) crosses an endogenously determined upper barrier.

However, in optimal stopping mean-field games, it has been observed \citep{bertucci2018mixed} that an equilibrium with pure (threshold-type) stopping times does not always exist. To capture all equilibria, it is necessary to consider a broader class of strategies, namely randomized stopping times. This implies that, at equilibrium, agents may exit at random times within the interval between the minimal optimal stopping time \( \tau_- \) and the maximal optimal stopping time \( \tau_+ \), rather than exiting precisely at the first hitting time of a threshold. As a result, a characterization of equilibrium solely in terms of stopping times or default thresholds is not always feasible, and alternative methods have been developed.

In particular, the linear programming approach introduced by \cite{bouveret2020mean} circumvents the need to solve individual optimization problems by focusing directly on the evolution of population distributions. This avoids the technical challenges posed by randomized stopping strategies.

\subsection{Linear programming formulation}\label{sec:3.3}
 
Within the linear programming framework, we reformulate the optimal stopping problem~\eqref{prob:firm} as a linear program over a suitable space of measures. We begin by recalling the following definition from \cite{bouveret2020mean}, suitably adapted to our setting.

\begin{definition}[Space of Bounded Measure Flows]
Let $p \geq 1$ and $\cO \subseteq \R$. We define $\cV_p(\overline{\cO})$ as the space of flows $m = (m_t(\cdot))_{t \in [0,T]}$ of bounded measures on $\overline{\cO}$, satisfying the following properties:
\begin{enumerate}
\item For every $t \in [0,T]$, $m_t$ is a bounded measure on $\overline{\cO}$;
\item For every Borel set $A \in \cB(\overline{\cO})$, the mapping $t \mapsto m_t(A)$ is measurable;
\item The $p$-moment condition holds:
\[
\int_0^T \int_{\overline{\cO}} (1 + x^p) \, m_t(dx) dt < \infty.
\]
\end{enumerate}
Each such flow $m \colon[0,T] \times \cB(\overline{\cO}) \to \R_{\geq 0}$ induces a bounded measure on $[0,T] \times \overline{\cO}$ via the product $m_t(dx) dt$. We denote by $\cM_p([0,T]\times\overline{\cO})$ the space of bounded measures on $[0,T] \times \overline{\cO}$ that satisfy Condition (3). 

We equip $\cM_p([0,T]\times\overline{\cO})$ with the topology of weak convergence induced by continuous functions with $p$-growth in $x \in \overline{\cO}$, denoted by $\tau_p$. The space $\cV_p(\overline{\cO})$ is endowed with the corresponding topology induced by the weak convergence of associated measures, also denoted by $\tau_p$.
\end{definition}

\begin{remark}[On the topology $\tau_p$]\label{haussdorf}
The topology $\tau_p$ generalizes the classical weak topology by allowing test functions with polynomial growth. Specifically, a sequence $(\nu^n)_{n \geq 1} \subset \cM_p([0,T]\times\overline{\cO})$ converges to a limit $\nu$ in the topology $\tau_p$ if
\[
\int_0^T \int_{\overline{\cO}} f(t,x) \, \nu^n(dt,dx) \longrightarrow \int_0^T \int_{\overline{\cO}} f(t,x) \, \nu(dt,dx), \quad \text{as } n \to \infty,
\]
for all test functions $f\colon [0,T] \times \overline{\cO} \to \R$ that are measurable in $t$, continuous in $x$, and satisfy a $p$-growth condition of the form
\[
|f(t,x)| \leq C(1 + x^p), \quad \text{for all } (t,x) \in [0,T] \times \overline{\cO},
\]
for some constant $C > 0$.

The space $\cV_p(\overline{\cO})$, endowed with the topology $\tau_p$, is Hausdorff, locally convex, and metrizable. For further details, we refer the reader to \cite{dumitrescu2023linear}.
\end{remark}

For each sector \( i = 1, \dots, N \), we associate a pair \( (m^i, \mu^i) \in \cV_p(\overline{\cO}_i) \times \cM_p([0,T] \times \overline{\cO}_i) \), representing the dynamics of labor costs and defaults among firms in that sector. Specifically, \( m^i_t \) denotes the distribution of labor costs across firms that have not defaulted by time \( t \), while \( \mu^i \) encodes the joint distribution of default times and labor costs. Accordingly, we refer to the collection \( (m^i)_{i=1,\dots,N} \) as the \emph{occupation measure flows}, and \( (\mu^i)_{i=1,\dots,N} \) as the \emph{exit measures}.

The pair \( (m^i, \mu^i) \) satisfies a linear constraint induced by the dynamics of the labor cost process \( X^i \). Assuming that these measures admit sufficiently regular densities, the Fokker–Planck equation for the density flow \( (m^i_t)_{t\in [0,T]} \) reads:
\[
\frac{\partial m^i_t}{\partial t}(t,x) = \crL^*_i m^i_t(t,x) - \mu^i(t,x),
\]
where \( \crL^*_i \) is the adjoint of the generator of the diffusion process \( X^i \), given by
\begin{align*}
\crL_i f(t,x) &= \alpha_i(t,x) \frac{\partial f}{\partial x}(t,x) + \frac{1}{2} \sigma_i^2(t,x) \frac{\partial^2 f}{\partial x^2}(t,x), \\
\crL^*_i f(t,x) &= -\frac{\partial}{\partial x} \big(\alpha_i(t,x) f(t,x)\big) + \frac{1}{2} \frac{\partial^2}{\partial x^2} \big(\sigma_i^2(t,x) f(t,x)\big).
\end{align*}

In other words, the single-agent dynamics \eqref{eq:MF_SDE} is replaced by a Fokker–Planck equation governing the evolution of the population density, and the agent’s stopping time is replaced by the exit measure \( \mu^i \), which appears as a killing term and determines the exit behavior of the entire population.

In practice, however, the measures \( m^i \) and \( \mu^i \) may not possess the regularity required to satisfy the Fokker–Planck equation in its strong form. For instance, when agents exit upon hitting a default threshold, the measure \( \mu^i \) is supported on this threshold and lacks a density. Therefore, we work with a weak formulation of the equation, defined using appropriate test functions. The following definition is derived from the Fokker–Planck equation by multiplying it with a test function and applying integration by parts.

\begin{definition}\label{not:spaces}
For each $i = 1, \dots, N$ and initial distribution $m_0$, let $\cR^i(m_0)$ denote the subset of $\cV_p(\overline{\cO}_i)\times\cM_p([0,T]\times\overline{\cO}_i)$ consisting of all pairs $(m, \mu)$ satisfying the identity:
\begin{equation}\label{eq:weak_FP}
\begin{split}
\int_0^T \int_{\overline\cO_i} u(t,x)\, \mu(dt,dx) &=
\int_{\cO_i} u(0,x)\, m_0(dx) \\
&\quad + \int_0^T \int_{\overline\cO_i}
\left\{ \frac{\partial u}{\partial t}(t,x) +
\crL_i u(t,x) \right\}\,m_t(dx) dt,
\end{split}
\end{equation}
for all test functions $u \in \cC_b^{1,2}([0,T] \times \cO_i)$.
\end{definition}

A pair \( (m, \mu) \in \cR^i(m_0) \) characterizes, in a weak sense, the evolution of the distribution of the labor cost process for a generic firm in sector \( i \), governed by the stochastic differential equation \eqref{eq:MF_SDE} and stopped at a random time. Importantly, for any \( (m, \mu) \in \cR^i(m_0) \), the flow \( (m_t)_{t \in [0,T]} \) does not consist of probability measures: its total mass decreases over time as firms exit the system.

The following lemma establishes the compactness of the set $\cR^i(m_0)$, a key ingredient in proving the existence of an equilibrium; see Appendix \ref{app:2} for the proof.  
\begin{lemma}\label{lem:compactness} 
Suppose Assumption~\ref{ass:SDE} holds. Fix $i \in \{1,\dots,N\}$ and an initial distribution $m_0$ satisfying
\[
\int_{\cO_i}(1 + x^{p'})\,m_0(dx) < \infty \quad \text{for some } p' > p.
\]
Then the set $\cR^i(m_0)$ is weakly\footnote{Here, 'weakly' refers to compactness with respect to the topology $\tau_p \otimes \tau_p$.} compact in $\cV_p(\overline{\cO}_i) \times \cM_p([0,T]\times\overline{\cO}_i)$.
\end{lemma}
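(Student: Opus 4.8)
The plan is to prove weak compactness of $\cR^i(m_0)$ in the topology $\tau_p\otimes\tau_p$ by establishing two properties: relative compactness (precompactness) of the set, and closedness. Together these give compactness. The natural strategy is to first obtain uniform moment bounds on the pairs $(m,\mu)\in\cR^i(m_0)$, then invoke a Prokhorov-type tightness criterion adapted to the $p$-growth topology $\tau_p$, and finally show that the weak-constraint identity \eqref{eq:weak_FP} is preserved under $\tau_p$-convergence.

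First I would derive uniform bounds on the total masses and $p$-moments. The key is to feed suitable test functions into the constraint \eqref{eq:weak_FP}. Taking $u\equiv 1$ shows that the total mass of $\mu$ equals $m_0(\cO_i)$ minus the terminal mass of $m_T$, so $\mu([0,T]\times\overline\cO_i)\le m_0(\cO_i)$ and the mass of each $m_t$ is bounded by $m_0(\cO_i)$; this controls total variation uniformly. To control the $p$-moments I would use a test function behaving like $x^p$ (suitably truncated and smoothed so it lies in $\cC^{1,2}_b$), and exploit the growth bounds on $\alpha_i,\sigma_i$ from Assumption~\ref{ass:SDE}: the term $\tfrac{\partial u}{\partial t}+\crL_i u$ then grows at most like $1+x^p$ (since $|\alpha_i|\le M$ and $\sigma_i^2\le M(1+x^\beta)$ with $\beta\le1$), yielding a Gronwall-type inequality that bounds $\int_{\overline\cO_i}(1+x^p)\,m_t(dx)$ uniformly in $t$, and hence bounds the left-hand side $p$-moment of $\mu$. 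Crucially, the hypothesis that $m_0$ has a finite $p'$-moment for some $p'>p$ gives a uniform bound on the $p'$-moments of the $m_t$ and of $\mu$, which upgrades mere boundedness of $p$-moments to uniform integrability of the weight $1+x^p$ — exactly what is needed for tightness in $\tau_p$ rather than in the ordinary weak topology.

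With these bounds in hand, relative compactness follows from a Prokhorov/de~la~Vall\'ee-Poussin argument: the uniform $p'$-moment bound makes the family of measures $m_t(dx)\,dt$ on $[0,T]\times\overline\cO_i$ tight and uniformly integrable against $1+x^p$, and likewise for the $\mu$'s; since $\tau_p$ is metrizable (Remark~\ref{haussdorf}), relative sequential compactness suffices. So from any sequence $(m^n,\mu^n)\subset\cR^i(m_0)$ I can extract a subsequence converging in $\tau_p\otimes\tau_p$ to some limit $(m,\mu)$, which automatically inherits the $p$-moment condition and thus lies in $\cV_p(\overline\cO_i)\times\cM_p([0,T]\times\overline\cO_i)$.

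The final and most delicate step is closedness: showing the limit $(m,\mu)$ still satisfies \eqref{eq:weak_FP}. For a fixed test function $u\in\cC^{1,2}_b([0,T]\times\cO_i)$, the functions $u(0,\cdot)$, $u(t,x)$ and $\frac{\partial u}{\partial t}+\crL_i u$ appearing in the three integrals must be admissible test functions for $\tau_p$-convergence — i.e. continuous in $x$ with at most $p$-growth. This is where the main obstacle lies: the term $\crL_i u$ involves $\alpha_i\partial_x u$ and $\sigma_i^2\partial_{xx}u$, and since $u\in\cC^{1,2}_b$ its derivatives are bounded while $\alpha_i$ and $\sigma_i^2$ have at most $(1+x^\beta)$-growth with $\beta\le 1\le p$, so $\frac{\partial u}{\partial t}+\crL_i u$ indeed has at most $p$-growth and is continuous in $x$ by the continuity assumptions in Assumption~\ref{ass:SDE}~(1). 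Hence each of the three integral terms passes to the limit under $\tau_p$-convergence, and the identity is preserved, so $(m,\mu)\in\cR^i(m_0)$. The subtle points to watch are verifying continuity (not just measurability) in $x$ of the coefficients so that the integrands are genuine $\tau_p$-test functions, and handling the behavior on the boundary $\partial\cO_i$ of the closure $\overline\cO_i$; the condition $\bbP(\tau_{\cO_i}>T)=1$ ensures the process does not reach the boundary, so no boundary mass is created and the weak formulation remains consistent in the limit. This concludes compactness.
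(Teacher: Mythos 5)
Your proof is correct and follows essentially the same route as the paper's: uniform moment bounds up to order $p'>p$ extracted from the constraint \eqref{eq:weak_FP} with truncated polynomial test functions (the paper uses $u_k(t,x)=(T+1-t)\phi_k(x)$ and iterates over integer moments where you invoke a Gronwall argument), relative $\tau_p$-compactness from the resulting uniform integrability (the paper cites \cite[Corollary A.4]{dumitrescu2023linear} where you invoke Prokhorov/de la Vall\'ee-Poussin), and closedness of $\cR^i(m_0)$ by checking that $\partial_t u+\crL_i u$ is an admissible $\tau_p$ test function under Assumption~\ref{ass:SDE} (a step the paper outsources to \cite[Theorem 2.13]{Dumitrescu2021LP}). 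The only slips are cosmetic: with $u\equiv 1$ the identity \eqref{eq:weak_FP} pins down the total mass of $\mu$ (this formulation has no separate terminal term $m_T$), and the mass and moment bounds on $m$ require a time-dependent test function such as the paper's factor $(T+1-t)$, not $u\equiv 1$ alone.
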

In the linear programming formulation of the optimal stopping problem, the optimization over stopping times is recast as an optimization over occupation flows and exit measures \( (m^i, \mu^i) \). Specifically, the firm's expected profit maximization problem becomes:
\begin{equation}\label{prob:MFG_profit}
\begin{split}
\max_{(m^i,\mu^i) \in \cR^i(m^i_0)} &\int_0^T \int_{\overline{\cO}_i} e^{-r t} \left\{ \overline\Pi_i(\mathbf{P}(t), P_E(t), x) - \kappa_i \right\} m^i_t(dx)\,dt\\
&-\int_0^T \int_{\overline{\cO}_i} K_i e^{-(r+\gamma_i)t} \mu^i(dt,dx).
\end{split}
\end{equation}

The connection between the original optimal stopping problem \eqref{prob:firm} and its linear programming counterpart \eqref{prob:MFG_profit} is rigorously established in \cite{bouveret2020mean} and \cite{Dumitrescu2021LP}. Specifically, defining the value functional
\begin{align*}
\Gamma_i(m^i, \mu^i,\mathbf{P},P_E) &\coloneqq \int_0^T \int_{\overline{\cO}_i} e^{-r t} \left\{ \overline\Pi_i(\mathbf{P}(t), P_E(t), x) - \kappa_i \right\} m^i_t(dx)dt\\
&\quad- \int_0^T \int_{\overline{\cO}_i} K_i e^{-(r+\gamma_i)t} \mu^i(dt,dx),
\end{align*}
it can be shown, under additional assumptions, that the following identity holds:
\[
\Gamma^*_i(\mathbf{P},P_E)\coloneqq\max_{m^i,\mu^i\in\cR^i(m^i_0)}\Gamma_i(m^i, \mu^i,\mathbf{P},P_E) = \int_{\cO_i} v_i(\mathbf{P},P_E,x) \, m^i_0(dx),
\]
where $v_i(\mathbf{P},P_E,x)$ is the value function previously introduced in the optimal stopping problem~\eqref{prob:firm}.

To complete the linear programming formulation, we specify the mechanism for equilibrium price formation. Analogous to the static model in Section~\ref{sec:2}, we introduce time-inhomogeneous net demand functions \( D_i\colon [0,T] \times \R_{\geq 0} \to \R \) for each good \( i = 1, \dots, N \). The value \( D_i(t, P_i(t)) \) at time \( t \) represents the net external demand for good \( i \) at the current time, defined as the difference between external demand and external supply. The time dependence accounts for evolving factors such as population growth or inflation.

We adopt the following assumption, which mirrors the demand conditions in the static model.

\begin{assumption}[Properties of Net Demand Functions]\label{ass:demand_uniform}
For each \( i = 1, \dots, N \), the net demand function \( D_i \) is continuous and strictly decreasing in the price variable. Moreover, it satisfies the limiting behavior:
\[
\lim_{P \to 0} D_i(t, P) = \infty, \qquad \lim_{P \to \infty} D_i(t, P) \leq 0,
\]
uniformly in \( t \in [0,T] \).
\end{assumption}

For each good \( i = 1, \dots, N \), the mean-field market-clearing condition reads:
\begin{equation}\label{eq:MFG_clearing}
\int_{\overline{\cO}_i} F_i(\mathbf{q}_i(t,x), E_i(t,x), L_i(t,x))\, m^i_t(dx) 
= D_i(t, P_i(t)) + \sum_{j=1}^N \int_{\overline{\cO}_i} q_{ji}(t,x)\, m^j_t(dx),
\end{equation}
for any input matrix function \( (\cQ, \mathbf{E}, \mathbf{L}) \colon [0,T] \times \cO_i \to \R_{\geq 0}^{N(N+2)} \), and for any vector of occupation measure flows \( \mathbf{m} \coloneqq (m^i)_{i=1,\dots,N} \).

The left-hand side of \eqref{eq:MFG_clearing} represents the total endogenous supply of good \( i \) from the mean-field firm at time \( t \), computed as the integral of the production function over the distribution of active (non-defaulted) agents. The right-hand side corresponds to the total net demand, including both external and endogenous contributions, again integrated against the current occupation measures.

\subsection{Linear programming MFG Nash equilibrium}
 
We now introduce the notion of market equilibrium within the linear programming MFG framework. To this end, let $\cR(\mathbf{m}_0)$ denote the set of all pairs of measure flows and bounded measures $(\mathbf{m},\boldsymbol{\mu})$ such that $(m^i,\mu^i)\in\cR^i(m^i_0)$ for each $i=1,\dots,N$.

\begin{definition}[Space of Price Vector Functions]\label{def:space_price} 
Let $q > 1$. Define $\cP_q\subset L^q([0,T];\R^N)$ as the space of measurable functions $\mathbf{P} \colon [0,T] \to \R^N_{\geq 0}$, endowed with the topology induced by the $L^q([0,T])$-norm. Let $\cP^+_q$ denote the subspace of $\cP_q$ consisting of functions whose components are strictly positive almost everywhere on $[0,T]$.
\end{definition}

\begin{definition}[Linear Programming MFG Nash Equilibrium]\label{def:MFG_equilibrium}
A linear programming MFG Nash equilibrium is a tuple $(\mathbf{P}^*, \cQ^*, \mathbf{E}^*, \mathbf{L}^*, \mathbf{m}^*, \boldsymbol{\mu}^*)$ consisting of:
\begin{itemize}
\item A price vector function $\mathbf{P}^* \in \cP^q$, for some $q>1$,
\item An input matrix function $(\mathcal Q^*, \mathbf E^*, \mathbf L^*) \colon [0,T] \times \cO_i \to \R^{N(N+2)}_{\geq 0}$,
\item A pair of measure flows and exit measures $(\mathbf{m}^*, \boldsymbol{\mu}^*) \in \cR(\mathbf{m}_0)$,
\end{itemize} 
such that, for each $i = 1, \dots, N$:
\begin{itemize}
\item[(i)] The price function $P^{*}_i$ satisfies the mean-field market-clearing condition \eqref{eq:MFG_clearing}. 
\item[(ii)] The input allocation $(\mathbf{q}^*_i(t,x), E^*_i(t,x), L^*_i(t,x))$ solves the firm's instantaneous profit maximization problem:
\[
\overline{\Pi}_i(\mathbf{P}(t), P_E(t), x) \coloneqq \max_{\mathbf{q}_i, E_i, L_i} \Pi_i(\mathbf{P}(t), \mathbf{q}_i, E_i, L_i, P_E(t), x),
\]
for each $(t,x) \in [0,T] \times \cO_i$.
\item[(iii)] The pair $(m^{i,*}, \mu^{i,*})\in\cR^i(m^i_0)$ solves the linear programming MFG profit maximization problem \eqref{prob:MFG_profit}.
\end{itemize}
\end{definition}

To establish the existence of a linear programming MFG Nash equilibrium, we adopt a strategy analogous to that used for the static model in Section \ref{sec:2}.

We define the Lagrangian function \( \mathcal{L} : \mathcal{P}_q \times \mathcal{R}(\mathbf{m}_0) \to \mathbb{R} \) as follows:
\begin{equation}\label{eq:MFG_lagrangian}
\begin{split}
\mathcal{L}(\mathbf{P}, \mathbf{m}, \boldsymbol{\mu}) &\coloneqq \sum_{i=1}^N \int_0^T \int_{\overline{\mathcal{O}}_i} e^{-r t} \left\{ \overline{\Pi}_i(\mathbf{P}(t), P_E(t), x) - \kappa_i \right\} m^i_t(dx)\,dt \\
&\quad - \int_0^T \int_{\overline{\mathcal{O}}_i} K_i e^{-(r + \gamma_i)t} \mu^i(dt, dx) 
- \int_0^T e^{-r t} \Delta_i(t, P^i(t))\,dt,
\end{split}
\end{equation}
where \( \Delta_i(t, P) \coloneqq \int_1^P D_i(t, z)\,dz \) denotes the consumer surplus function of sector \( i \) at time \( t \), evaluated up to price level \( P \geq 0 \).

In analogy to Definition \ref{def:saddle_point}, we say that a triple $(\mathbf{P}^*, \mathbf{m}^*, \boldsymbol{\mu}^*)\in \cP^q\times\cR(\mathbf{m}_0)$ is a saddle point of $\cL$ on $\cP^q\times\cR(\mathbf{m}_0)$ if and only if it satisfies the Minimax equality:
\begin{equation}\label{eq:MFG_minimax}
\cL(\mathbf{P}^*,\mathbf{m}^*,\boldsymbol{\mu}^*)=\max_{\mathbf{m},\boldsymbol{\mu}}\min_{\mathbf{P}}\cL(\mathbf{P},\mathbf{m},\boldsymbol{\mu})=\min_{\mathbf{P}}\max_{\mathbf{m},\boldsymbol{\mu}}\cL(\mathbf{P},\mathbf{m},\boldsymbol{\mu}).
\end{equation}

We first establish the connection between the notion of a saddle point of $\mathcal{L}$ and a linear programming MFG Nash equilibrium. The proof is provided in Appendix~\ref{app:2}. 

\begin{lemma}\label{lem:MFG_saddle_point}
Suppose the production, labor cost, and net demand functions satisfy Assumptions \ref{ass:prod_function}(b), \ref{ass:wage_function}, \ref{ass:wage_function2}, and \ref{ass:demand_uniform}. Furthermore, assume that the production functions \( F_i \) are strictly concave and the labor cost functions \( W_i \) are strictly convex and differentiable.  

Let \( (\mathbf{P}^*,\mathbf{m}^*,\boldsymbol{\mu}^*)\in\cP^+_q\times\cR(\mathbf{m}_0) \) be a saddle point of \( \cL \). Additionally, let \( (\mathbf{q}^{i,*},E^{i,*},L^{i,*}) \) be a solution to the instantaneous profit maximization problem \eqref{eq:max_inst_profit}.  

Then, the tuple \( (\mathbf{P}^*,\mathbf{m}^*,\boldsymbol{\mu}^*,\cQ^*,\mathbf{E}^*,\mathbf{L}^*) \) constitutes a linear programming MFG Nash equilibrium.
\end{lemma}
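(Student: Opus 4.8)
The plan is to verify directly the three defining conditions of a linear programming MFG Nash equilibrium in Definition~\ref{def:MFG_equilibrium}, extracting condition (iii) from the maximization half of the saddle‑point relation \eqref{eq:MFG_minimax} and condition (i) from its minimization half. Condition (ii) is granted by hypothesis, since $(\mathbf{q}^{i,*},E^{i,*},L^{i,*})$ is assumed to solve \eqref{eq:max_inst_profit}; under the stated strict concavity of $F_i$ and strict convexity of $W_i$ this is in fact the \emph{unique} maximizer furnished by Proposition~\ref{prop:max_inst_profit}, so the allocation entering the equilibrium is well defined and coincides with the one I will use in the envelope computation for (i).

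For condition (iii), I would first note that in $\cL(\mathbf{P}^*,\mathbf{m},\boldsymbol{\mu})$ the consumer‑surplus terms $\int_0^T e^{-rt}\Delta_i(t,P_i^*(t))\,dt$ are independent of $(\mathbf{m},\boldsymbol{\mu})$, so that
\[
\cL(\mathbf{P}^*,\mathbf{m},\boldsymbol{\mu}) = \sum_{i=1}^N \Gamma_i(m^i,\mu^i,\mathbf{P}^*,P_E) + \text{const}.
\]
Since $\cR(\mathbf{m}_0)=\prod_{i=1}^N \cR^i(m^i_0)$ is a Cartesian product and the objective is additively separable across sectors, the maximization over $(\mathbf{m},\boldsymbol{\mu})$ decouples into $N$ independent problems. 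The saddle inequality $\cL(\mathbf{P}^*,\mathbf{m},\boldsymbol{\mu})\le \cL(\mathbf{P}^*,\mathbf{m}^*,\boldsymbol{\mu}^*)$ therefore forces each $(m^{i,*},\mu^{i,*})$ to maximize $\Gamma_i(\cdot,\cdot,\mathbf{P}^*,P_E)$ over $\cR^i(m^i_0)$, which is exactly problem \eqref{prob:MFG_profit}.

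For condition (i), I would use the complementary saddle inequality, namely that $\mathbf{P}^*$ minimizes $\mathbf{P}\mapsto \cL(\mathbf{P},\mathbf{m}^*,\boldsymbol{\mu}^*)$. Because $\mathbf{P}^*\in\cP^+_q$ is strictly positive almost everywhere, it is an interior minimizer, so its Gâteaux derivative in every admissible direction must vanish. Only the profit terms $\sum_i\int_0^T\!\int e^{-rt}\overline{\Pi}_i(\mathbf{P}(t),P_E(t),x)\,m^{i}_t(dx)\,dt$ and the surplus terms $-\sum_i\int_0^T e^{-rt}\Delta_i(t,P_i(t))\,dt$ depend on $\mathbf{P}$; differentiating the latter uses $\partial_P\Delta_i(t,P)=D_i(t,P)$, while differentiating $\overline{\Pi}_i$ in $P_k$ I would treat by the envelope theorem at the optimal allocation of Proposition~\ref{prop:max_inst_profit}, which gives $\partial_{P_k}\overline{\Pi}_i=\delta_{ik}F_i(\mathbf{q}^*_i,E^*_i,L^*_i)-q^*_{ik}$ directly from the form of $\Pi_i$ in \eqref{eq:gain_sector}. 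Setting the derivative with respect to $P_k(t)$ to zero and cancelling $e^{-rt}$ yields, for a.e.\ $t$,
\[
\int_{\overline{\cO}_k} F_k(\mathbf{q}^*_k,E^*_k,L^*_k)\,m^{k}_t(dx)
= D_k(t,P_k^*(t)) + \sum_{i=1}^N \int_{\overline{\cO}_i} q^*_{ik}(t,x)\,m^{i}_t(dx),
\]
which is precisely the mean‑field market‑clearing condition \eqref{eq:MFG_clearing}.

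The main obstacle I anticipate is making this first‑order argument for condition (i) fully rigorous: I must establish Gâteaux differentiability of $\mathbf{P}\mapsto\cL(\mathbf{P},\mathbf{m}^*,\boldsymbol{\mu}^*)$, justify differentiating under the integral against the finite measures $m^{i}_t(dx)\,dt$, and verify the envelope formula for $\overline{\Pi}_i=W_i^*(\widetilde{\Pi}_i-x)$, whose inner maximizers are unique and well behaved under the strict concavity/convexity assumptions. The hypothesis $\mathbf{P}^*\in\cP^+_q$ is essential here: strict positivity guarantees an interior optimum, so the variational inequality collapses to an equality, and it permits passing from the integrated stationarity condition to the pointwise‑in‑$t$ market‑clearing identity via a standard localization (fundamental lemma of the calculus of variations) argument.
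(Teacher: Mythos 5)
Your proposal is correct and follows essentially the same route as the paper's proof: condition (iii) from the maximization half of the saddle-point relation, condition (ii) by hypothesis, and condition (i) from the minimization half via a first-order/envelope (Danskin) argument at the interior point guaranteed by $\mathbf{P}^*\in\cP^+_q$, with differentiation under the integral justified by convexity of $W_i^*$ and dominated convergence. The only cosmetic difference is ordering — the paper first localizes to pointwise-in-$t$ optimality and then differentiates, while you differentiate the integral functional and then localize — but the substance of the argument is identical.
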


As in the static version of the model, the existence of a linear programming MFG Nash equilibrium can be established under either the market viability Assumption \eqref{ass:participation} or, alternatively, by imposing a growth condition on the net demand functions that ensures external supply is sufficiently unbounded.

\begin{assumption}[Growth condition on net demand functions]\label{ass:supply_MFG}
For each \( i = 1, \dots, N \), the net demand function \( D_i \) satisfies the following asymptotic condition: 
\[
\lim_{P \to \infty} \frac{D_i(t,P)}{P^{q-1}} = -\infty,
\]
for some exponent \( q > 1 \), uniformly for all \( t \in [0,T] \).
\end{assumption}

This condition ensures that external supply grows faster than \( P^{q-1} \) as prices tend to infinity. 

Together with additional conditions on the convex conjugates \( W^*_i \), Assumptions \ref{ass:participation} and \ref{ass:supply_MFG} provide sufficient conditions for the existence of a linear programming MFG Nash equilibrium, as summarized in the following theorem.

\begin{theorem}\label{th:MFG_equilibrium}
Assume that the production, labor cost, and net demand functions satisfy Assumptions \ref{ass:prod_function} (b), \ref{ass:wage_function}, \ref{ass:wage_function2}, and \ref{ass:demand_uniform}. Furthermore, suppose the production functions \( F_i \) are strictly concave, and the labor cost functions \( W_i \) are strictly convex and differentiable. Additionally, assume that Assumptions \ref{ass:SDE} holds, and that for each \( i=1,\dots,N \), the initial distribution \( m^i_0 \) satisfies  
\[
\int_{\cO_i} (1 + x^{p'}) \, m_0(dx) < \infty, \quad \text{for some $ p' > p$}.
\]     
Finally, let $q>1$ and suppose that one of the following conditions holds:
\begin{enumerate}
\item Assumption \ref{ass:participation} is satisfied, and there exist positive constants \( C_1, C_2\) and  $y_0$ such that  for all $i=1,\dots,N$, 
\[
C_1 y^q \leq \abs{W^*_i(y)} \leq C_2 y^q, \quad \text{for } y \geq y_0.
\]
\item The net demand functions \( D_i \) satisfy Assumption \ref{ass:supply_MFG} and \( W^*_i(y) = O(y^q) \) for all $i=1,\dots,N$. 
\end{enumerate}

Then, there exists a linear programming MFG Nash equilibrium \( (\mathbf{P}^*,\mathbf{m}^*,\boldsymbol{\mu}^*) \in \cP^+_q \times \cR(\mathbf{m}_0) \). Moreover, the equilibrium price system is unique up to sets of measure zero.
\end{theorem}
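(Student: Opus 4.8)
The plan is to reduce the statement, via Lemma~\ref{lem:MFG_saddle_point}, to the existence of a saddle point of the Lagrangian $\mathcal{L}$ defined in \eqref{eq:MFG_lagrangian}, and then to produce such a saddle point by checking the hypotheses of a convex--concave minimax theorem (e.g. Proposition~VI.2.4 in \cite{Ekeland99}), exactly as in the proof of Theorem~\ref{th:Nash_equilibrium}, but now in the infinite-dimensional setting $\mathcal{P}_q\times\mathcal{R}(\mathbf{m}_0)$. I would minimize over the closed cone $\mathcal{P}_q$ of nonnegative price functions and only afterwards promote the minimizer to $\mathcal{P}_q^+$, so that Lemma~\ref{lem:MFG_saddle_point} applies.

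First I would record the structural properties of $\mathcal{L}$. It is \emph{linear}, hence concave and $\tau_p\otimes\tau_p$-upper semicontinuous, in $(\mathbf{m},\boldsymbol{\mu})$, since every term is the integral of a fixed test function (continuous in $x$, of $p$-growth) against $m^i_t(dx)\,dt$ or $\mu^i(dt,dx)$; moreover $\mathcal{R}(\mathbf{m}_0)$ is convex and, by Lemma~\ref{lem:compactness} together with the moment hypothesis on $m_0^i$, $\tau_p\otimes\tau_p$-compact. In the price variable I would show $\mathcal{L}$ is convex and weakly lower semicontinuous on $L^q$: by Proposition~\ref{prop:max_inst_profit}(ii), $\overline{\Pi}_i(\mathbf{P},P_E,x)=W_i^*(\widetilde{\Pi}_i(\mathbf{P},P_E)-x)$ is convex in $\mathbf{P}$, being the composition of the nondecreasing convex function $W_i^*$ with the convex map $\widetilde{\Pi}_i$, itself a supremum of affine functions of $\mathbf{P}$ by \eqref{pitilde}, while $-\Delta_i(t,\cdot)$ is strictly convex because $D_i$ is strictly decreasing; convexity plus strong lower semicontinuity then gives weak lower semicontinuity.

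The decisive step is coercivity: I must exhibit a fixed $(\overline{\mathbf{m}},\overline{\boldsymbol{\mu}})\in\mathcal{R}(\mathbf{m}_0)$---naturally the ``never-default'' flow $\overline{m}^i_t=\mathrm{Law}(X^i_t)$, $\overline{\mu}^i=\delta_T\otimes\mathrm{Law}(X^i_T)$, which lies in $\mathcal{R}^i(m_0^i)$ by Dynkin's formula and has full mass---such that $\mathcal{L}(\mathbf{P},\overline{\mathbf{m}},\overline{\boldsymbol{\mu}})\to+\infty$ as $\|\mathbf{P}\|_{L^q}\to\infty$. Here the two alternative hypotheses enter through different mechanisms. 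Under (1), the homogeneity bound gives $\widetilde{\Pi}_i\gtrsim\|\mathbf{P}\|$ along the viability direction of Assumption~\ref{ass:participation}, so the lower bound $W_i^*(y)\geq C_1 y^q$ forces the profit term to grow like $\|\mathbf{P}\|_{L^q}^q$, dominating the at-most-linear consumer surplus $\Delta_i(t,P)\leq C+\varepsilon P$ coming from Assumption~\ref{ass:demand_uniform}. Under (2), Assumption~\ref{ass:supply_MFG} yields $-\Delta_i(t,P)\geq \tfrac{M}{q}P^q-C$ with $M$ arbitrarily large, while the bound $\widetilde{\Pi}_i\leq c_iP_i$ and $W_i^*(y)=O(y^q)$ control the profit term by $C\|\mathbf{P}\|_{L^q}^q$, so a large $M$ makes the surplus term win. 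I expect this coercivity estimate---reconciling the two growth regimes with the order-$q$ integrability of the profit functional---to be the main obstacle.

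With concavity, upper semicontinuity and compactness in $(\mathbf{m},\boldsymbol{\mu})$, and convexity, lower semicontinuity and coercivity in $\mathbf{P}$, the minimax theorem yields a saddle point $(\mathbf{P}^*,\mathbf{m}^*,\boldsymbol{\mu}^*)\in\mathcal{P}_q\times\mathcal{R}(\mathbf{m}_0)$ satisfying \eqref{eq:MFG_minimax}. To apply Lemma~\ref{lem:MFG_saddle_point} I would then verify $\mathbf{P}^*\in\mathcal{P}_q^+$: if $P_i^*=0$ on a set of positive Lebesgue measure, a one-sided perturbation there, combined with the first-order expansion of $\mathcal{L}$ and $\lim_{P\to0}D_i(t,P)=+\infty$, strictly lowers $\mathcal{L}$, contradicting minimality---mirroring the corresponding argument in Theorem~\ref{th:Nash_equilibrium}. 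Finally, uniqueness of the price system up to null sets follows from the strict convexity of $\mathbf{P}\mapsto\mathcal{L}(\mathbf{P},\mathbf{m}^*,\boldsymbol{\mu}^*)$ induced by the strictly decreasing demands, while the existence of optimal inputs is supplied by Proposition~\ref{prop:max_inst_profit}(i).
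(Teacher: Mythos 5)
Your proposal follows the paper's own route essentially step for step: reduction to a saddle point of $\cL$ via Lemma~\ref{lem:MFG_saddle_point}; convexity and lower semicontinuity in $\mathbf{P}$, concavity and upper semicontinuity in $(\mathbf{m},\boldsymbol{\mu})$ together with the convexity and $\tau_p\otimes\tau_p$-compactness of $\cR(\mathbf{m}_0)$ from Lemma~\ref{lem:compactness}; the Ekeland--T\'emam minimax theorem; coercivity in $\|\mathbf{P}\|_{L^q}$ under each of the two alternative hypotheses, combined with reflexivity of $L^q$ to attain the infimum; the perturbation argument at zero prices to get $\mathbf{P}^*\in\cP^+_q$; and uniqueness from strict convexity in $\mathbf{P}$. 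Your explicit ``never-default'' pair $\overline{m}^i_t=\mathrm{Law}(X^i_t)$, $\overline{\mu}^i=\delta_T\otimes\mathrm{Law}(X^i_T)$ is a legitimate (indeed cleaner than the paper's) admissible test pair for the coercivity bound, since it satisfies \eqref{eq:weak_FP} by Dynkin's formula and inherits the moment bounds of Assumption~\ref{ass:SDE}.

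There is, however, one genuine gap: your one-line justification of upper semicontinuity of $\cL$ in $(\mathbf{m},\boldsymbol{\mu})$, namely that ``every term is the integral of a fixed test function (continuous in $x$, of $p$-growth).'' This is false for a generic $\mathbf{P}\in\cP_q$. The integrand $e^{-rt}\{W^*_i(\widetilde{\Pi}_i(\mathbf{P}(t),P_E(t))-x)-\kappa_i\}$ is indeed measurable in $t$ and continuous in $x$, but the topology $\tau_p$ of Remark~\ref{haussdorf} only tests against functions obeying $|f(t,x)|\le C(1+x^p)$ \emph{uniformly} in $(t,x)$, whereas here the natural bound in $x$ is $W^*_i\bigl(\widetilde{\Pi}_i(\mathbf{P}(t),P_E(t))\bigr)\sim P_i(t)^q$, which is merely in $L^1([0,T])$ when $\mathbf{P}\in L^q$ is unbounded. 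So $\tau_p$-convergence of $(m^{i,n})_n$ does not directly yield convergence of the profit integrals, and the minimax theorem cannot be invoked yet. This is precisely where the paper spends its second step: approximate $(\mathbf{P},P_E)$ by \emph{bounded continuous} price functions (for which the integrand is a valid $\tau_p$ test function), and control the approximation error uniformly in $n$ using the convexity of $W^*_i$, the uniform mass bound on $m^{i,n}$, the uniform integrability of $W^*_i(\widetilde{\Pi}_i(\mathbf{P}^m,P^m_E))$ (this is where the growth bound $W^*_i(y)=O(y^q)$, respectively the two-sided bound in condition (1), is needed), and Vitali's convergence theorem. The conclusion you assert is true, but it requires this argument or an equivalent one. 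A secondary, harmless slip: under condition (2) there is nothing for the surplus term to ``win'' against --- the profit term enters $\cL$ with a positive sign and $W^*_i\ge 0$, so it is bounded below by a constant and can only help coercivity; the real role of $W^*_i(y)=O(y^q)$ there is to keep $\cL(\mathbf{P},\cdot,\cdot)$ finite on all of $\cP_q$ and to make the uniform-integrability step above work.
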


\begin{proof}
By Lemma \ref{lem:MFG_saddle_point}, the existence of a linear programming MFG Nash equilibrium reduces to establishing the existence of a saddle point for $\cL$. We will establish the existence of a saddle point on $\cP^+_q \times \cR(\mathbf{m}_0)$. To prove this, we follow the approach outlined in Theorem \ref{th:Nash_equilibrium} and verify that the Lagrangian function defined in \eqref{eq:MFG_lagrangian} satisfies the criteria of Proposition VI.2.3 in \cite{Ekeland99}. The proof is structured in a series of steps.

\step[Lower semi-continuity and convexity of $\cL$ in $\mathbf{P}$]
By Assumption \ref{ass:demand_uniform}, the Lagrangian $\cL(\mathbf{P},\mathbf{m},\boldsymbol{\mu})$ is (strictly) convex with respect to $\mathbf{P}$. To establish its lower semicontinuity in the price variable, consider a sequence $(\mathbf{P}^n)_{n\geq 1}$ converging to a limit $\mathbf{P}$ in $L^q([0,T])$. Without loss of generality, we may assume (by selecting a subsequence if necessary) that this convergence occurs almost everywhere on $[0,T]$. By Fatou's lemma, it follows that, for each $i=1,\dots,N$:
\[
\liminf_{n\to \infty} \int_0^T\int_0^\infty  \overline\Pi_i(\mathbf{P}^n(t),P_E(t),x)\, m^i_t(dx)dt\geq 
\int_0^T\int_0^\infty  e^{-rt}\overline\Pi_i(\mathbf{P}(t),P_E(t),x)\, m^i_t(dx)dt.
\]
Next, Assumption \ref{ass:demand_uniform} ensures that, for every $\varepsilon>0$, there exist constants $C>0$ such that:
\[
\Delta_i(t,P)\leq\varepsilon P + C,\quad \text{for all $P>0$}.
\]
uniformly over $t\in [0,T]$ and for each $i=1,\dots,N$. Applying Fatou's lemma again, we obtain:
$$
\liminf_{n\to \infty}\int^T_0 e^{-r t}\{-\Delta_i(t,P^n_i(t))\} dt \geq  \int^T_0 e^{-r t}\{-\Delta_i(t,P_i(t))\}dt.
$$
Combining these results, we deduce:
\[
\liminf_{n\to \infty}\cL(\mathbf{P}^n,\mathbf{m},\boldsymbol{\mu})\geq \cL(\mathbf{P},\mathbf{m},\boldsymbol{\mu}),\quad \text{for any $(\mathbf{m},\boldsymbol{\mu})\in\cR(\mathbf{m}_0)$},
\]
Thus, we conclude that the Lagrangian $\cL$ is lower semi-continuous with respect to the price variable. 

\step[Upper semi-continuity and concavity of $\cL$ in $(\mathbf{m},\boldsymbol{\mu})$]

We now verify that the Lagrangian $\cL$ satisfies the required properties as a function of $(\mathbf{m},\boldsymbol{\mu})\in\cR(\mathbf{m}_0)$. By linearity, it is straightforward to observe that $\cL$ is concave with respect to $(\mathbf{m},\boldsymbol{\mu})$.

To establish that $\cL$ is upper semi-continuous in $(\mathbf{m},\boldsymbol{\mu})$, let $(\mathbf{m}^n,\boldsymbol{\mu}^n)_{n\geq 1}\subset \cR(\mathbf{m}_0)$  be a sequence converging to  $(\mathbf{m},\boldsymbol{\mu})\in\cR(\mathbf{m}_0)$ under the topology $\tau^N_p\otimes\tau^N_p$. Moreover, consider sequences of bounded continuous functions $(P^{m}_E)_{m\geq 1}$ and $(\mathbf{P}^m)_{m\geq 1}$ that approximate $P_E$ and $P$ in $L^{q}([0,T])$.

We now decompose \(\cL\), explicitly highlighting its dependence on the carbon price \( P_E \):  
\begin{align}\label{estimL}
\cL(\mathbf{P},P_E,\mathbf{m}^n,\boldsymbol{\mu}^n) &= \cL(\mathbf{P},P_E,\mathbf{m}^n,\boldsymbol{\mu}^n)-\cL(\mathbf{P}^m,P^{m}_E,{\mathbf{m}}^n,{\boldsymbol{\mu}}^n)\\
&\quad+\cL(\mathbf{P}^m,P^{m}_E,\mathbf{m}^n,\boldsymbol{\mu}^n). \label{estimL2}
\end{align}  

To estimate the first term, we utilize the explicit form of the maximized profit function:  
\[
\overline\Pi_i(\mathbf{P},P_E,x) = W^*_i\left(\widetilde\Pi_i(\mathbf{P},P_E) - x\right).
\]  
Applying the convexity of the convex conjugate \( W^*_i \) and the relative compactness of the sequence \( (m^{i,n})_{n\geq 1} \) in the \(\tau^i_p\)-topology (as established in Lemma \ref{lem:compactness}), we obtain:  
\begin{align*}
&\abs*{\cL(\mathbf{P},P_E,\mathbf{m}^n,\boldsymbol{\mu}^n)-\cL(\mathbf{P}^m,P^{m}_E,{\mathbf{m}}^n,{\boldsymbol{\mu}}^n)} \\
&\leq \sum^N_{i=1}\int^T_0\int_{\overline\cO_i}\abs*{W^*_i\left(\widetilde\Pi_i(\mathbf{P}(t),P_E(t)) - x\right) - W^*_i\left(\widetilde\Pi_i(\mathbf{P}^m(t),P^{m}_E(t)) - x\right)}\,m^{i,n}_t(dx)dt \\
&\leq \sum^N_{i=1}\int^T_0\abs*{W^*_i\left(\widetilde\Pi_i(\mathbf{P}(t),P_E(t))\right) - W^*_i\left(\widetilde\Pi_i(\mathbf{P}^m(t),P^{m}_E(t))\right)} \int_{\overline\cO_i} m^{i,n}_t(dx)dt \\
&\leq C\sum^N_{i=1}\int^T_0\abs*{W^*_i\left(\widetilde\Pi_i(\mathbf{P}(t),P_E(t))\right) - W^*_i\left(\widetilde\Pi_i(\mathbf{P}^m(t),P^{m}_E(t))\right)}\,dt,
\end{align*}  
for some constant \( C>0 \) independent of \( n \).  

By the definition of \( \widetilde \Pi_i \) and Assumption \ref{ass:prod_function} (b), we have \( \widetilde \Pi_i(\mathbf{P}) \leq c P_i \), and since \( \widetilde \Pi_i \) is convex as the pointwise supremum of linear functions, it follows that \( \widetilde \Pi_i \) is continuous. Similarly, \( W^*_i \) is a proper convex function and hence continuous.  

Thus, by extracting a subsequence if necessary, we may assume that \( (P^{m}_E)_{m\geq 1} \) and \( (\mathbf{P}^m)_{m\geq 1} \) converge to their respective limits almost everywhere on \( [0,T] \). Consequently,  
\begin{equation}\label{eq:as_limit}
\abs*{W^*_i\left(\widetilde\Pi_i(\mathbf{P}(t),P_E(t))\right) - W^*_i\left(\widetilde\Pi_i(\mathbf{P}^m(t),P^{m}_E(t))\right)} \to 0, \quad \text{a.e. on \( [0,T] \).}
\end{equation} 

Moreover, under both conditions (1) and (2) of the theorem, the upper bound \( \widetilde \Pi_i(\mathbf{P}) \leq c P_i \), together with the convergence of the sequences \( (\mathbf{P}^m) \) and \( (P^{m}_E) \) in \( L^q([0,T]) \), implies that the sequence of functions
\[
\left(W^*_i(\widetilde\Pi_i(\mathbf{P}^m,P^{m}_E)\right)_{m\geq 1}
\]  
is uniformly integrable. Consequently, the entire expression in \eqref{eq:as_limit} is also uniformly integrable.  

Applying Vitali's convergence theorem, we deduce that  
\begin{equation}\label{eq:Delta_limit}
\abs*{\cL(\mathbf{P},P_E,\mathbf{m}^n,\boldsymbol{\mu}^n)-\cL(\mathbf{P}^m,P^{m}_E,{\mathbf{m}}^n,{\boldsymbol{\mu}}^n)} \to 0, \quad \text{as } m\to\infty,
\end{equation}  
uniformly over \( n\geq 1 \).  

Following the same reasoning, we obtain  
\begin{equation}\label{eq:Delta_limit2}
\cL(\mathbf{P}^m,P^{m}_E,\mathbf{m},\boldsymbol{\mu})\to\cL(\mathbf{P},P_E,\mathbf{m},\boldsymbol{\mu}), \quad \text{as } m\to\infty.
\end{equation}

Regarding the term in \eqref{estimL2}, using the definition of convergence in the topology \( \tau_p \) and the continuity of the sequences \( (P^{m}_E)_{m \geq 1} \) and \( (\mathbf{P}^m)_{m \geq 1} \), we obtain:  
\begin{align*}
\lim_{n\to\infty}\cL(\mathbf{P}^m,P^{m}_E,\mathbf{m}^n,\boldsymbol{\mu}^n) &= \sum^N_{i=1}\int_0^T\int_{\overline\cO_i} e^{-rt} \left\{W^*_i\left(\widetilde\Pi_i(\mathbf{P}^m(t),P^{m}_E(t))-x\right) -\kappa_i\right\}\,m^i_t(dx) dt \\
&\quad -\sum^N_{i=1} \int_0^T\int_{\overline\cO_i} K_i e^{-(r+\gamma_i)t}\,\mu^i(dt,dx) - \int^T_0 e^{-r t}\Delta_i(t,P^{m}_i(t))\,dt \\
&= \cL(\mathbf{P}^m,P^{m}_E,\mathbf{m},\boldsymbol{\mu}).
\end{align*}  
Combining this result with the convergence established in Equations \eqref{eq:Delta_limit}–\eqref{eq:Delta_limit2} and substituting back into Equation \eqref{estimL}, we conclude:  
\[
\lim_{n\to\infty}\cL(\mathbf{P},P_E,\mathbf{m}^n,\boldsymbol{\mu}^n) = \cL(\mathbf{P},P_E,\mathbf{m},\boldsymbol{\mu}),\quad \text{for any } \mathbf{P} \in \cP_q.
\]  
Thus, we have established that the Lagrangian \( \cL \) is continuous with respect to \( (\mathbf{m},\boldsymbol{\mu}) \in \cR(\mathbf{m}_0) \) and, in particular, it is lower semi-continuous.

\step[Existence of a saddle point of $\cL$ on $\cP_q\times\cR(\mathbf{m}_0)$]

It is straightforward to verify that $\cR(\boldsymbol{m}_0)$ is convex within $\cV^N_p\times\cM^N_p$. Furthermore, by Lemma \ref{lem:compactness}, $\cR(\boldsymbol{m}_0)$ is the product of weakly compact subsets of $\cV^N_p\times\cM^N_p$. As a result, $\cR(\boldsymbol{m}_0)$ is weakly compact in $\cV^N_p\times\cM^N_p$ under the topology  $\tau^N_p\otimes\tau^N_p$. 

By applying Proposition VI.2.3 of \cite{Ekeland99}, we deduce: 
\begin{equation}\label{eq:inf_max}
\max_{\mathbf{m},\boldsymbol{\mu} \in\mathcal R(\mathbf m_0)} \inf_{\mathbf P\in \cP_q} \mathcal L(\mathbf P, \mathbf m, \boldsymbol{\mu}) = \inf_{\mathbf P\in \cP_q}\max_{\mathbf{m},\boldsymbol{\mu} \in \mathcal R(\mathbf m_0)}  \mathcal L(\mathbf P, \mathbf m, \boldsymbol{\mu}) 
\end{equation}

To complete the proof of the existence of the saddle point, by Proposition VI.1.2 of \cite{Ekeland99}, it suffices to show that the infimum on the right-hand side is attained.  

To this end, we establish that the maximized Lagrangian is coercive in the \( L^q([0,T]) \)-norm:
\begin{equation}\label{eq:Lq_coercitivity}
\max_{\mathbf{m},\boldsymbol{\mu}}  \mathcal L(\mathbf P, \mathbf m, \boldsymbol{\mu})\to\infty,\quad\text{as $\norm{\mathbf{P}}_{L^q}\to +\infty$}.  
\end{equation}

Assume first that condition (1) of the theorem holds. By Assumption \ref{ass:participation}, we can fix a pair \( (\overline{\cQ},\overline{\mathbf{E}}) \in \mathbb{R}^{N(N+1)}_{\geq 0} \) and a price vector \( \mathbf{P} \in \mathbb{R}^N_{>0} \) such that  
\[
\overline{P}_i F_i(\overline{\mathbf{q}}_i, \overline{E}_i, 1) - \sum_{j=1}^N \overline{P}_j \overline{q}_{ij} > 0, \quad \text{for all } i = 1, \dots, N.
\]

Next, define a function \( k:[0,T] \to \{1,\dots,N\} \) such that  
\begin{equation}\label{eq:price_k}
\frac{P_{k(t)}(t)}{\overline P_{k(t)}} \geq \frac{P_i(t)}{\overline P_i},\quad \text{for all \( t \in [0,T] \) and \( i=1,\dots,N \)}.
\end{equation}

Using the definition of the maximum and the explicit formula for the maximized instantaneous profit functions, we obtain:
\begin{equation}\label{eq:maxL_estm}
\begin{split}
\max_{\mathbf{m},\boldsymbol{\mu}}\mathcal L(\mathbf P, \mathbf m, \boldsymbol{\mu})&\geq\int_0^T\int_{\overline\cO_{k(t)}}  e^{-rt}\left\{W^*_{k(t)}\left(\widetilde\Pi_{k(t)}(\mathbf{P}(t))-x\right)-\kappa_{k(t)}\right\} m^{k(t)}_0(dx)dt\\
&\quad -\sum^N_{i=1}\int^T_0 e^{-r t}\Delta_{i}(t,P_i(t))\,dt,
\end{split}
\end{equation}
where, for simplicity of notation, we suppress the dependence of \( \widetilde\Pi_{k(t)} \) on \( P_E(t) \).

The optimized profit function satisfies  
\begin{align*}
\widetilde\Pi_{k(t)}(\mathbf{P}(t)) &\geq  P_{k(t)}(t) F_{k(t)}(\mathbf{\overline q}_{k(t)}, \overline E_{k(t)},1) - \sum_{j=1}^N P_j(t) \overline q_{j} - P_E(t) \overline E_{k(t)}  \\
&\geq \frac{P_{k(t)}(t)}{\overline P_{k(t)}}\left\{\overline P_{k(t)} F_{k(t)}(\mathbf{\overline q}_{k(t)}, \overline E_{k(t)},1) - \sum_{j=1}^N \overline P_j \overline q^{j}  \right\}- P_E(t) \overline E_{k(t)},  
\end{align*}
for all \( t\in [0,T] \).  

Moreover, in view of \eqref{eq:price_k}, there exists a constant \( c>0 \) such that  
\[
P_{k(t)}(t) \geq c\|\mathbf P(t)\|_q,\quad\text{for all $t\in[0,T]$}.
\]
It follows that  
\[
\widetilde\Pi_{k(t)}(\mathbf{P}(t)) \geq c' \|\mathbf P(t)\|_q - P_E(t) \overline E_{k(t)},\quad \text{for all \( t\in [0,T] \)}
\]
for some constant \( c' > 0 \).
 
This, together with condition (1) of the theorem, leads to the following bound for the first term on the right-hand side of \eqref{eq:maxL_estm}, for sufficiently large \( \|\mathbf P\|_{L^q} \):  
\begin{align*}
&\int_0^T\int_{\overline\cO_{k(t)}}  e^{-rt}\left\{W^*_{k(t)}\left(\widetilde\Pi_{k(t)}(\mathbf{P}(t))-x\right)-\kappa_{k(t)}\right\} m^{k(t)}_0(dx)dt\\
&\geq C +\int_0^T\int_{\overline\cO_{k(t)}}  e^{-rt} W^*_{k(t)}\left( c' \|\mathbf P(t)\|_q - P_E(t) \overline E_{k(t)}-x\right) m^{k(t)}_0(dx)dt\\
&\geq C + \varepsilon\int_0^T\int_{\overline\cO_{k(t)}}  e^{-rt} \left( c' \|\mathbf P(t)\|_q - P_E(t) \overline E_{k(t)}-x\right)^{q}_+ m^{k(t)}_0(dx)dt\\
&\geq C + \varepsilon\norm{\mathbf P}^q_{L^q}
\end{align*}
where the constant \( C \) and the strictly positive constant \( \varepsilon \) may change from line to line. To derive the last inequality, we use the fact that, without loss of generality, we may assume  
\[
\int_{\overline{\cO}_j} m^j_0(dx) > K, \quad \text{for all } j = 1, \dots, N,
\]
for some constant \( K > 0 \).

For the second term on the right-hand side of \eqref{eq:maxL_estm}, by Assumption \ref{ass:demand_uniform}, we know that for every \( \varepsilon'>0 \), there exists a constant \( C'>0 \) such that:
\[
\int^T_0 e^{-rt}\Delta_i(t,P^i_t)\,dt\leq\varepsilon'\norm{P^i}_{L^1} + C',\quad \text{for all \( i=1,\dots,N \)}.
\]
Applying this bound and using Hölder's inequality, we obtain:
\begin{equation}\label{eq:demand_holder}
-\sum^N_{i=1}\int^T_0 e^{-rt}\Delta_i(t,P_i(t))\,dt\geq- \varepsilon'\norm{\mathbf{P}}_{L^{q}}-C'N.  
\end{equation}
Combining this with the previous estimate, we conclude that there exist constants \( C'' \), as well as \( c'', \varepsilon''>0 \), such that:
\[
\max_{\mathbf{m},\boldsymbol{\mu}}  \mathcal L(\mathbf P, \mathbf m, \boldsymbol{\mu}) \geq C'' + \varepsilon'' \|\mathbf P\|_{L^q}^{q},\quad \text{for } \|\mathbf P\|_{L^q} \geq c''.
\]
This confirms that the coercivity condition \eqref{eq:Lq_coercitivity} is satisfied under Assumption (1) of the theorem.

Assume now that condition (2) of the theorem holds. By Assumption~\ref{ass:supply_MFG}, for every $\varepsilon > 0$, there exist constants $C \in \mathbb{R}$ and $c > 0$ such that  
\[
\Delta_i(t, P) \leq C - \varepsilon P^q \quad \text{for all } t \in [0,T] \text{ and } P \geq c.
\]
Consequently, for any $\varepsilon' > 0$, there exist constants $C' \in \mathbb{R}$ and $c' > 0$ such that  
\[
\sum^N_{i=1} \int_0^T e^{-rt} \Delta_i(t, P_i(t)) \, dt < C' - \varepsilon' \|\mathbf{P}\|^q_{L^q}, \quad \text{whenever } \|\mathbf{P}\|_{L^q} \geq c'.
\]
Using this, we obtain the estimate:
\begin{equation}\label{eq:maxL_estm2}
\begin{split}
\max_{\mathbf{m},\boldsymbol{\mu}} \mathcal L(\mathbf{P}, \mathbf{m}, \boldsymbol{\mu}) 
&\geq \sum^N_{i=1} \int_0^T \int_{\overline{\cO}_i}  e^{-rt} \left\{ W^*_{i} \left(\widetilde{\Pi}_{i}(\mathbf{P}(t)) - x \right) - \kappa_{i} \right\} m^{i}_0(dx)dt \\
&\quad + \varepsilon'\|\mathbf{P}\|^q_{L^q}-C'.
\end{split}
\end{equation}

By condition (2) of the theorem, we have that \( W^*_i(y) = O(y^q) \) for all \( i = 1, \dots, N \). Combining this with the upper bound \( \widetilde{\Pi}_i(\mathbf{P}) \leq c P_i \), we obtain that for sufficiently large \( \|\mathbf{P}(t)\|_q \),
\begin{align*}
\int_{\overline{\cO}_i} W^*_{i}\left(\widetilde{\Pi}_{i}(\mathbf{P}(t)) - x \right)\, m^i_0(dx) 
&\leq \varepsilon'' \int_{\overline{\cO}_i} \left(\widetilde{\Pi}_{i}(\mathbf{P}(t)) - x \right)^q_+ \,m^i_0(dx) \\
&\leq \varepsilon'' \int_{\overline{\cO}_i} \Ind_{x\leq\widetilde{\Pi}_i(\mathbf{P}(t))} \left(\widetilde{\Pi}_{i}(\mathbf{P}(t))^q + x^q \right)\,m^i_0(dx) \\
&\leq \varepsilon'' P_i(t)^q,
\end{align*}
for some constant \( \varepsilon'' > 0 \), which may vary from line to line.

By integrating both sides with respect to \( t \in [0,T] \) and summing over \( i = 1, \dots, N \), we obtain that there exists a constant \( c'' > 0 \) such that:
\[
\sum^N_{i=1} \int_0^T \int_{\overline{\cO}_i} W^*_{i} \left(\widetilde{\Pi}_{i}(\mathbf{P}(t)) - x \right)\,m^{i}_0(dx)dt
\leq \varepsilon'' \|\mathbf{P}\|^q_{L^q}, \quad \text{for } \|\mathbf{P}\|_{L^q} \geq c''.
\]
This immediately implies that:
\[
\sum^N_{i=1} \int_0^T \int_{\overline{\cO}_i} W^*_{i} \left(\widetilde{\Pi}_{i}(\mathbf{P}(t)) - x \right)\,m^{i}_0(dx)dt = O(\|\mathbf{P}\|^q_{L^q}).
\]
Substituting this into \eqref{eq:maxL_estm2} establishes the coercivity condition \eqref{eq:Lq_coercitivity}.

Finally, from the coercivity of the maximized Lagrangian in the \( L^{q}([0,T]) \)-norm, it follows that if there exists a price vector function \( \mathbf{P}^* \) such that the infimum on the right-hand side of Equation \eqref{eq:inf_max} is attained, then \( \mathbf{P}^* \in \mathcal{K} \), where \( \mathcal{K} \) is a closed, bounded, and convex subset of \( \mathcal{P}_q \) defined by
\[
\mathcal{K} = \left\{ \mathbf{P} \in\cP_q : \|\mathbf{P}\|_{L^q} \leq K \right\},
\]
for some constant \( K > 0 \). Since \( L^q([0,T],\R^N)\) is a reflexive Banach space for \( 1 < q < \infty \), it follows from Kakutani’s theorem (see \cite[Theorem V.4.2]{Conway1985}) that \( \mathcal{K} \) is weakly compact, ensuring that the infimum in \eqref{eq:inf_max} is attained.

This completes the proof of the existence of a saddle point \( (\mathbf{P}^*, \mathbf{m}^*, \boldsymbol{\mu}^*) \in \mathcal{P}_q \times \mathcal{R}(\mathbf{m}_0) \).

\step[Almost sure positivity and uniqueness of $\mathbf{P}^*$]

To apply Lemma \ref{lem:MFG_saddle_point}, it remains to verify that \( \mathbf{P}^* \in \cP^+_q \). 

By definition of a saddle point, \( \mathbf{P}^* \) minimizes the Lagrangian \( \mathcal{L}(\mathbf{P}, \mathbf{m}^*,\boldsymbol{\mu}^*) \). Suppose, for contradiction, that there exists a measurable set \( B \subseteq [0,T] \) with \( \operatorname{Leb}(B) > 0 \) and some \( i \in \{1, \dots, N\} \) such that \( P^{*}_i(t) = 0 \) for all \( t \in B \). 

Consider the perturbed price function \( \mathbf{P}^{h} \in \cP_q \) for some \( h \in (0,1) \), defined as:
\[
P^{h}_j(t) = P^{*}_j(t), \quad \forall j \neq i, \quad \forall t \in [0,T],
\]
and
\[
P^{h}_i(t) =
\begin{cases}
P^{*}_i(t), & t \in [0,T] \setminus B, \\
h, & t \in B.
\end{cases}
\]

{By the mean value theorem, there exists a function \( \mathbf{v}^h \in \cP_q \) such that
\begin{align*}
v^{h}_j(t) &= P^{*}_j(t), \quad \forall j \neq i, \quad \forall t \in [0,T], \\
v^{h}_i(t) &= P^{*}_i(t), \quad \forall t \in [0,T] \setminus B, \\
v^{h}_i(t) &\in (0,h), \quad \forall t \in B,
\end{align*}
and satisfying
\begin{equation}\label{eq:mean_value}
\mathcal{L}(\mathbf{P}^h,\mathbf{m}^*,\boldsymbol{\mu}^*) - \mathcal{L}(\mathbf{P}^*, \mathbf{m}^*,\boldsymbol{\mu}^*) = h \int_B \frac{\delta}{\delta P_i}\mathcal{L}(\mathbf{v}^h, \mathbf{m}^*,\boldsymbol{\mu}^*)(t)\, dt,
\end{equation}
where \( \frac{\delta}{\delta P_i} \) denotes the Gâteaux derivative of \( \mathcal{L} \) with respect to \( P_i \). In particular,
\[
\frac{\delta}{\delta P_i}\mathcal{L}(\mathbf{v}^h, \mathbf{m}^*,\boldsymbol{\mu}^*) (t)= \sum_{j=1}^{N} \frac{\partial}{\partial P_i} \int_{\overline{\cO}_j}  e^{-rt} \overline{\Pi}_j(\mathbf{v}^h(t), x) \, m^{j,*}_t(dx) - e^{-r t} D_i(t, v^{h}_i(t)).
\]
For notational simplicity, we suppress the dependence on the carbon price.}

Following the same argument as in the proof of Lemma~\ref{lem:MFG_saddle_point}, we can exchange differentiation and integration over \( m^{j,*}_t \), yielding
\begin{equation}\label{eq:L_derivative}
\begin{split}
\int_B \frac{\delta}{\delta P_i}\mathcal{L}(\mathbf{v}^h, \mathbf{m}^*,\boldsymbol{\mu}^*)(t)\, dt &= \sum_{j=1}^{N} \int_B \int_{\overline{\cO}_j}  e^{-rt} \frac{\partial}{\partial P_i} \overline{\Pi}_j(\mathbf{v}^h(t), x) \, m^{j,*}_t(dx) \, dt \\
&\quad - \int_B e^{-r t} D_i(t, v^{h}_i(t)) \, dt.
\end{split}
\end{equation}

For the first term, we use the explicit formulas for the maximized instantaneous profit and the optimal labor choice from Proposition~\ref{prop:max_inst_profit}, which give
\begin{equation}\label{eq:bound_der}
\frac{\partial}{\partial P_i} \overline{\Pi}_j(\mathbf{v}^h(t), x) = (W^*_j)'\left(\widetilde{\Pi}_j(\mathbf{v}^h(t)) - x\right) \, \partial_{P_i} \widetilde{\Pi}_j(\mathbf{v}^h(t)).
\end{equation}
{By Danskin’s theorem, the partial derivative of $\widetilde{\Pi}_j$ with respect to $P_i$ is
\[
\frac{\partial}{\partial P_i} \widetilde{\Pi}_j(\mathbf{v}^h(t)) = \Ind_{i=j} \left\{ F_j\left(\widetilde{\mathbf{q}}^{*}_j(\mathbf{v}^h(t)), \widetilde{E}^{*}_j(\mathbf{v}^h(t)), 1\right) - q^*_{ji}(\mathbf{v}^h(t)) \right\} \leq c \Ind_{j=i},
\]
where the inequality follows from Assumption~\ref{ass:prod_function}.}  
Substituting this into~\eqref{eq:bound_der}, and using both the upper bound \( \widetilde{\Pi}_i(\mathbf{P}) \leq c P_i \) and the convexity of \( W^*_i \), we obtain
\begin{align*}
\frac{\partial}{\partial P_i} \overline{\Pi}_j(\mathbf{v}^h(t), x) \leq c \Ind_{j=i} \, (W^*_j)'\left(c v^{h}_j(t) - x\right).
\end{align*}
Integrating both sides and summing over \( j=1,\dots,N \), we get
\begin{align*}
\sum^N_{j=1} \int_B \int_{\overline{\cO}_j}  e^{-rt} \frac{\partial}{\partial P_i} \overline{\Pi}_j(\mathbf{v}^h(t), x) \, m^{j,*}_t(dx) dt
&\leq c\int_B\int_{\overline{\cO}_i} e^{-rt}  (W^*_i)'\left(c v^{h}_i(t) - x\right)\, m^{i,*}_t(dx) dt.
\end{align*}

Since \( v^{h}_i(t) \to 0 \) as \( h \to 0 \) for all \( t \in B \), and given that \( (W^*_i)' \) has the same support as \( W^*_i \), meaning it vanishes on the negative real line, it follows that
\[
\lim_{h\to 0} (W^*_i)'\left(c v^{h}_i(t) - x\right) = 0, \quad \text{for all } (t,x) \in B\times\overline{\cO}_i.
\]
Furthermore, since \( (W^*_i)' \) is increasing, we have the upper bound
\[
(W^*_i)'\left(c v^{h}_i(t) - x\right) \leq (W^*_i)'\left(c v^{h}_i(t)\right) \leq  (W^*_i)'\left(c\right),
\]
where we used \( v^{h}_i(t) \leq 1 \) for all \( t \in B \) and uniformly over \( h \in (0,1) \).

Applying the dominated convergence theorem, we conclude that
\[
\sum^N_{j=1} \int_B \int_{\overline{\cO}_j}  e^{-rt} \frac{\partial}{\partial P_i} \overline{\Pi}_j(\mathbf{v}^h(t), x) \, m^{j,*}_t(dx) dt = o(1).
\]

Plugging this bound into Equation \eqref{eq:L_derivative}, we obtain  
\begin{align*}
\int_B\frac{\delta}{\delta P_i(t)} \mathcal{L}(\mathbf{v}^h, \mathbf{m}^*,\boldsymbol{\mu}^*)\,dt 
= o(1) - \int_B e^{-r t} D_i(t, v^{h}_i(t)) \, dt.
\end{align*}  

By Assumption \ref{ass:demand_uniform},  
\[
\lim_{h\to 0} D_i(t, v^{h}_i(t)) = \infty, \quad \text{uniformly for } t \in B.
\]
Thus, by uniform convergence, for sufficiently small \( h \), we have  
\[
\int_B\frac{\delta}{\delta P_i(t)} \mathcal{L}(\mathbf{v}^h, \mathbf{m}^*,\boldsymbol{\mu}^*)\,dt < 0.
\]
This, combined with Equation \eqref{eq:mean_value}, implies  
\[
\mathcal{L}(\mathbf{P}^h,\mathbf{m}^*,\boldsymbol{\mu}^*) < \mathcal{L}(\mathbf{P}^*, \mathbf{m}^*,\boldsymbol{\mu}^*),
\]
contradicting the assumption that \( \mathbf{P}^* \) minimizes \( \mathcal{L} \). We thus conclude that \( \mathbf{P}^* \in \cP^+_q \), as required.

This concludes the proof of the existence of a linear programming MFG Nash equilibrium. The uniqueness of the corresponding price system (up to a set of measure zero) follows directly from the strict convexity of $\cL(\mathbf{P},\mathbf{m},\boldsymbol{\mu})$ with respect to $\mathbf{P}$.
\end{proof}

\begin{remark}[Connection to Mean-Field Type Control Problems]
As in the static model (see Remark~\ref{rem:welfare}), the equilibrium resulting from decentralized, non-cooperative decision-making in a mean-field game (right-hand side of~\eqref{eq:MFG_minimax}) coincides with the allocation that a social planner would implement to maximize the aggregate surplus of the economy (left-hand side of~\eqref{eq:MFG_minimax}). 

Because of the mean-field structure of the model, the social planner formulation corresponds to a mean-field type control (MFC) problem, in which the planner optimally controls the distribution of agents to maximize a welfare functional and achieve Pareto efficient outcomes (see \cite[Ch.~4]{Bensoussan2013}).
\end{remark}

\section{Numerical illustration}\label{sec:4}

In this section, we present numerical simulations based on CES production functions, CIR dynamics for labor cost processes, and power-type functional forms for demand and labor cost. We consider two stylized models.

We first analyze a single-sector economy consisting solely of a brown sector, which employs labor and emissions as inputs with equal CES share parameters. This setting allows us to study the effects of a carbon price on (i) capacity evolution, (ii) substitution dynamics — specifically, how emissions are replaced by labor in output production — and (iii) carbon cost pass-through, i.e., the extent to which the carbon price is reflected in final prices passed on to consumers (see \cite{neuhoff2019carbon}).

We then turn to a multi-sector economy composed of three sectors: Brown, Green, and Manufacturing. In this case, the CES input share parameters are set to reflect an input-output structure where Manufacturing requires both Brown and Green goods as inputs, and both Brown and Green sectors consume a fraction of Manufacturing output in their production processes. This setup enables us to study the propagation of carbon pricing through the value chain.

To isolate the effects of substitution and propagation, we fix the parameters of the CIR labor cost process uniformly across all sectors (see Remark~\ref{rem:CIR}), setting \( \alpha = 0.1 \), \( \theta = 22.5 \), and \( \sigma = 8 \).

We further assume zero liquidation cost at default, allowing us to focus on sector capacities, as well as the densities of the killed labor cost processes. 

The linear MFG Nash equilibrium is computed using the fictitious play algorithm proposed by \cite{aid2021entry} and \cite{dumitrescu2023linear}, combined with explicit expressions for optimal input choices derived in Propositions~\ref{prop:max_inst_profit} and~\ref{ces.ex}. For each best response in Problem~\eqref{prob:MFG_profit}, we discretize the Fokker-Planck inequality using an implicit scheme, and evaluate the optimization functional over discretized measures. The resulting optimization problem is solved using the Gurobi library (\url{www.gurobi.com}).

\subsection{Example 1: Impact of carbon pricing on a single brown sector}

In the single-sector model, we consider a brown sector that requires 0.5 units of labor and 0.5 units of emissions to produce one unit of output, with a CES substitution parameter \( \rho = 0.5 \), corresponding to a moderately high elasticity of substitution. The carbon price increases deterministically from \$1 to \$30 over the period 2025--2030.

Figure~\ref{1sector.capacity} (left panel) displays the evolution of the sector's total capacity under two scenarios: with and without carbon pricing. The right panel shows the corresponding evolution of the density of labor costs among active firms. The introduction of a carbon price clearly reduces capacity, which declines significantly faster than in the no-tax scenario. However, due to the deterministic nature of the policy, firms are able to anticipate the shock and gradually exit the market according to their individual labor costs, rather than triggering a sudden wave of defaults. The associated density plot confirms that higher-cost firms exit first, leading to a progressive concentration of the active population around lower labor cost levels.

The smoothness of this exit dynamic is enabled by the elasticity of substitution, which allows firms to partially adjust their input mix in response to rising emission costs, thereby spreading market exits over time. Figure~\ref{1sector.substitution} further illustrates the adjustment mechanisms triggered by the carbon tax. The left panel shows the substitution between emissions and labor: as the carbon price rises, firms decarbonize by reducing their emission intensity and shifting toward labor. Since the elasticity of substitution is sufficiently high, near-complete decarbonization is achieved with a moderate final carbon tax level, while avoiding widespread distress in the sector.

The right panel shows the pass-through effect of carbon costs. The tax is partially transmitted to output prices, which rise more than the emission cost per unit of output, but remain well below the theoretical price level that would prevail in the absence of substitution. In this sense, consumers bear part of the burden of carbon pricing, but the substitution channel substantially mitigates the total price impact.

\begin{figure}
\centerline{\includegraphics[width=0.5\textwidth]{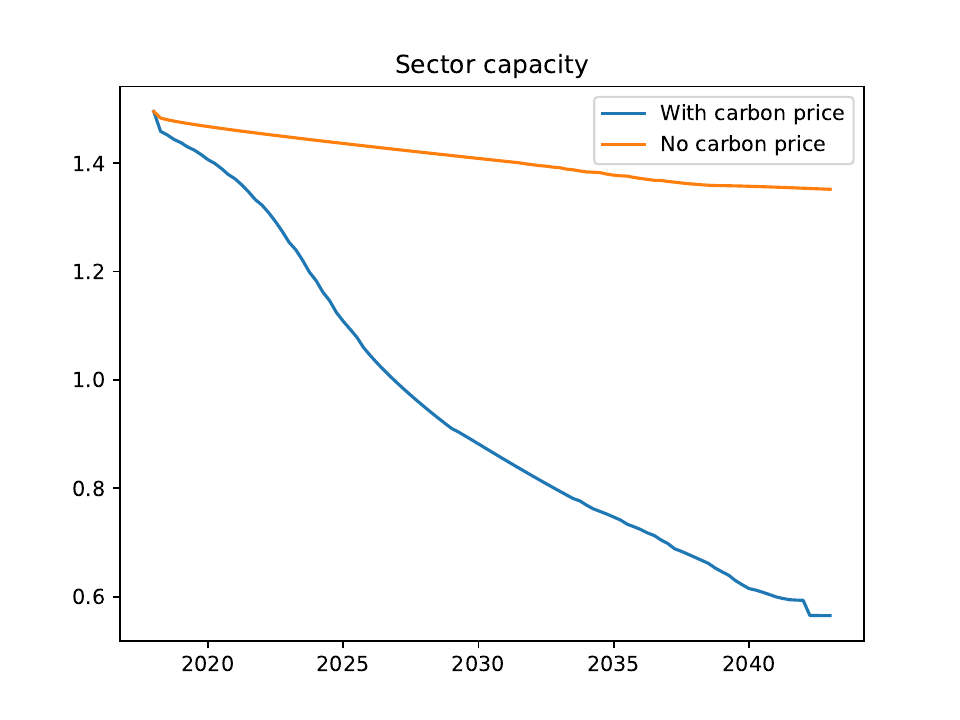}\includegraphics[width=0.5\textwidth]{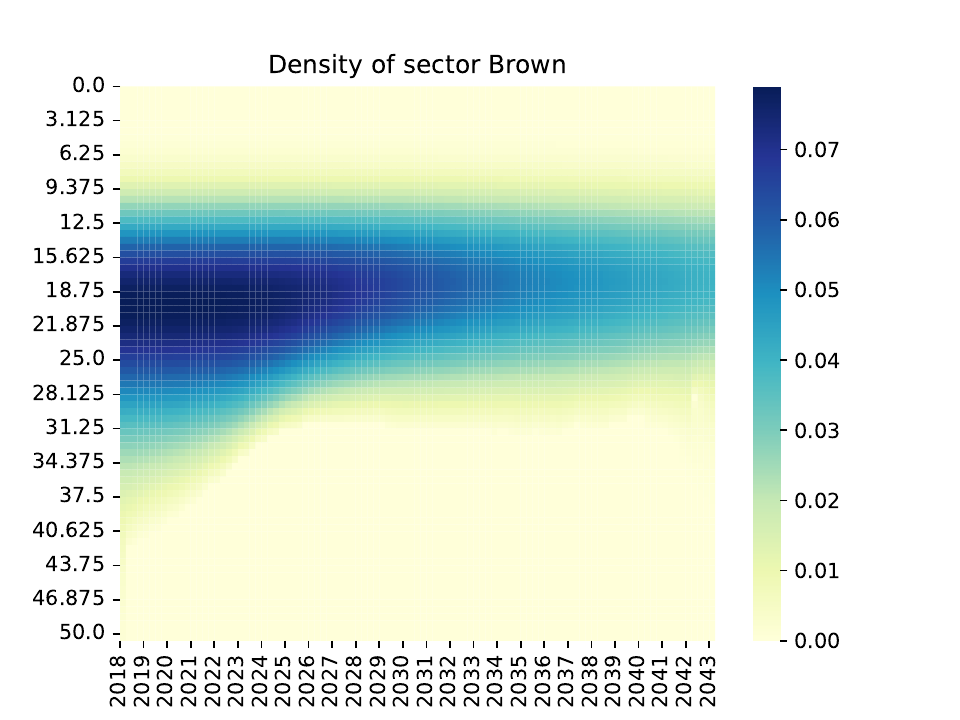}}
\caption{Left: Sector capacity with and without carbon price. Right: Labor cost density under carbon pricing.}
\label{1sector.capacity}
\end{figure}

\begin{figure}
  \centerline{\includegraphics[width=\textwidth]{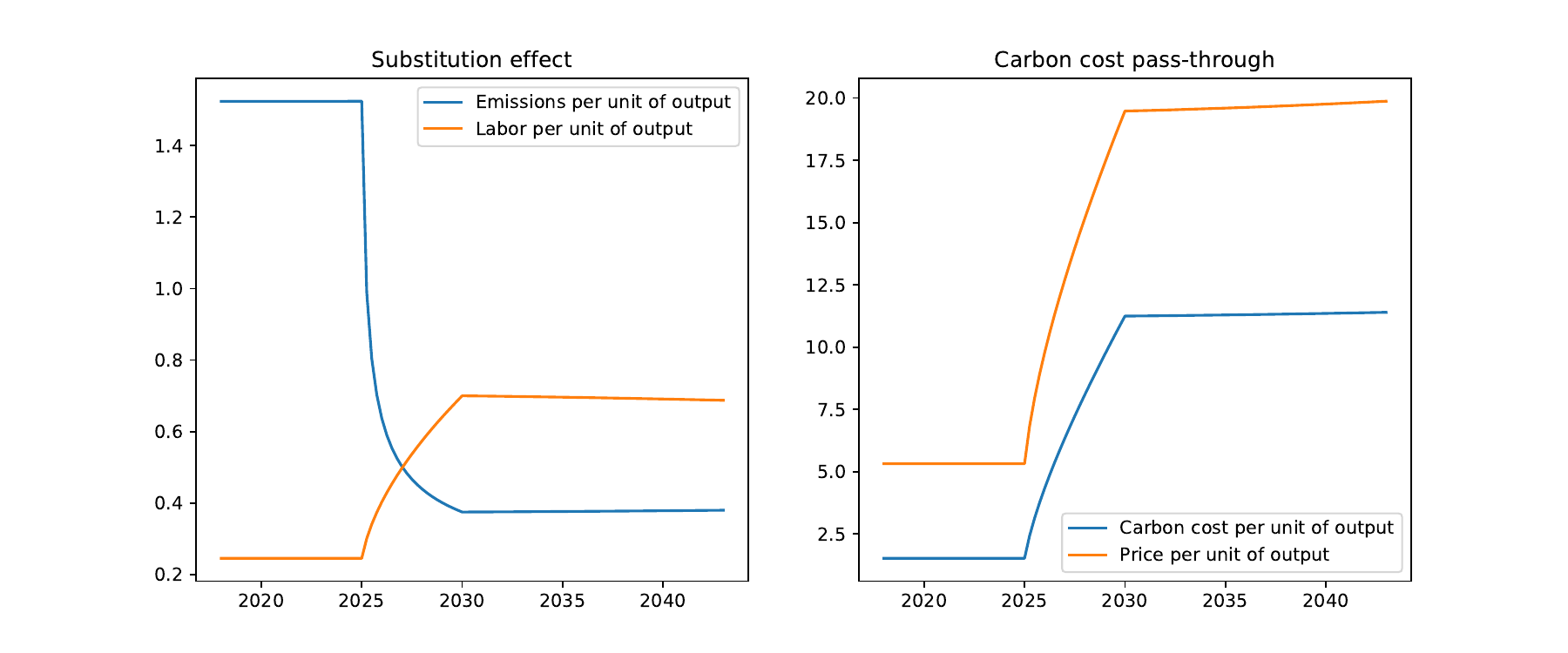}}
\caption{Left: Substitution between emissions and labor. Right: Carbon cost pass-through.}
\label{1sector.substitution}
\end{figure}

\subsection{Example 2: Substitution in a multi-sector economy}

In the three-sector economy, we consider the following input-output structure:   
\begin{itemize}
    \item The Brown sector uses 0.5 units of emissions, 0.4 units of labor, and 0.1 units of Manufacturing output, with a high substitution parameter $\rho_B = 1.5$, reflecting rigid fossil-based technologies.
    \item The Green sector uses 0.01 units of emissions, 0.89 units of labor, and 0.1 units of Manufacturing output, with a moderately high substitution parameter $\rho_G = 0.7$.
    \item The Manufacturing sector uses 0.35 units of Brown output, 0.35 units of Green output, 0.05 units of emissions, and 0.25 units of labor, with a low substitution parameter $\rho_M = 0.1$, allowing substantial input flexibility.
\end{itemize}

We also examine the impact of carbon pricing on final demand across sectors. Each sector faces a power demand function as in Example~\ref{ex:power_demad}, with no exogenous supply. We set $a_B = 100$ and $\epsilon_B = 2$ for the Brown sector, modeling high initial demand for fossil-based technologies and strong sensitivity to price increases. This structure captures the idea that Brown goods are initially dominant but highly exposed to carbon pricing. 

For the Green and Manufacturing sectors, we set $\epsilon_G = \epsilon_M = 0.5$ to reflect inelastic demand: for Green, due to transition policies such as subsidies, and for Manufacturing, due to its central role in the value chain. The scale parameters are set to $a_G = 1$ for Green, reflecting a small initial demand, and $a_M = 10$ for Manufacturing, indicating intermediate baseline demand.

We test the model under a more stringent regulatory scenario, with the carbon price rising from \$1 to \$200 over a ten-year window, from 2025 to 2035.

Figure~\ref{3sectors.emissions} shows the evolution of carbon emissions and labor across the three sectors. Emissions drop sharply in all sectors following the carbon price shock, approaching zero in the Green and Manufacturing sectors. In contrast, emissions in the Brown sector remain strictly positive, reflecting its low elasticity of substitution, which limits the ability to fully decarbonize. Labor in the Brown sector also declines, though less than emissions, indicating that, even under technological rigidity, a substitution effect remains observable, with firms replacing costly emissions with labor to the extent permitted by the production function.

In contrast, both the Green and Manufacturing sectors experience a reallocation of employment rather than a contraction. Their higher substitution elasticities enable them to offset reduced emissions by increasing labor input, thus sustaining production under the carbon constraint. This highlights how decarbonization can shift labor across sectors rather than destroy it, depending on the flexibility of production technologies.

Figure~\ref{3sectors.production} displays the evolution of sectoral production (left panel) and final demand (right panel). All sectors experience a decline in output, though to varying degrees. The Brown sector sees a substantial drop, as it reduces both emissions and labor with limited substitution flexibility. The Green sector experiences only a mild contraction: it successfully replaces emissions with labor, though its dependence on Manufacturing inputs limits a full offset. Manufacturing output declines to a level between that of the Green and Brown sectors, as it partially compensates for the reduction in Brown inputs by increasing reliance on Green goods and labor.

A similar pattern appears in final demand: the Brown sector experiences a near-complete collapse as its highly elastic consumers curtail purchases, whereas demand for Green goods remains almost unchanged and ultimately exceeds that of the Brown sector. 

\begin{figure}[t!]
\centerline{\includegraphics[width=0.9\textwidth]{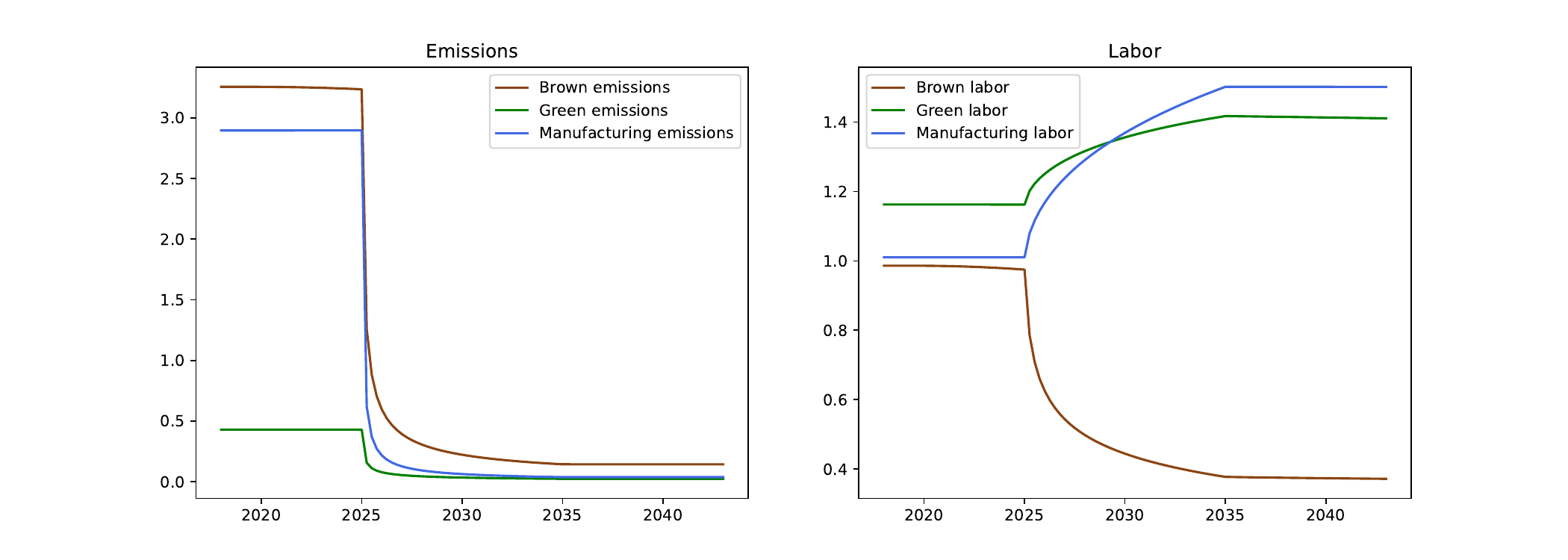}}
\caption{Emissions (left) and labor (right) in the three sectors.}
\label{3sectors.emissions}
\end{figure}

\begin{figure}
\centerline{\includegraphics[width=0.9\textwidth]{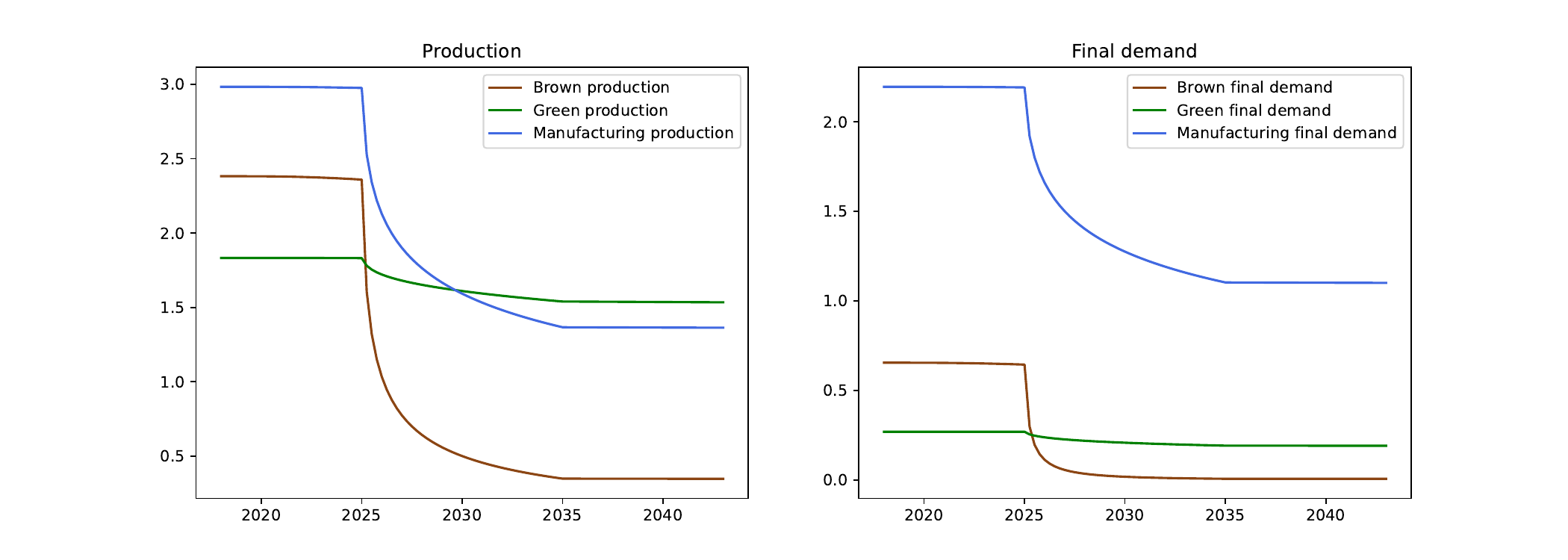}}
\caption{Production (left) and final demand (right) in the three sectors.}
\label{3sectors.production}
\end{figure}

\begin{figure}
\centerline{\includegraphics[width=0.9\textwidth]{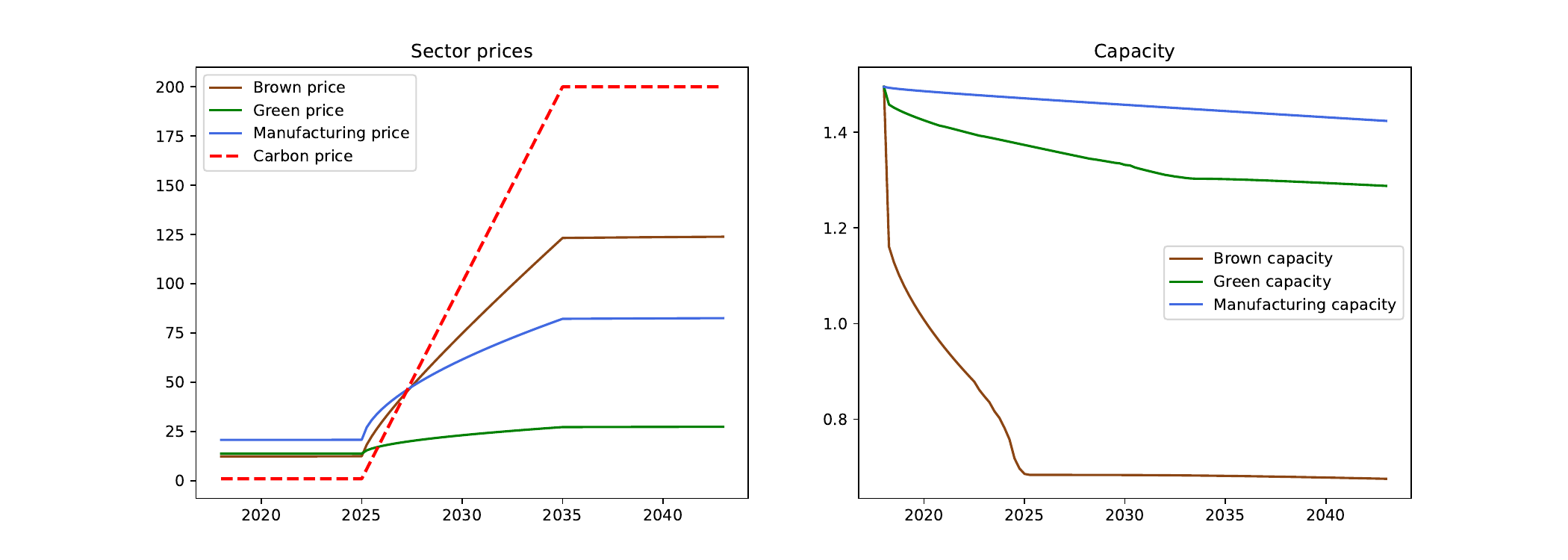}}
\caption{Prices (left) and capacity (right) in the three sectors.}
\label{3sectors.capacity}
\end{figure}

\begin{figure}
\centerline{\includegraphics[width=0.3\textwidth]{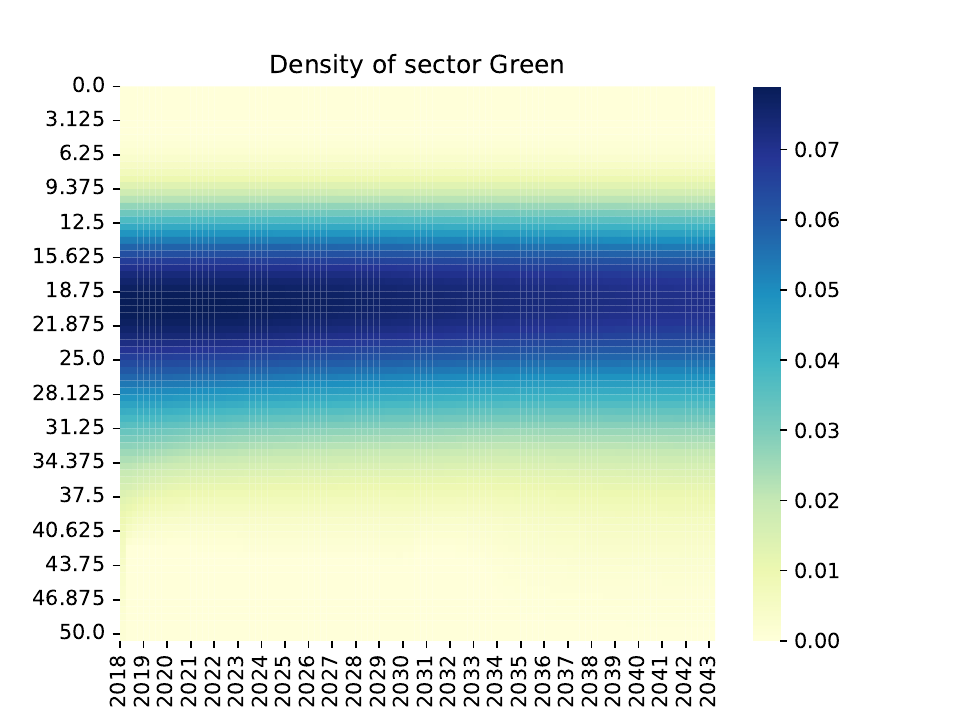}\includegraphics[width=0.3\textwidth]{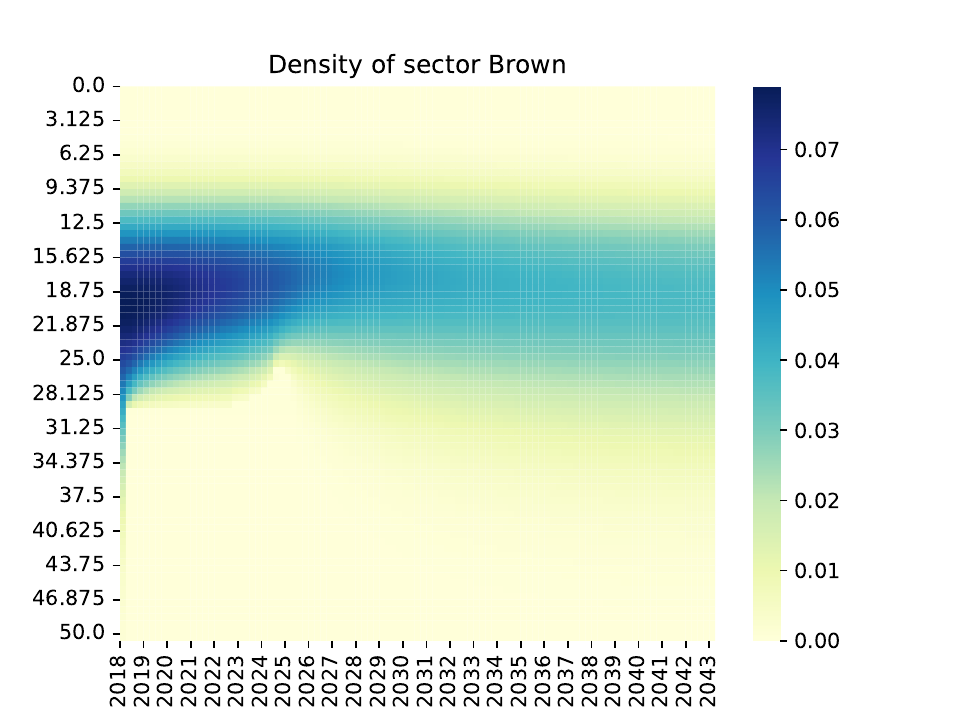}\includegraphics[width=0.3\textwidth]{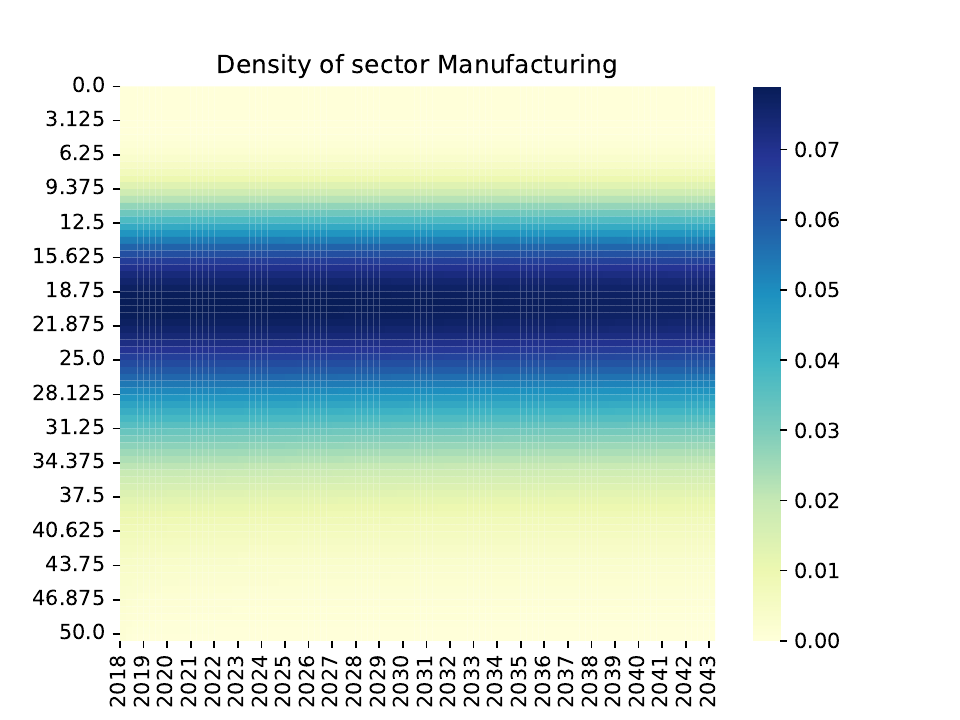}}
\caption{Labor cost densities in the three sectors.}
\label{3sectors.density}
\end{figure}

Finally, Figure~\ref{3sectors.capacity} displays sectoral prices (left panel) and capacity dynamics (right panel), while Figure~\ref{3sectors.density} shows the evolution of labor cost densities. All sectors respond to the carbon price shock by raising their prices, with the Brown sector exhibiting the largest increase.

As in the single-sector model, we observe an anticipation effect in the capacity dynamics of the Brown sector. However, due to its low elasticity of substitution, firms with higher labor costs are forced to exit the market even before the carbon price begins to rise, leading to a sharp initial drop in capacity. This wave of early exits triggers a spillover effect that partially affects the Green sector, which, despite emitting at a much lower rate, remains indirectly exposed through its reliance on Manufacturing inputs. The Manufacturing sector’s capacity remains nearly constant throughout, as it emits very little and increasingly substitutes toward Green input and labor after the shock. Its high elasticity of substitution acts as an effective buffer against the rise in carbon costs.

Notably, when comparing the Brown sector across the single-sector and three-sector models, we observe that long-term capacity stabilizes at a similar level, despite the Brown sector facing a sixfold higher carbon tax and a lower elasticity of substitution in the three-sector setting.

This outcome is explained by differences in the structure of demand. In the single-sector model, Brown firms sell exclusively to final consumers, whose demand is highly elastic. As carbon pricing raises output prices, final demand collapses, leading to substantial revenue losses and widespread firm exit. In contrast, in the three-sector economy, Brown firms are partially sustained by intermediate demand from the Manufacturing sector, even as final demand vanishes. The inter-sector linkage thus not only propagates the carbon cost shock, but also mitigates firm exit by maintaining an additional source of demand.





\section{Conclusions}\label{sec:conclusion}

In this paper, we introduced a novel multi-sector equilibrium model based on mean-field game theory to analyze the propagation of carbon price shocks through the productive economy and along the value chain. The framework accommodates a broad class of realistic production technologies with substitutability between inputs, which plays a key role in mitigating carbon price shocks and shaping decarbonization pathways. Our theoretical analysis not only ensures the existence and uniqueness of equilibria, but also provides a transparent economic interpretation consistent with welfare economics and the centralized social planner problem.

While the model effectively captures agent heterogeneity via idiosyncratic shocks in local input costs (such as labor), it does not account for aggregate sources of uncertainty. A natural extension would be to introduce a common noise component into the mean-field interactions. Other possible generalizations include incorporating stochastic carbon price dynamics and carbon taxation under partial information, for instance when firms respond to uncertain policy trajectories or technological transitions.

From a numerical standpoint, the model could be applied to real-world data, leveraging input–output tables to calibrate production structures and market price data to estimate demand systems.
Such extensions would broaden the model’s empirical relevance and support its application to policy analysis in the context of industrial decarbonization.

\section*{Acknowledgements}

This research was partially supported by Electricité de France. We thank Shiqi Liu, research assistant at CREST, for conducting preliminary computations related to this project. S.P. and T.S. gratefully acknowledge  support from the Carl Zeiss Foundation and  from the FDMAI, the Freiburg Center for Data, Modelling and AI.


\appendix
\section{Proofs of results from Section~\ref{sec:1}}\label{app}

\subsection{Proof of Proposition \ref{prop:profit_max}}
Since $F_i$ is concave and upper semi-continuous, and $W_i$ is convex and lower semi-continuous, the profit function $\Pi_i$ is concave and upper semi-continuous. Additionally, if $F_i$ is strictly concave and $W_i$ is strictly convex, then $\Pi_i$ is strictly concave. In this case, any maximizer, if it exists, is unique.

To establish the existence of an optimal solution, we verify that the profit function satisfies the following coercivity condition: 
\begin{equation}\label{eq:coercivity}
\Pi_i(\mathbf{P},P_E,\mathbf{x}_i)\to-\infty,\quad\text{as}\quad\norm{\mathbf{x}_i}\to\infty, 
\end{equation}
for any price vector $\mathbf{P}\in\R^N_{> 0}$ and any carbon price $P^E\in\R_{> 0}$.

Consider first the case where the production function $F_i$ satisfies Assumption \ref{ass:prod_function} (a). Then, for any $\varepsilon>0$, there exist a constant $C>0$ such that: 
\[
F_i(\mathbf{x}_i)\leq\varepsilon\norm{\mathbf{x}_i}, \quad\text{for all $\mathbf{x}_i$ with $\norm{\mathbf{x}_i}\geq C$}.
\]
Since $W_i$ is strictly increasing and convex, we have:
\[
\liminf_{L\to\infty }\frac{W_i(L)}{L} > 0.
\]
Therefore, there exists a constant $\varepsilon'>0$ (which may depend on both $\mathbf{P}$ and $P^E$) and a constant $C'>0$ such that:
\[
C(\mathbf{P},P_E,\mathbf{x}_i)= \sum^N_{j=1}P_jq_{ij}+P_E E_i+W_i(L_i) \geq \varepsilon'\norm{\mathbf{x}_i}, \quad\text{whenever}\quad \norm{\mathbf{x}_i}\geq C'.
\]
Combining these bounds, the profit function satisfies:
\[
\Pi_i(\mathbf{P},P_E,\mathbf{x}_i)\leq (\varepsilon P_i-\varepsilon')\norm{\mathbf{x}_i},\quad\text{for all $\mathbf{x}_i$ such that $\norm{\mathbf{x}_i}\geq \max\{C,C'\}$}.
\]
Since $\varepsilon$ is arbitrary and $P_i>0$, choosing $\varepsilon\leq\frac{\varepsilon'}{P_i}$ ensures that the right-hand side diverges to $-\infty$ as $\norm{\mathbf{x}_i}\to\infty$, for any $\mathbf{P}\in\R^N_{> 0}$ and any $P_E\in\R_{> 0}$. 

Now, consider the case where the production function $F_i$ satisfies Assumption \ref{ass:prod_function} (b). This allows us to bound the profit function as:
\[
\Pi_i(\mathbf{P},P^E,\mathbf{x}_i)\leq c P_i L_i-\sum^N_{j=1}P_jq_{ij}-P_E E_i-W_i(L_i).
\]
By Assumption \ref{ass:wage_function2}, the right-hand side tends to $-\infty$ as $\norm{\mathbf{x}_i}\to\infty$, for any $\mathbf{P}\in\R^N_{> 0}$ and $P^E\in\R_{> 0}$.

Since the coercivity condition \eqref{eq:coercivity} holds in both cases, the maximization in Problem~\eqref{prob:profit_max} can be restricted to a compact subset of \( \R^{N+2}_{\geq 0} \), and the existence of a maximizer then follows from the extreme value theorem for upper semi-continuous functions.

\subsection{Proof of Proposition \ref{prop:CRS}}
Exploiting the homogeneity property of the production function, the maximized profit can be rewritten as:
\begin{align*}
\overline \Pi_i(\mathbf{P},P_E) 
&= \max_{L_i} \bigg\{ L_i \max_{\mathbf q_i, E_i} \Big[P_i F_i\Big(\frac{\mathbf q_i}{L_i}, \frac{E_i}{L_i},1\Big) - \sum_{j=1}^N P_j \frac{q_{ij}}{L_i} - P_E \frac{E_i}{L_i} \Big] - W_i(L_i) \bigg\} \\
&= \max_{L_i} \bigg\{ L_i \max_{\tilde{\mathbf q}_i, \tilde E_i} \Big[P_i F_i(\tilde{\mathbf q}_i, \tilde E_i,1) - \sum_{j=1}^N P_j \tilde q_{ij} - P_E \tilde E_i \Big] - W_i(L_i) \bigg\} \\
&= W^*_i\bigg( \max_{\tilde{\mathbf q}_i, \tilde E_i} \Big[P_i F_i(\tilde{\mathbf q}_i, \tilde E_i,1) - \sum_{j=1}^N P_j \tilde q_{ij} - P_E \tilde E_i \Big] \bigg) \\
&= W^*_i\big( \widetilde \Pi_i(\mathbf{P},P_E) \big),
\end{align*}
where \(W^*_i\) is the convex conjugate of \(W_i\) introduced in~\eqref{convex conjugate W}, and \(\widetilde \Pi_i(\mathbf{P},P_E)\) is defined as in~\eqref{pitilde}. In the second line, we introduced the normalized variables
\[
\tilde q_{ij} \coloneqq \frac{q_{ij}}{L_i}, \quad \tilde E_i \coloneqq \frac{E_i}{L_i}.
\]

Assuming \(W_i\) is strictly convex and differentiable, the first-order condition for the optimal labor input \(L^*_i\) yields
\[
W'_i(L^*_i) = \widetilde \Pi_i(\mathbf{P},P_E).
\]
Since \(W_i\) is strictly convex, its derivative \(W'_i\) is strictly increasing and hence invertible. Solving for \(L^*_i\) gives
\[
L^*_i = (W'_i)^{-1} \big( \widetilde \Pi_i(\mathbf{P},P_E) \big).
\]

Finally, given the solution \(\tilde{\mathbf{q}}^*_i, \tilde{E}^*_i\) to problem~\eqref{pitilde}, the optimal input quantities and emissions are recovered via the normalized variables:
\[
q^*_{ij} = \tilde q^*_{ij} L^*_i, \quad j=1,\dots,N, \quad E^*_i = \tilde E^*_i L^*_i.
\]
This concludes the proof.

\subsection{Proof of Proposition \ref{ces.ex}}

Let \( F_i \) denote the CES production function with constant returns to scale and positive substitution parameter \( \rho_i > 0 \), given by:
\[
F_i(\mathbf{q}_i, E_i, L_i) = A_i \left( \sum_{j=1}^N \alpha_{ij} q_{ij}^{-\rho_i} + \alpha_{iE} E_i^{-\rho_i} + \alpha_{iL} L_i^{-\rho_i} \right)^{-\frac{1}{\rho_i}}.
\]
Differentiating the normalized profit with respect to the normalized inputs and emissions yields the first-order conditions:
\begin{align*}
&P_i A_i^{-\rho_i}F_i(\tilde{\mathbf{q}}_i, \tilde E_i,1)^{\rho_i+1} \alpha_{ij} \tilde q_{ij}^{-\rho_i-1}-P_j=0,\quad \quad j=1,\dots,N,\\
&P_i A_i^{-\rho_i} F_i(\tilde{\mathbf{q}}_i, \tilde E_i,1)^{\rho_i+1} \alpha_{iE} \tilde E_i^{-\rho_i-1}-P_E=0.
\end{align*}

The left-hand side of each equation is a decreasing function of the corresponding input (i.e., \( \tilde q_{ij} \) or \( \tilde E_i \)), holding all other parameters fixed. Moreover, for \( \rho_i > 0 \), the following asymptotic limits hold:
\begin{align*}
&\lim_{\tilde q_{ij} \to 0} A_i^{-\rho_i} F_i(\tilde{\mathbf{q}}_i, \tilde E_i,1)^{\rho_i+1} \alpha_{ij} \tilde q_{ij}^{-\rho_i-1} = \alpha_{ij}^{-1/\rho_i} A_i,\\
&\lim_{\tilde q_{ij} \to \infty} A_i^{-\rho_i} F_i(\tilde{\mathbf{q}}_i, \tilde E_i,1)^{\rho_i+1} \alpha_{ij} \tilde q_{ij}^{-\rho_i-1} = 0,\\
&\lim_{\tilde E_i \to 0} A_i^{-\rho_i} F_i(\tilde{\mathbf{q}}_i, \tilde E_i,1)^{\rho_i+1} \alpha_{iE} \tilde E_i^{-\rho_i-1} = \alpha_{iE}^{-1/\rho_i} A_i,\\
&\lim_{\tilde E_i \to \infty} A_i^{-\rho_i} F_i(\tilde{\mathbf{q}}_i, \tilde E_i,1)^{\rho_i+1} \alpha_{iE} \tilde E_i^{-\rho_i-1} = 0.
\end{align*}

It follows that the optimality conditions are:
\begin{align*}
P_i A_i^{-\rho_i} F_i(\tilde{\mathbf{q}}_i, \tilde E_i,1)^{\rho_i+1} \alpha_{ij} \tilde q_{ij}^{-\rho_i-1} = P_j \quad \text{if} \quad \frac{P_j}{P_i} \leq \alpha_{ij}^{-1/\rho_i} A_i, \quad \text{and} \quad \tilde q_{ij} = 0 \text{ otherwise},
\end{align*}
for \( j=1,\dots,N \), and
\begin{align*}
P_i A_i^{-\rho_i} F_i(\tilde{\mathbf{q}}_i, \tilde E_i,1)^{\rho_i+1} \alpha_{iE} \tilde E_i^{-\rho_i-1} = P_E \quad \text{if} \quad \frac{P_E}{P_i} \leq \alpha_{iE}^{-1/\rho_i} A_i, \quad \text{and} \quad \tilde E_i = 0 \text{ otherwise}.
\end{align*}

It is easy to verify that
\begin{align*}
&\lim_{\tilde q_{ij} \to 0^+} F_i(\tilde{\mathbf{q}}_i, \tilde E_i,1) = 0 \quad \text{for some } j=1,\dots,N, \\
&\lim_{\tilde E_i \to 0^+} F_i(\tilde{\mathbf{q}}_i, \tilde E_i,1) = 0,
\end{align*}
so that if \( \tilde q_{ij} = 0 \) for any \( j \), or \( \tilde E_i = 0 \), then the total output is zero. In such a scenario, all inputs vanish: \( \tilde q_{ij} = 0 \) for all \( j \), and \( \tilde E_i = 0 \).

The optimality conditions can therefore be summarized as:
\[
\text{if} \quad \frac{P_j}{P_i} \leq \alpha_{ij}^{-1/\rho_i} A_i \quad \text{for all } j=1,\dots,N, \quad \text{and} \quad \frac{P_E}{P_i} \leq \alpha_{iE}^{-1/\rho_i} A_i,
\]
then:
\[
P_i A_i^{-\rho_i} F_i(\tilde{\mathbf{q}}_i, \tilde E_i,1)^{\rho_i+1} \alpha_{ij} \tilde q_{ij}^{-\rho_i-1} = P_j, \quad \text{for all } j=1,\dots,N,
\]
and:
\[
P_i A_i^{-\rho_i} F_i(\tilde{\mathbf{q}}_i, \tilde E_i,1)^{\rho_i+1} \alpha_{iE} \tilde E_i^{-\rho_i-1} = P_E.
\]
Otherwise, we have \( \tilde q_{ij} = 0 \) for all \( j \), and \( \tilde E_i = 0 \).

To obtain the explicit expressions, suppose the above condition is satisfied, and denote by \( \tilde Q^*_i := F_i(\tilde{\mathbf{q}}^*_i, \tilde E^*_i,1) \) the optimal production normalized by labor. Then the optimal inputs are:
\begin{align}\label{optquant}
\tilde q^*_{ij} = \tilde Q^*_i \left( \frac{P_j A_i^{\rho_i}}{P_i \alpha_{ij}} \right)^{-\frac{1}{\rho_i+1}}, \quad
\tilde E^*_i = \tilde Q^*_i \left( \frac{P_E A_i^{\rho_i}}{P_i \alpha_{iE}} \right)^{-\frac{1}{\rho_i+1}}.
\end{align}

Substituting these into \( F_i(\tilde{\mathbf{q}}^*_i, \tilde E^*_i,1) \), we obtain the optimal normalized output:
\begin{align}\label{optprod}
\tilde Q^*_i = A_i \alpha_{iL}^{-\frac{1}{\rho_i}} \left( 1 - \sum_{j=1}^N \left( \frac{P_j \alpha_{ij}^{1/\rho_i}}{A_i P_i} \right)^{\frac{\rho_i}{\rho_i+1}} - \left( \frac{P_E \alpha_{iE}^{1/\rho_i}}{A_i P_i} \right)^{\frac{\rho_i}{\rho_i+1}} \right)^{\frac{1}{\rho_i}}.
\end{align}
This expression is valid whenever the term inside the parentheses is non-negative; otherwise, the optimal production is zero.

Thus, the optimal production and inputs are given by \eqref{optprod} and \eqref{optquant}, respectively, if and only if
\[
\sum_{j=1}^N \left( \frac{P_j \alpha_{ij}^{1/\rho_i}}{A_i P_i} \right)^{\frac{\rho_i}{\rho_i+1}} + \left( \frac{P_E \alpha_{iE}^{1/\rho_i}}{A_i P_i} \right)^{\frac{\rho_i}{\rho_i+1}} \leq 1;
\]
otherwise, both production and input levels are zero.

Under the condition of strictly positive production, the normalized profit function \( \widetilde \Pi_i \) is given by:
\begin{align*}
\widetilde \Pi_i(\mathbf{P},P_E)
&= \tilde Q^*_i \left[ P_i - \sum_{j=1}^N P_j \left( \frac{P_j A_i^{\rho_i}}{P_i \alpha_{ij}} \right)^{-\frac{1}{\rho_i+1}} - P_E \left( \frac{P_E A_i^{\rho_i}}{P_i \alpha_{iE}} \right)^{-\frac{1}{\rho_i+1}} \right] \\
&= P_i A_i \alpha_{iL}^{-1/\rho_i} \left( 1 - \sum_{j=1}^N \left( \frac{P_j \alpha_{ij}^{1/\rho_i}}{A_i P_i} \right)^{\frac{\rho_i}{\rho_i+1}} - \left( \frac{P_E \alpha_{iE}^{1/\rho_i}}{A_i P_i} \right)^{\frac{\rho_i}{\rho_i+1}} \right)^{\frac{\rho_i+1}{\rho_i}}.
\end{align*}

This concludes the proof.

\section{Proof of Lemma \ref{lem:saddle_point}}\label{app:1}
Let $(\mathbf{P}^*,\cQ^*,\mathbf{E}^*,\mathbf{L}^*)\in\R^N_{> 0}\times \R^{N(N+2)}_{\geq 0}$ be a saddle point of the Lagrangian defined by \eqref{eq:lagrangian}. From the left inequality in Definition \ref{def:saddle_point}, it follows that, for each $i=1,\dots,N$, the tuple $(\mathbf{q}^*_i,E^*_i,L^*_i)$ solves the profit maximization problem \eqref{prob:profit_max}.

Next, from the right inequality in Definition \ref{def:saddle_point}, we have, for all $\mathbf{P} \in \mathbb R^N_{\geq 0}$, 
$$
\cL(\mathbf{P}^*,\cQ^*,\mathbf{E}^*,\mathbf{L}^*)\leq\cL(\mathbf{P},\cQ^*,\mathbf{E}^*,\mathbf{L}^*).
$$
Since, by Assumption \ref{ass:demand}, the Lagrangian is differentiable with respect to $\mathbf{P}$ on $\R^N_{>0}$, it follows that:
$$
\frac{\partial\cL}{\partial P_i}(\mathbf{P}^*,\cQ^*,\mathbf{E}^*,\mathbf{L}^*) = F_i(\mathbf{q}^*_i, E^*_i,L^*_i)-\sum^N_{j=1}q^*_{ji}-D_i(P^*_i) = 0,\quad i=1,\dots,N,
$$
which corresponds to the clearing condition \eqref{eq:clearing}.

Conversely, assume $(\mathbf{P}^*,\cQ^*,\mathbf{E}^*,\mathbf{L}^*)\in\R^N_{> 0}\times \R^{N(N+2)}_{\geq 0}$ is a CE. By condition (ii) of Definition \ref{def:Nash_equilibrium}, the left inequality in Definition \ref{def:saddle_point} is satisfied. The right inequality follows from the differentiability and convexity of the Lagrangian with respect to $\mathbf{P}\in\R^N_{>0}$, combined with the fact that its derivative at the saddle point is zero due to the market clearing condition.

\section{Proofs of technical lemmas from Section~\ref{sec:3}}\label{app:2}

\subsection{Proof of Lemma \ref{lem:compactness}}

Since $\cV_p(\overline{\cO}_i) \times \cM_p([0,T]\times\overline{\cO}_i)$, endowed with the product topology $\tau_p \otimes \tau_p$, is metrizable, it suffices to show that $\cR^i(m_0)$ is sequentially compact. To this end, consider a sequence $(m^n, \mu^n)_{n \geq 1} \subset \cR^i(m_0)$. Define the test function
\[
u_k(t,x) \coloneqq (T+1 - t)\, \phi_k(x), \quad k \geq 1,
\]
where
\[
\phi_k(x) \coloneqq \left(1 + x^q f\left(\frac{x}{k}\right)\right) \Ind_{x \leq k} + \left(1 + k^3 f(1)\right) \Ind_{x > k}, \quad q \leq p',
\]
and the function $f$ is given by
\[
f(z) \coloneqq \frac{q^2 + q}{q^2 + 3q + 2} z^2 - \frac{2q^2 + 4q}{q^2 + 3q + 2} z + 1.
\]
It is straightforward to verify that $f$ is smooth and positive on $[0,1]$, with $f(0) = 1$, ensuring that $\phi_k \in \cC^2_b(\cO_i)$ and is strictly positive. In particular, the first and second derivatives of $\phi_k$ are given by:
\begin{align*}
\phi'_k(x) &= C_q \left(1 - \frac{x}{k}\right)^2 x^{q-1} \Ind_{x \leq k}, \\
\phi''_k(x) &= C_q \left[ -\frac{2}{k}\left(1 - \frac{x}{k}\right) x^{q-1} + (q - 1) \left(1 - \frac{x}{k}\right)^2 x^{q-2} \right] \Ind_{x \leq k},
\end{align*}
where 
\[
C_q = \frac{q(q+1)(q+2)}{q^2 + 3q + 2}.
\]

Substituting this test function into Equation \eqref{eq:weak_FP}, and noting that the left-hand side is non-negative, we obtain
\[
\int_{\cO_i} u_k(0,x)\, m_0(dx) + \int_0^T \int_{\overline\cO_i}
\left( \frac{\partial u_k}{\partial t} + \crL_i u_k \right)(t,x)\, m^n_t(dx)dt \geq 0.
\]
This implies the estimate:
\begin{align*}
\int_0^T \int_{\overline\cO_i}
\phi_k(x)\, m^n_t(dx)dt &\leq (T+1) \int_{\cO_i} \phi_k(x)\, m_0(dx) \\
& + \int_0^T \int_{\overline\cO_i} (T+1 - t) \left[ \alpha_i(t,x)\, \phi'_k(x) + \frac{\sigma_i^2(t,x)}{2}\, \phi''_k(x) \right]\,m^n_t(dx) dt.
\end{align*}

Taking $q = 0$ so that $\phi_k \equiv 1$ yields the bound:
\[
\int_0^T \int_{\overline\cO_i}\,m^n_t(dx) dt \leq (T+1) \int_{\cO_i} m_0(dx).
\]

Next, under Assumption~\ref{ass:SDE}, and using the explicit expressions for $\phi'_k$ and $\phi''_k$, we can bound the generator term as:
\begin{align*}
\left| \alpha_i(t,x)\, \phi'_k(x) \right| + \frac{1}{2} \sigma_i^2(t,x) \left| \phi''_k(x) \right| 
\leq C_q \left( x^{q-1} + \frac{1 + k^\beta}{k} x^{q-1} + (q-1) \frac{1 + k^\beta}{k} x^{q-2} \right) \Ind_{x \leq k}.
\end{align*}
Substituting this bound into the inequality above, and recalling that $\beta \in [0,1]$, we obtain:
\begin{align*}
\int_0^T \int_{\overline\cO_i}
\phi_k(x)\, m^n_t(dx)dt &\leq (T+1) \int_{\cO_i} \phi_k(x)\, m_0(dx) \\
&\quad + C \int_0^T \int_{\overline\cO_i}
\left( x^{q-1} + (q-1)x^{q-2} \right) \Ind_{x \leq k}\, m^n_t(dx)dt,
\end{align*}
for some constant $C > 0$ independent of $n$.

Applying the monotone convergence theorem as $k \to \infty$, we deduce:
\begin{align*}
\int_0^T \int_{\overline\cO_i}
(1 + x^q)\, m^n_t(dx)dt &\leq (T+1) \int_{\cO_i} (1 + x^q)\, m_0(dx) \\
&\quad + C \int_0^T \int_{\overline\cO_i}
\left( x^{q-1} + (q-1) x^{q-2} \right)\, m^n_t(dx)dt.
\end{align*}
Applying this estimate iteratively for $q = 1, \dots, \lceil p' \rceil$, we conclude that
\[
\sup_{n \geq 1} \int_0^T \int_{\overline\cO_i}
(1 + x^{p'})\, m^n_t(dx)dt < \infty,
\]
which proves that the sequence $(m^n)_{n \geq 1}$ is relatively compact in the topology $\tau_p$; see \cite[Corollary A.4]{dumitrescu2023linear}.

A similar argument using the same test function $u_k$ yields the relative compactness of the sequence $(\mu^n)_{n \geq 1}$.

Finally, the argument in \cite[Theorem 2.13]{Dumitrescu2021LP} shows that the weak limit of a convergent subsequence $(m^n, \mu^n)$ satisfies Equation~\eqref{eq:weak_FP}. Since $p \geq 1$ and the generator $\crL_i$ has at most linear growth, we conclude that the limit pair $(m,\mu)$ belongs to $\cR^i(m_0)$. This establishes the compactness of $\cR^i(m_0)$.

\subsection{Proof of Lemma \ref{lem:MFG_saddle_point}}

Let $(\mathbf{P}^*,\mathbf{m}^*,\boldsymbol{\mu}^*) \in \cP^+_q\times\cR(\mathbf{m}_0)$ be a saddle point of the Lagrangian defined in \eqref{eq:MFG_lagrangian}. By the definition of a saddle point, we have  
\[
\cL(\mathbf{P}^*,\mathbf{m},\boldsymbol{\mu}) \leq \cL(\mathbf{P}^*,\mathbf{m}^*,\boldsymbol{\mu}^*) \quad \text{for all } (\mathbf{m},\boldsymbol{\mu})\in\cV^N_p\times\cV^N_p.
\]
This directly establishes Property (iii) of Definition \ref{def:MFG_equilibrium}. Moreover, Property (ii) follows immediately from the assumptions of the lemma.  

To establish Property (i), we recall that the Minimax equality \eqref{eq:MFG_minimax} ensures  
\[
\cL(\mathbf{P}^*,\mathbf{m}^*,\boldsymbol{\mu}^*) \leq \cL(\mathbf{P},\mathbf{m}^*,\boldsymbol{\mu}^*) \quad \text{for all } \mathbf{P} \in L^q.
\]
As a consequence, we obtain that for $t$-almost every $t\in[0,T]$ and for all $\mathbf{P} \in L^q$,  
\begin{align*}
\sum_{j=1}^N\int_{\overline\cO_j} \overline\Pi_j(\mathbf{P}^*(t),x)\,m^{j,*}_t(dx)-\Delta_j(t,P^{*}_j(t)) 
\leq \sum_{j=1}^N \int_{\overline\cO_j} \overline\Pi_j(\mathbf{P}(t),x)\,m^{j,*}_t(dx)-\Delta_j(t,P_j(t)).
\end{align*}
For notational simplicity, we suppress the explicit dependence of $\overline\Pi_j$ on $P_E(t)$.

Since $\Delta_j(t,\cdot)$ is differentiable on $\mathbb{R}_{>0}$ for all $t\in [0,T]$, as ensured by Assumption \ref{ass:demand_uniform}, the first-order optimality condition gives  
\begin{equation}\label{eq:FOC_1}
\frac{\partial}{\partial P_i} \sum_{j=1}^N \int_{\overline\cO_j} \overline\Pi_j(\mathbf{P}^*(t),x)\,m^{j,*}_t(dx) - D_i(t,P^*_i(t)) = 0.
\end{equation}
To justify the interchange of the derivative and the integral in the first term on the left-hand side of \eqref{eq:FOC_1}, we express the partial derivative as a limit, and applying the linearity of the integral, we obtain:
\begin{align}\label{eq:FOC_limit}
\frac{\partial}{\partial P_i} \int_{\overline\cO_j} \overline\Pi_j(\mathbf{P}^*(t),x)\,m^{j,*}_t(dx)  
= \lim_{h\to 0} \int_{\overline\cO_j} \frac{\overline\Pi_j(\mathbf{P}^*(t) + h\mathbf{e}_i, x) - \overline\Pi_j(\mathbf{P}^*(t), x)}{h} \,m^{j,*}_t(dx),
\end{align}  
where \( \mathbf{e}_i \) denotes the unit vector in \( \mathbb{R}^N \) corresponding to the \( i \)-th component.  

Using the explicit expression for the maximized instantaneous profit from Proposition \ref{prop:max_inst_profit}, we obtain:  
\begin{align*}
\left| \overline\Pi_j(\mathbf{P}^*(t) + h\mathbf{e}_i, x) - \overline\Pi_j(\mathbf{P}^*(t), x) \right|  
&= \left| W^*_j\big(\widetilde\Pi_j(\mathbf{P}^*(t) + h\mathbf{e}_i) - x\big) - W^*_j\big(\widetilde\Pi_j(\mathbf{P}^*(t)) - x\big) \right| \\  
&\leq \left| W^*_j\big(\widetilde\Pi_j(\mathbf{P}^*(t)) + h\mathbf{e}_i\big) - W^*_j\big(\widetilde\Pi_j(\mathbf{P}^*(t))\big) \right|,
\end{align*}  
where the inequality follows from the convexity of \( W^*_j \).  

To establish an integrable bound for the expression inside the integral on the right-hand side of Equation \eqref{eq:FOC_limit}, it suffices to show that the limit 
\begin{equation}\label{conv_limit}
\lim_{h\to 0} \frac{W^*_j\big(\widetilde\Pi_j(\mathbf{P}^*(t)) + h\mathbf{e}_i\big) - W^*_j\big(\widetilde\Pi_j(\mathbf{P}^*(t))\big)}{h}
\end{equation}
exists and is finite.

By the strict concavity of \( F_i \), there exists a unique pair \( (\tilde{\mathbf{q}}^*_i, \tilde{E}^*_i) \) solving the optimization problem \eqref{pitilde} for each \( i=1,\dots,N \). Consequently, by Danskin's theorem (see \cite[Proposition B.25]{bertsekas1999nonlinear}), the function \( \widetilde \Pi_j(\mathbf{P}) \) is differentiable with respect to \( P_i \).  

Furthermore, since \( W_i \) is strictly convex and differentiable, its convex conjugate \( W^*_i \) is also differentiable in the interior of its domain (see \cite[Theorem 26.3]{rockafellar1970convex}). This ensures that the limit in \eqref{conv_limit} exists and is finite.  

Next, applying Lebesgue's dominated convergence theorem, we can interchange the integral and the limit on the right-hand side of Equation \eqref{eq:FOC_limit}, thereby validating differentiation under the integral sign.  

As a result, the first-order condition \eqref{eq:FOC_1} can be rewritten as:  
\begin{align}\label{eq:FOC_2}
\sum_{j=1}^N\int_{\overline\cO_j} \frac{\partial}{\partial P_i}\overline\Pi_j(\mathbf{P}^*(t),x)\,m^{j,*}_t(dx) - D_i(t,P^{*}_i(t)) = 0.
\end{align}

Applying Danskin's theorem once again, we obtain an explicit expression for the derivative of \( \overline\Pi_j \) with respect to \( P_i \):
\[
\frac{\partial}{\partial P_i} \overline\Pi_j(\mathbf{P}^*(t), x) = \Ind_{i=j} F_j(\mathbf{q}^{*}_j(\mathbf{P}^*(t), x), E^{*}_j(\mathbf{P}^*(t), x), L^{*}_j(\mathbf{P}^*(t), x)) - q^*_{ji}(\mathbf{P}^*(t), x),
\]
where \( (\mathbf{q}^{*}_j, E^{*}_j, L^{*}_j)(\mathbf{P}^*(t), x) \) denotes the optimal input allocation solving the instantaneous profit maximization problem \eqref{eq:max_inst_profit} at price level \( \mathbf{P}^*(t) \), for a given \( (t, x) \in [0, T] \times \cO_j \).

Substituting this expression into Equation \eqref{eq:FOC_2} and summing over all terms, we recover the market-clearing conditions. This confirms that \( \mathbf{P}^* \) satisfies Property (i) of Definition \ref{def:MFG_equilibrium}, thereby completing the proof.

\end{document}